%
%
%
%
%
%
\RequirePackage{fix-cm}
\documentclass[numbook, envcountsect, envcountsame, envcountreset, runningheads, smallextended]{svjour3}
\smartqed  
\usepackage{graphicx}

%
%
\usepackage[misc]{ifsym}

\usepackage{amsfonts}
\usepackage{moreverb}
\usepackage{bm}
\usepackage{mathtools}
\usepackage{enumitem}
\usepackage{diagbox} 

\usepackage{algorithm}  
\usepackage{algcompatible}
\usepackage{subfigure}

\usepackage{xcolor}
\usepackage{soul}

\newcommand\BibTeX{{\rmfamily B\kern-.05em \textsc{i\kern-.025em b}\kern-.08em
T\kern-.1667em\lower.7ex\hbox{E}\kern-.125emX}}

\journalname{Finance Stoch}
\begin{document}

\title{Dynamic Mean-Variance Asset Allocation in General Incomplete Markets}
\subtitle{A Nonlocal BSDE-based Feedback Control Approach}

\titlerunning{Dynamic MV Asset Allocation in General Incomplete Markets}

\author{Qian Lei         \and
        Chi Seng Pun    \and
        Jingxiang Tang
}


\institute{Qian Lei \at
              School of Physical and Mathematical Sciences, Nanyang Technological University, Singapore \\
              \email{qian.lei@ntu.edu.sg}
           \and
           Chi Seng Pun (\textrm{\Letter}) \at
              School of Physical and Mathematical Sciences, Nanyang Technological University, Singapore \\
              \email{cspun@ntu.edu.sg}
           \and
           Jingxiang Tang \at
              School of Physical and Mathematical Sciences, Nanyang Technological University, Singapore \\
              \email{tang0506@e.ntu.edu.sg}
}

\date{ }

\maketitle

\begin{abstract}
This paper studies dynamic mean-variance (MV) asset allocation problems in general incomplete markets. Besides of the conventional MV objective on portfolio's terminal wealth, our framework can accommodate running MV objectives with general (non-exponential) discounting factors while in general, any time-dependent preferences. We attempt the problem with a game-theoretic framework while decompose the equilibrium control policies into two parts: the first part is a myopic strategy characterized by a linear Volterra integral equation of the second kind and the second part reveals the hedging demand governed by a system of nonlocal backward stochastic differential equations. We manage to establish the well-posedness of the solutions to the two aforementioned equations in tailored Bananch spaces by the fixed-point theorem. It allows us to devise a numerical scheme for solving for the equilibrium control policy with guarantee and to conclude that the dynamic (equilibrium) mean-variance policy in general settings is well-defined. Our probabilistic approach allows us to consider a board range of stochastic factor models, such as the Chan--Karolyi--Longstaff--Sanders (CKLS) model. For which, we verify all technical assumptions and provide a sound numerical scheme. Numerical examples are provided to illustrate our framework.


\keywords{Stochastic Controls \and TIC \and Mean-Variance Asset Allocation \and Incomplete Markets \and Nonlocal Backward Stochastic Differential Equations }
\subclass{49L20 \and 60H10 \and 91G10 \and 45D05}
\end{abstract}

\section{Introduction} \label{sec:intro}


Markowitz's mean-variance (MV) analysis, introduced by \cite{Markowitz1952}, pioneered the field of modern portfolio theory by providing a quantitative framework for portfolio optimization. The central idea is to construct an efficient portfolio by striking the balance between its reward and risk that are typically measured by expected return and variance (or standard deviation), respectively.
In its simplest form, MV analysis was first studied within a single-period framework. As financial markets and investment opportunities grew in complexity, the need for a multi-period even continuous-time approach to MV analysis became apparent. However, the MV criteria induce time inconsistency (TIC) in searching for optimal dynamic controls, which violates Bellman's principle of optimality (BPO) and is thus not in favor of dynamic programming; see \cite{Basak2010} for the enlightenment.
Despite the inherent challenges of dynamic MV analysis, \cite{Li2000OptDynMV} and \cite{leippold2004geometric} addressed the issues within a discrete-time framework while \cite{zhou2000continuous} and \cite{lim2002} explored the continuous-time counterparts with embedding and geometric approaches.
However, their solutions are pre-commitment investment policies fixed at an initial time, where the investor may find it optimal to consider an alternative policy as time evolves.
A critical ingredient of rational decision-making is time consistency, a principle articulated by Strotz in his pioneering work \cite{Strotz1955} on time-consistent planning. He asserted that an investor should choose ``\textit{the best plan among those that he will actually follow}." Note that the time-consistent planning was aligned with BPO when TIC incentives are non-existing. Beyond the classical MV framework, researchers have examined the TIC stochastic control problem from a game-theoretic perspective, which introduced subgame perfect equilibrium (SPE) policies in line with \cite{Strotz1955}'s time-consistent planning. 


The TIC preferences are common in behavioral economics, while the desire of the SPE policies align with the approaches in the related literature on the reference-dependent models that lead to changing preferences and tastes; see \cite{Peleg1973,Harris2001}. While the BPO is violated with TIC incentives, the SPE formulation essentially aims to revive the BPO for the control sequence by introducing adjustment terms to offset the TIC sources or constraining the future control choices. \cite{Bjoerk2014,Bjoerk2017} build an unified analytical framework for addressing general TIC problems by treating a TIC control problem as an intrapersonal game with players indexed by time carrying different objectives. The Nash equilibrium of the intrapersonal game, i.e., SPE, yields a time-consistent policy for the agent. Further, \cite{Bjoerk2017} derive a coupled system of Hamilton–Jacobi–Bellman (HJB) equations to characterize the equilibrium solution, including the SPE policy and the equilibrium value function. Although the framework of \cite{Bjoerk2017} nests the MV criterion and gives the same equations to be solved in \cite{Basak2010}, the latter has claimed more fruitful mathematical properties of the equilibrium solution while \cite{Bjoerk2017} has still remained some open problems about the well-posedness (see its last discussion section). This paper is focused on the dynamic MV asset allocation in incomplete markets while we use a probabilistic approach that allow us to consider a broader range of stochastic factor models and assert the well-posedness of the equilibrium solution.

\cite{Basak2010} indeed attempted the problem with stochastic volatility (SV) models (incomplete markets), but their characterization of the equilibrium solution and investigation of its existence and uniqueness require the model parameters to be sufficiently well-behaved. Specifically, explicit expressions of moments of various orders need to be available and sufficiently smooth to satisfy the analytical claims in their study. It largely limits the model choices for the SV or stochastic factors. As a result, (the only) existing studies on both TIC problem and incomplete markets (\cite{Basak2010,Dai2021}) were confined to the Cox--Ingersoll--Ross (CIR) model and the Ornstein--Uhlenbeck (OU) process for the SV that possess smooth and explicit density functions. While the CIR and OU models provide scarce analytical tractability, their specific assumptions may not always align with real-world scenarios.

In this paper, we adopt a probabilistic approach to redefine equilibrium portfolio choice, demonstrating its well-defined nature through the theory of backward stochastic differential equations (BSDEs). Consequently, our theoretical results, when applied to SV models in incomplete markets, significantly alleviate the constraints imposed on model parameters as observed in the existing literature. In contrast to previous approaches, we no longer necessitate density functions or explicit calculations of moments. Moreover, we only require the moments to be continuous, eliminating the need for differentiability. This advance enables our analytic framework to explore dynamic MV problems across a much broader range of SV models driven by the Chan–Karolyi–Longstaff–Sanders (CKLS) process, whose adjustable parameter provides a more adaptable fit to the observed stochastic factors. For the CKLS model in its most general form, parameterized by $p\in[0,1]$, neither the Fourier transform nor the density is known except for some special cases of the CIR ($p=\frac{1}{2}$) and OU ($p=0$) processes and thus the existing approaches are infeasible. Hence, our MV analysis with CKLS-based SV models enhances modeling flexibility and applicability.

It should be noted that under the considerations of open-loop controls, \cite{Hu2012,Hu2017} have shown the well-posedness of the equilibrium solution to TIC linear-quadratic control problems. \cite{Yan2019} subsequently adapts the open-loop control framework to MV problem for some SV models with well-posedness results. Our formulation in line with (extended) dynamic programming, however, belongs to the closed-loop (feedback) control framework, whose well-posedness is much more involved. We note some existing studies, such as \cite{Huang2017,Han2021}, consider closed-loop controls but restrictions could be found in their form presumed and the way they are perturbed. We get rid of these constraints as our optimal controls are first expressed in terms of the unknown (value) function, whose underlying BSDEs are studied directly. Our attempt is more aligned with the works of \cite{Basak2010,Bjoerk2014,Bjoerk2017,Dai2021} (solving for global optimality from the HJB equations) rather than the open- or closed-loop control frameworks of \cite{Hu2012,Huang2017,Yan2019,Han2021} (solving for local optimality with adjoint processes). Our primary goal is to extend the works of \cite{Basak2010,Dai2021} to more general SV models by leveraging a nonlocal BSDE approach. Noting the connection between BSDEs and parabolic partial differential equations (PDEs), we also refer the readers to some recent works on the well-posedness of the solutions to nonlocal PDEs (or equilibrium HJB equations), e.g., \cite{Wei2017,Lei2023,Lei2024}. The results of the aforementioned works cannot be directly applied in our case, though they are mathematically inspirational.

The main contributions of this paper is threefold. First, we extend the MV analysis in continuous time to incorporate with running MV objectives (on top of the MV criterion on the terminal portfolio wealth) and a broader class of stochastic factor models, while the risk aversion coefficients in both running and terminal MV objectives could be time-varying. Second, in such a setting, we manage to characterize the time-consistent MV investment policy by the summation of a myopic term and a hedging demand term, where these two terms are represented via the solutions to a linear Volterra integral equation of the second kind and a flow of BSDEs (or a nonlocal BSDE), respectively. Our framework possesses two major merits: (I) the equilibrium policy could be well-defined without explicit expressions and differentiability of the moments of the stochastic factor models (which are currently desired in the literature); (II) the generator of the nonlocal BSDE is generalized by relaxing the conventional Lipschitz condition to its stochastic counterpart. Third, for the aforementioned two equations, we prove the well-posedness of their solutions with well-designed Banach spaces and contractive mappings, and subsequently we develop a numerical scheme for solving them. 

The rest of this paper is organized as follows. Section \ref{sec:tic} first introduces the fundamental game-theoretic concepts of TIC stochastic control problems needed to analyze the dynamic MV asset allocation under our consideration, including equilibrium policies, equilibrium value functions, and the extended HJB system. Next, to better examine the MV equilibrium portfolio strategy, we initially limit our study to a complete market in Section \ref{sec:mvcm}, in which we derive an explicit representation for the myopic component of the portfolio choice. Moving to an incomplete market setting, Section \ref{sec:mvicm} proposes a probabilistic formulation of the equilibrium investment policy, providing a fully analytical yet simple characterization of the dynamic MV portfolios in a general incomplete-market economy. In Section \ref{subsec:wellposedness}, we establish the existence and uniqueness of the solutions to a new type of nonlocal BSDE system, providing soild evidence for the well-defined-ness of the equilibrium MV policy. The proofs of the results in Section \ref{sec:mv} (main theoretical results) are postponed to the appendix. The theoretical results are further applied to different stochastic investment opportunity sets offered by a broad range of SV models in Section \ref{subsec:apps}. Section \ref{sec:Num} presents a numerical scheme for solving those well-posed equations and illustrate the results with a few examples. Finally, we conclude this paper and further discuss some potential extensions, including MV problems with state-dependent risk aversion in complete and incomplete markets.

\section{Game-Theoretic Formulation for TIC Stochastic Control Problems} \label{sec:tic}
In this section, we first introduce the dynamic MV framework under our consideration. For the simplicity of notational use and the generality, we instead formulate a relatively general TIC stochastic control problems, which extends the framework of \cite{Bjoerk2017} in some aspects while nests all our MV analyses. Then, we define equilibrium policy and equilibrium value function, akin to that of \cite{Bjoerk2014,Bjoerk2017}, for the general problems. Subsequently, we introduce the equilibrum HJB equation that drive the policy and value function. Note that we postpone the specification of the state dynamics to latter context as it would generally not affect our discussion in this section. Though, it is good to know in advance that the controlled state dynamics we consider is of Markovian.

We first introduce the Markowitz MV operator as 
\begin{equation*}
	\mathbb{MV}^{\mathcal{F}_s}[\cdot]:=\rho(s)\mathbb{E}^{\mathcal{F}_s}[\cdot]-\frac{\gamma}{2}\mathbb{V}^{\mathcal{F}_s}[\cdot], \qquad s\in[0,T], 
\end{equation*}
where $\gamma>0$ is a risk-aversion coefficient while $\rho(s)\geq 0$ is any first-order differentiable function, which captures the time-varying tradeoff between conditional expectation $\mathbb{E}^{\mathcal{F}_s}$ (return) and conditional variance $\mathbb{V}^{\mathcal{F}_s}$ (risk) given filtration $\mathcal{F}_s$ at time $s$. By choosing appropriate $\rho$, we could capture the time-varying risk aversion, especially relative to the investment horizon. Our main context would treat it as a deterministic function in time while they could depend on the current state $X^u_s$ to accommodate with state-dependent risk aversion studied in \cite{Bjoerk2014a}.
The general form of MV objectives, which allows for running MV evaluations, is given by 
\begin{equation} \label{MVfunctional}
	J(s,y;u)=\int^T_s\eta(s,\tau)\mathbb{MV}^{\mathcal{F}_s}[\psi(X^u_\tau)]d\tau+\mu(s,T)\mathbb{MV}^{\mathcal{F}_s}[\psi(X^u_T)], \quad s\in[0,T], ~~ y\in\mathbb{R}^d, 
\end{equation}
where $\eta(s,\tau)$ and $\mu(s,T)$ represent the general discounting factor and the $d$-dimensional underlying dynamics $(X^u_\tau)_{\tau\in[s,T]}$ is a forward stochastic differential equation (FSDE) of the form
\begin{equation} \label{Statedynamics}
	dX^u_\tau=b(\tau,X^u_\tau,u_\tau)d\tau+\sigma(\tau,X^u_\tau,u_\tau)d\mathcal{W}_\tau,  
\end{equation}
driven by a $k$-dimensional Brownian motion $\{\mathcal{W}_\cdot\}$ and controlled by a $n$-dimensional control process $\{u_\cdot\}$. Under some suitable conditions, both \eqref{MVfunctional} and \eqref{Statedynamics} are well-defined. The two general discounting factors indicate how much future cash flows or benefits are worth in today's terms. Typically, two commonly used discounting factors are presented here.

\begin{itemize}
	\item Exponential discounting: $\eta(s,\cdot)=\mu(s,\cdot)=e^{-\lambda(\cdot-s)}$ with some constant $\lambda\geq 0$. In this case, the TIC sources comes only from the MV operators. 
	\item Non-exponential discounting: as long as either $\eta$ or $\mu$ is of non-exponential form, such as hyperbolic disounting \cite{Laibson1997}, it causes the optimal control problem TIC. Even when the exponential functions $\eta$ and $\mu$ adopt different $\eta(s, \tau)=e^{-\lambda_1(\tau-s)}$ and $\mu(s, T)=e^{-\lambda_2(T-s)}$,
    the corresponding problem would have become TIC as long as $\lambda_1 \neq \lambda_2$; see \cite{marin2010consumption,marin2011non}.  
\end{itemize}
The objective \eqref{MVfunctional} is economically significant. It encourages the allocation policy to be dynamically efficient throughout the whole horizon, in which we can incorporate with time-varying risk aversion and (present-biased) discounting factor. It is clear that \eqref{MVfunctional} degenerates to the usual MV criterion in \cite{Markowitz1952,Basak2010} if $\rho(s)=1$, $\eta(s,\tau)=0$, and $\mu(s,T)=1$. The optimal asset allocation problem is to search for a dynamically optimal policy that can optimze the objective \eqref{MVfunctional} in some sense (e.g., precommitment or consistent-planning). 

The objective of \eqref{MVfunctional} leads its stochastic control problem TIC in three ways: First, the MV criteria with $\mathbb{V}^{\mathcal{F}_s}[X^u_\cdot]=\mathbb{E}^{\mathcal{F}_s}[(X^u_\cdot)^2]-(\mathbb{E}^{\mathcal{F}_s}[X^u_\cdot])^2$ is nonlinear in the conditional expectation. Second, the real-time adjustment of risk and return tradeoff with $\rho$ potentially depending on the initial time $s$. Third, the use of general discounting factors, where $\eta$ and $\mu$ can vary with time $s$. 

The problem of finding a time-consistent investment policy with TIC objectives resembles an intrapersonal game, akin to studies on consumer behavior under TIC preferences; see \cite{Peleg1973,Harris2001}. Specifically, the time-consistent investor (planner) preconceives the investment policy of her future selves and optimally reacts to them. Consequently, her investment policy arises from a pure-strategy Nash equilibrium in this intrapersonal game.

Before we formalize the discussion above, we embed the stochastic control problem with \eqref{MVfunctional} into the problems with a larger class of objectives for notational simplicity. Let us now consider a relatively general TIC stochastic control problem with an objective at time $s$ and state value $y$ given by
\begin{equation} \label{GeneralTIC}
\begin{split}
    J(s,y;u) =& \mathbb{E}^{\mathcal{F}_s}\left[\int^T_s C\big(s,\tau,y,X^u_\tau,u(\tau,X^u_\tau)\big)d\tau + F\big(s,y,X^u_T\big)\right]\\
    &+\int^T_s H\big(s,\tau,y,\mathbb{E}^{\mathcal{F}_s}\big[\psi\big(X^u_\tau\big)\big]\big)d\tau + G\big(s,y,\mathbb{E}^{\mathcal{F}_s}\big[\psi\big(X^u_T\big)\big]\big),
\end{split}
\end{equation}
where the mapping $u:[0,T]\times\mathbb{R}^d\to\mathcal{U}$ with $\mathcal{U}\subseteq\mathbb{R}^m$ is an admissible control law in the sense that both the underlying dynamics $\{X^u_\cdot\}$ of \eqref{Statedynamics} and the TIC objective \eqref{GeneralTIC} are well-defined. In the above, $C$, $F$, $H$, $G$, and $\psi$ are all deterministic functions of suitable dimensions. It is clear that the MV objective functional \eqref{MVfunctional} can be written in the form of \eqref{GeneralTIC} and the degeneration will be presented in the next section. The two $\psi$ of \eqref{GeneralTIC} could be different, however, we make them identical to simplify our formulation. It is noteworthy that the TIC originates from two main sources. The first source is that the entire cost functional \eqref{GeneralTIC} is contingent on the initial spatial-temporal reference point $(s,y)$ of the sub-problem being addressed. For instance, a rational investor's preferences and tastes may be dynamically refined based on the time-to-maturity and the current wealth status. The second TIC source is the nonlinearity in the arguments of conditional expectations in $H$ and $G$, which violate the BPO.


The fundamental idea behind treating the TIC stochastic control problem as an intrapersonal game is to consider a game with a continuum of players (selves) over the time interval $[0,T]$. Each player $s$, $s\in [0,T]$ is guided by the TIC objective at time $s$ and selects an optimal strategy by assuming that future policies are preconceived and reacting optimally to them. Moreover, it is also assumed that each player $s$, $s\in[0,T]$ makes the optimal decision and only affects the controlled system over a minimal time elapse period from time $s$. An equilibrium policy of the TIC control problem is defined as the set of strategies chosen by each player, which remains an equilibrium in any subgame of this game, i.e., a subgame-perfect equilibrium (SPE). The following gives a mathematical definition of the above and a similar definition can be found in \cite{Bjoerk2014,Bjoerk2017}.

\begin{definition}[Equilibrium policy and Equilibrium value function] \label{Def:EquiPolicy}
Consider an admissible control policy $\widehat{u}$ as a candidate equilibrium policy. For an arbitrary admissible control law $u$ and a fixed real number $\Delta s>0$, given an initial point $(s,y)$, we define a perturbed policy $u_{\Delta s}$ by 
\begin{equation}
    u_{\Delta s}(\epsilon,y)={\left\{\begin{array}{l l}{u(\epsilon,y)}&{{\mathrm{for~}}s\leq \epsilon< s+\Delta s,y\in\mathbb{R}^{d},} \\
    ~ \\ 
    {\widehat{u}(\epsilon,y)}&{{\mathrm{for~}}s+\Delta s\leq \epsilon\leq T,y\in\mathbb{R}^{d}.}\end{array}\right.}
\end{equation}

\noindent If the candidate law $\widehat{u}$ and the perturbed one $u_{\Delta s}$ satisfy the inequality 
\begin{equation} \label{Local optimality} 
			\underset{\Delta s\downarrow 0}{\underline{\lim}}\frac{J(s,y;\widehat{u})-J(s,y;u_{\Delta s})}{\Delta s}\geq 0, 
		\end{equation} 
then we say $\widehat{u}$ is an equilibrium policy and the equilibrium value function is defined by $V(s,y)=J(s,y;\widehat{u})$.   
\end{definition}

\begin{remark}
    Before we move on to the next stage of analysis, it is worth providing some explanatory notes on the concept of equilibrium policy. First of all, the focus of this paper is not on refining and rigorously formulating TIC control problems. Therefore, we adopt a relatively early and original definition of equilibrium policy here. For a more rigorous discussion of the related TIC issues, existing literature has refined and polished this definition from multiple perspectives. Nevertheless, all subsequent works preserve its fundamental essence: an equilibrium policy must attain local optimality, as delineated by \eqref{Local optimality}. Here, we highlight several recent developments for readers interested in exploring these topics further: (1) a comparison between closed-loop and open-loop controls is discussed in \cite{Wei2017}; (2) \cite{He2021} differentiated between weak and strong equilibria in TIC problems, highlighting the absence of strong equilibrium strategies in a general setting. They introduced a new concept called regular equilibrium, demonstrating its implication for weak equilibrium and providing a sufficient condition for a weak equilibrium strategy to transition into a regular equilibrium.
\end{remark}

The equilibrium strategy and equilibrium value function are defined by a recursive equation, similar to a Bellman equation in the classical time-consistent setting, now interpreted as the optimal response function of each player to the actions of other players. To understand or derive the TIC recursive equation, one should start with a discrete setting as in \cite{Wei2017}. As the time mesh size approaches zero, the recursive equation transitions informally into a system of HJB equations (also known as extended HJB system or equilibrium HJB equation). For a rigorous derivation and analysis with discretization and limiting process, we refer the readers to \cite{Yong2012,Wei2017,He2021,bensoussan2013mean} for their comprehensive insights and discussions and we omit the lengthy but well-studied derivation here. The extended HJB system \eqref{HJBSys} is presented as a definition rather than a formal proposition.

\begin{definition}[Equilibrium HJB system] The extended HJB system of equations for $V(s,y)$, $f(t,s,x,y)$, and $\{g^\tau(s,y)\}_{\tau\in[0,T]}$ is defined as
follows: for $0\leq t \leq s \leq \tau \leq T$ and $x,y\in\mathbb{R}^d$, 
\begin{equation} \label{HJBSys}
    \left\{
    \begin{array}{rcl}
        \sup\limits_{a\in\mathcal{U}}\Big\{\mathbb{A}^a V(s,y)+C(s,s,y,y,a)+H(s,s,y,\psi(y)) && \\
        -\Delta^a_1-\int^T_s\Delta^a_2(\tau)d\tau -\Delta^a_2(T) \Big\}   &= 0,  \\
         
        \mathbb{A}^{\widehat{u}} f(t,s,x,y)+ C(t,s,x,y,\widehat{u}(s,y)) &= 0, \\
        \mathbb{A}^{\widehat{u}} g^\tau(s,y)  &= 0,
    \end{array}
    \right.
\end{equation}
with the terminal conditions: $V(T,y)=f(T,T,y,y)+G(T,y,\psi(y))$, $f(t,T,x,y)=F(t,x,y)$, and $g^\tau(\tau,y)=\psi(y)$. Moreover, $\mathbb{A}^a=\frac{\partial}{\partial s}+b(s,y,a)\frac{\partial}{\partial y}+\frac{1}{2}\sigma^2(s,y,a)\frac{\partial^2}{\partial y^2}$ is the conventional controlled infinitesimal operator, and the adjustment terms $\Delta^a_1$ and $\Delta^a_2(\tau)$ that used to revive the recursion are defined as  
\begin{equation*}
\left\{
\begin{array}{rcl}
    \Delta^a_1 & := & \mathbb{A}^u f(s,s,y,y)-\mathbb{A}^u f(t,s,x,y)|_{t=s,x=y}, \\
    \Delta^a_2(\tau) & := & \mathbb{A}^u H(s,\tau,y,g^\tau(s,y))-\frac{\partial H}{\partial g^\tau}(s,\tau,y,g^\tau(s,y))\cdot\mathbb{A}^u g^\tau(s,y), \quad 0\leq\tau\leq T,
\end{array}
\right. 
\end{equation*} 
where we noted $G(\cdot,\cdot,\cdot)=H(\cdot,T\cdot,\cdot)$. Then, $f$ and $g$ have the following probabilistic representations
\begin{equation} \label{ProbInter}
\begin{split}
    f(t,s,x,y)=&\mathbb{E}^{\mathcal{F}_s}\left[\int^T_s C\big(t,\tau,x,X^u_\tau,u(\tau,X^u_\tau)\big)d\tau + F\big(t,x,X^u_T\big)\right], \\ g^\tau(s,y)=&\mathbb{E}^{\mathcal{F}_s}\big[\psi\big(X^u_\tau\big)\big]
\end{split}
\end{equation} 
Furthermore, the function $\widehat{u}$ realizing the supremum in the $V$-equation of \eqref{HJBSys} is the equilibrium policy, as per Definition \ref{Def:EquiPolicy}, and the corresponding equilibrium value function $V$ is characterized by 
\begin{equation} \label{V}
    V(s,y)=f(s,s,y,y)+\int^T_s H(s,\tau,y,g^\tau(s,y))d\tau+G(s,y,g^T(s,y)).
\end{equation}
\end{definition}

It is obvious that \eqref{HJBSys} is a fully coupled system of PDEs in the sense that we need to simultaneously determinate the unknown functions $V$ , $f$ and $g^\tau$. More specifically, to solve the $V$--equation, one must know $f$ and $g^\tau$. However, these functions are determined by the equilibrium policy $\widehat{u}$, which is, in turn, defined by the sup-part of the $V$ equation. In fact, as noted in \cite{Bjoerk2014}, assuming $f$ and $g$ are both known, then in the $V$ equation, every term except for the $\mathbb{A}^aV$ term can be considered as part of a new objective function's running term. This leads to an interesting conclusion: any TIC problem can be transformed into a classical time-consistent one, even though the cost function in this new problem is mainly of theoretical interest and has little ``practical" application. The functions $f$ and $g$ in \eqref{HJBSys} are precisely the two factors mentioned earlier in Introduction to address TIC. Based on them, $\Delta^a_2$ is designed to mitigate the nonlinearity in conditional expectations, while $\Delta^a_1$ serves to balance the reference-dependent variations in the decision-maker's preferences. Apparently, when the TIC objective \eqref{GeneralTIC} has neither initial dependence nor nonlinearity in conditional expectations, the coupled system \eqref{HJBSys} aligns with the classical HJB equation; see \cite{Bjoerk2014,Bjoerk2017,Yong1999} for more details.

The derivation of the equilibrium HJB system \eqref{HJBSys} alone does not justify its mathematical connection with the TIC stochastic control. We need to resolve two issues: (a) \textbf{Sufficiency}: the solution of \eqref{HJBSys} indeed gives an equilibrium policy and an equilibrium value function; (b) \textbf{Necessity}: every equilibrium policy must maximize the Hamiltonian associated to TIC problem and the corresponding value function solves \eqref{HJBSys}. By similar arguments in \cite{Bjoerk2017}, it is easy to establish the verification theorem (Sufficiency) for the Markovian setting while the non-Markovian setting was attempted in \cite{hernandez2023me}. The Necessity issue is difficult and we refer the readers to its latest progress, such as \cite{Lindensjoe2019,hernandez2023me,He2021,Hamaguchi2021} and a comprehensive literature review of the field in \cite{He2022}.

When it comes to well-posedness of solutions of \eqref{HJBSys}, this is a complex challenge within the context of partial differential equations (PDEs). The complexity is particularly significant when the drift $b$ and volatility $\sigma$ in the state dynamics $\{X^u_\cdot\}$ of \eqref{Statedynamics} both contain control variable $u$, resulting in a full nonlinearity of \eqref{HJBSys}. Additionally, $V$, $f$, and $g$ are tightly coupled together, which presents further analytical difficulties. In the existing literature, \cite{Wei2017,Yong2012} provides an analytical method to examine such an HJB system, while \cite{hernandez2023me,Hamaguchi2020} proposes a corresponding probabilistic approach for analysis. However, these literature could only explore cases with controlled drift while strictly prohibiting any control from influencing the volatility. In fact, only when the controls could or would take effect on the magnitude of uncertainty, the stochastic problems differ from the deterministic ones. The recent works of \cite{Lei2023,lei2023well,Lei2024} break the bottleneck and allow the diffusion to be controllable. They developed new PDE techniques and established the well-posedness of \eqref{HJBSys} over a maximally defined interval, which implied global solvability if a very sharp a-priori estimate is available. Note that another popular approach to analyzing HJB equation (in time-consistent cases) using weak (or viscosity) solutions was not yet translated to the TIC setting, which requires further investigation.

\section{MV Analysis: Equilibrium Policy and Value Function} \label{sec:mv}
Before analyzing the dynamic MV portfolio problem in an incomplete market, we first consider a complete market in Section \ref{sec:mvcm}, which is interesting in its own right. Moreover, when we move to the incomplete market in Section \ref{sec:mvicm}, we could directly call the results from Section \ref{sec:mvcm} and explicitly characterize the hedging demand due to the market incompleteness. The equilibrium policy and equilibrium value function are driven by some differential equations, whose well-posedness is proved in Section \ref{subsec:wellposedness}. The general framework will be then applied to CKLS-type stochastic factor models, under which all technical assumptions of our framework are verified.

We now specify the basic setup of the markets. Let us consider one risky asset (stock/index) and one risk-free asset (bond) available in the market for trading in continuous-time without frictions. The dynamics of the stock/index price $\{S_\cdot\}$, the stochastic factor driving the stock/index price $\{R_\cdot\}$, and the investor's wealth process $\{W_\cdot\}$ are given by   
\begin{equation}  \label{Completedynamics}
\left\{
    \begin{array}{rcl}
    dS_s/S_s & = & \mu(s,R_{s})ds+{\sigma}(s,R_{s})dB_{s}, \\
    dR_{s} & = & m(s,R_{s})ds+n(s,R_{s})dB^R_{s}, \\
    dW_{s} & = & [r_0 W_{s}+u_{s}\beta(s,R_{s})]ds+u_s\sigma(s,R_{s})dB_s,
    \end{array}
\right. 
\end{equation} 
where the mean rate $\mu$ and the volatility rate $\sigma$ of the asset return are deterministic functions, $\beta(s,R_{s})=\mu(s,R_{s})-r_0$ is the excess return rate over the risk-free rate $r_0$, the drift $m$ and the diffusion coefficient $n$ of the stochastic factor are both deterministic functions, and $\{u_\cdot\}$ is the stochastic control process representing the money amount invested in the risky asset. In this paper, all the stochastic analyses  are conducted in a filtered probability space
$(\Omega,\mathcal{F}, \{\mathcal{F}_s\}_{s\in[0,T]}, \mathbb{P})$, on which two correlated Brownian motions, $B$
and $B^R$, with correlation $\varrho\in(-1,1)$ are defined; all the stochastic processes are assumed to be adapted to $\mathbb{F}:=\{\mathcal{F}_s\}_{s\in[0,T]}$, the augmented filtration generated by $B$ and $B^R$.

We could let $\beta$ and $\sigma$ in the $W$-dynamics to absorb the risk-free interest rate such that we have
\begin{equation} \label{Updatedwealth}
    d\widetilde{W}_{s}=u_{s}\widetilde{\beta}(s)ds+u_s\widetilde{\sigma}(s)dB_s, 
\end{equation}
where $\widetilde{W}_s=W_s e^{r_0(T-s)}$, $\widetilde{\beta}(s,R_{s})=\beta(s,R_{s})e^{r_0(T-s)}$, and $\widetilde{\sigma}(s,R_{s})=\sigma(s,R_{s})e^{r_0(T-s)}$. 
Without loss of generality, we could consider $r_0=0$ while we will continue to use the notations in \eqref{Completedynamics} for the ease of the notational burdens in subsequent analyses. Though, we present all the results with the discounting factor.

In the setting of $\eqref{Completedynamics}$, the market completeness can be achieved in two ways: first, the two randomness sources are equivalent in the sense that $|\varrho|\uparrow 1$; second, both $\mu$ (and thus $\beta$) and $\sigma$ are independent of $\{R_\cdot\}$. The first case is trivial and not the focus of this paper. We refer to the second case as the complete market under our consideration. Note that in the complete-market setting, the second line of \eqref{Completedynamics} (the $R$-dynamics) does not play a role in a stochastic control problem with the $W$-dynamics.


\subsection{MV Analysis in Complete Markets ($\mu\equiv \mu(s),~\sigma\equiv \sigma(s)$)} \label{sec:mvcm}
In complete markets, there is no hedging demand against fluctuations in the investment opportunities and thus we can anticipate a myopic investment policy. 
Throughout this paper, for the well-posedness of the stochastic control problem, we assume that $\sup_{0\leq s\leq T}|\rho(s)|\leq c$ and $\sup_{0\leq s\leq \tau\leq T}|\lambda(s,\tau)|\leq c$, where $\lambda(s,t)=\eta(s,t)/\mu(s,T)$.

The MV objective \eqref{MVfunctional} can be reformulated as: 
\begin{equation} \label{equiMV}
\begin{split}
    J(s,w;u)=&\mathbb{E}^{\mathcal{F}_s}\left[\int^T_s \lambda(s,\tau)\Phi(s,W_\tau)d\tau+\Phi(s,W_T)\right]\\
    &+\int^T_s \lambda(s,\tau)\Psi\left(\mathbb{E}^{\mathcal{F}_s}[W_\tau]\right)d\tau+\Psi\left(\mathbb{E}^{\mathcal{F}_s}[W_T]\right), 
\end{split}
\end{equation}
where $\Phi(s,w)=\rho(s)w-\frac{\gamma}{2}w^2$, and $\Psi(w)=\frac{\gamma}{2}w^2$. Our aim is then to search for the equilibrium MV policy that maximizes the objectives \eqref{equiMV} as per Definition \ref{Def:EquiPolicy}. To this end, let us examine the corresponding extended HJB system \eqref{HJBSys} associated with the dynamics setting \eqref{Completedynamics} with $\mu\equiv \mu(s),~\sigma\equiv \sigma(s)$ and the MV objective \eqref{equiMV}, where we should identify the followings
\begin{equation*}
\begin{split}
C(t,s,w,a)=\lambda(t,s)\left(\rho(s)w-\frac{\gamma}{2}w^2\right),&\quad
F(t,w)=\frac{\gamma}{2}w, \\
H(t,s,w)=\frac{\gamma}{2}\lambda(t,s)w^2,& \quad G(t,w)=\frac{\gamma}{2}w^2.
\end{split}
\end{equation*}
%
%
%
%
%
%

Moreover, we know the following components:
\begin{equation*} 
    \begin{cases}{l}
        \begin{aligned}
        C(t,s,w,a)+H(t,s,\psi(w))=&\lambda(t,s)\rho(t)w-\frac{\lambda(t,s)\gamma}{2}w^2+\frac{\lambda(t,s)\gamma}{2}\psi^2(w)\\
        =&\lambda(t,s)\rho(t)w,  
        \end{aligned}\\

        \begin{aligned}
            \Delta^a_1:=&\mathbb{A}^a f(s,s,w)-\mathbb{A}^a f(t,s,w)|_{t=s}\\
            =&f_t(t,s,w)|_{t=s},             
        \end{aligned}\\
        
        \begin{aligned}
        \Delta^a_2(\tau):=&\mathbb{A}^u H(s,\tau,g^\tau(s,w))-\frac{\partial H}{\partial g^\tau}(s,\tau,g^\tau(s,w))\cdot\mathbb{A}^u g^\tau(s,w)\\
        =&\frac{\gamma}{2}\lambda_t(t,\tau)|_{t=s}(g^\tau)^2(s,w) +\frac{\gamma}{2}\widetilde{\sigma}^2(s)a^2\lambda(s,\tau)(g^\tau_w)^2(s,w). 
        \end{aligned}
    \end{cases} 
\end{equation*} 
By noting that $\lambda_t(t,T)=0$ and $\lambda(t,T)=1$, one also has $\Delta^a_2(T)=\frac{\gamma}{2}\sigma^2(s)a^2(g^T_w)^2(s,w)$. Consequently, the $V$--equation of the extended HJB system \eqref{HJBSys} reads 
\begin{equation*}
    \begin{split}
        V_s(s,w)+\sup\limits_{a\in\mathcal{U}}\bigg\{&\frac{1}{2}\widetilde{\sigma}^2(s)a^2V_{ww}(s,w)+\widetilde{\beta}(s)a V_w(s,w)+\lambda(s,s)\rho(s)w\\
        & -f_t(t,s,w)|_{t=s} -\frac{\gamma}{2}\int^T_s\lambda_t(t,\tau)|_{t=s}(g^\tau)^2(s,w)d\tau\\
        &-\frac{\gamma}{2}\widetilde{\sigma}^2(s)a^2\int^T_s\lambda(s,\tau)(g^\tau_w)^2(s,w)d\tau-\frac{\gamma}{2}\widetilde{\sigma}^2(s)a^2 (g^T_w)^2(s,w)\bigg\}=0.
    \end{split}
\end{equation*}
The Hamiltonian (the supremum problem above) is quadratic and concave in $a$ (with $V_ww\le 0$ that will be satisfied with the later ansatz), which implies that the unique equilibrium policy,  solved from the first-order optimality condition, is given by
\begin{equation*}
    \widehat{u}(s,w)=-\frac{\beta(s)}{\sigma^2(s)}\frac{V_w(s,w)}{V_{ww}(s,w)-\gamma\int^T_s\lambda(s,\tau)(g^\tau_w)^2(s,w)d\tau-\gamma (g^T_w)^2(s,w)}e^{-r_0(T-s)}.
\end{equation*}

By integrating \eqref{Completedynamics} from $s$ to $\tau$ and substituting $W_\tau$ into the MV objective function \eqref{equiMV}, as well as examining the structure of the coefficients in \eqref{Completedynamics} and the terminal conditions of $V$, $f$, and $g$, we leverage the following ansatzs:
\begin{equation} \label{AnsatzComplete}
\begin{split}
    V(s,w)&=A(s)w+B(s), \\ 
    f(t,s,w)&=-\frac{\gamma}{2}L(t,s)w^2+M(t,s)w+N(t,s), \\
    g^\tau(s,w)&=a^\tau(s)w+b^\tau(s).
\end{split}
\end{equation} 
We also immediately obtain the $t$-derivative of $f$: $f_t(t,s,w)=-\frac{\gamma}{2}L_t(t,s)w^2+M_t(t,s)w+N_t(t,s)$. Subsequently, with \eqref{AnsatzComplete}, the equilibrium MV policy reads  
\begin{equation} \label{AnsatzComPoli}
    \widehat{u}(s,w)=\frac{1}{\gamma}\frac{\beta(s)}{\sigma^2(s)}\frac{A(s)}{\int^T_s \lambda(s,\tau)(a^\tau)^2(s)d\tau+ (a^T)^2(s)}e^{-r_0(T-s)}. 
\end{equation}

By substituting \eqref{AnsatzComplete} and \eqref{AnsatzComPoli} into the extended HJB system \eqref{HJBSys} and employing the separation of variables method, we obtain the following coupled system of nonlocal ODEs for $(A,B,L,M,N,a,b)$: for $0\leq t \leq s \leq \tau \leq T$,
\begin{equation} \label{NonlocalODESys}
\left\{
    \begin{array}{rcl}
        L_t(t,s)|_{t=s}-\int^T_s\lambda_t(t,\tau)|_{t=s}(a^\tau)^2(s)d\tau&=&0 \\
        A_s(s)-M_t(t,s)|_{t=s}+\lambda(s,s)\rho(s)-\gamma\int^T_s\lambda_t(t,\tau)|_{t=s}a^\tau(s)b^\tau(s)d\tau &=& 0, \\
        B_s(s)+\frac{1}{2\gamma}\frac{\beta^2(s)}{\sigma^2(s)}\frac{A^2(s)}{\int^T_s \lambda(s,\tau)(a^\tau)^2(s)d\tau+ (a^T)^2(s)}-N_t(t,s)|_{t=s}&&\\
        -\frac{\gamma}{2}\int^T_s\lambda_t(t,\tau)|_{t=s}(b^\tau)^2(s)d\tau
        &=& 0, \\
        (L_t)_s(t,s) + \lambda_t(t,s) &=& 0, \\
        (M_t)_s(t,s)-\frac{\beta^2(s)}{\sigma^2(s)}\frac{A(s)L_t(t,s)}{\int^T_s\lambda(s,\tau) (a^\tau)^2(s)d\tau+ (a^T)^2(s)}&&\\
        +\lambda_t(t,s)\rho(t)+\lambda(t,s)\rho_t(t) &=& 0, \\
        (N_t)_s(t,s)-\frac{1}{2\gamma}\frac{\beta^2(s)}{\sigma^2(s)}\left(\frac{A(s)}{\int^T_s \lambda(s,\tau)(a^\tau)^2(s)d\tau+ (a^T)^2(s)}\right)^2 L_t(t,s)&&\\
        +\frac{1}{\gamma}\frac{\beta^2(s)}{\sigma^2(s)}\frac{A(s)M_t(t,s)}{\int^T_s \lambda(s,\tau)(a^\tau)^2(s)d\tau+ (a^T)^2(s)} &=& 0, \\
        a^\tau_s(s) &=& 0, \\
        \displaystyle{b^\tau_s(s)+\frac{1}{\gamma}\frac{\beta^2(s)}{\sigma^2(s)}\frac{A(s)a^\tau(s)}{\int^T_s \lambda(s,\tau)(a^\tau)^2(s)d\tau+(a^T)^2(s)}} &=& 0
    \end{array}
\right. 
\end{equation} 
with $ A(T)=\rho(T)$, $B(T)=0$, $L_t(t,T)=0$, $M_t(t,T)=\rho_t(t)$, $N_t(t,T)=0$, $a^\tau(\tau)=1$, and $b^\tau(\tau)=0$. 

Before delving deeper into the nonlocal ODE system \eqref{NonlocalODESys}, let us make some initial observations: (I) \textbf{Coupling of ODEs}: The system \eqref{NonlocalODESys} is coupled in the sense that that the three unknown functions $A$, $M$, and $b$ are interdependent and cannot be solved individually; (II) \textbf{Nonlocality in time}: Unlike conventional ODE systems, where unknown functions depend on a single time variable $s$, this nonlocal ODE system involves functions that depend on pairs of temporal variables, such as $(t,s)$ or $(s;\tau)$. This setup complicates the system, as it cannot be treated as a collection of independent ODEs parameterized by time variables $t$ or $\tau$. The presence of nonlocal terms like $L_t(t,s)|_{t=s}$, $M_t(t,s)|_{t=s}$, and $\int^T_s \lambda(s,\tau)(a^\tau)^2(s)d\tau$, requires a simultaneous consideration of all parameters to yield complete solutions. This nonlocality desires a broader perspective, where global information rather than purely local data is necessary for solving the system. This characteristic is why we refer to it as a nonlocal system. Such a dual-time-variable structure, arising from TIC preferences, is also observed in existing literature, including a PDE format in \cite{Wei2017,Yong2012,Lei2023,Lei2024,lei2023well} and an SDE variant in \cite{Lin2002,Yong2006,Wang2019,Wang2020,hernandez2023me,Hamaguchi2021}. 


Next, the following lemma reveals the equivalence between the nonlocal ODE system \eqref{NonlocalODESys} and a well-studied Volterra-type integral equation. With this result, we can show that \eqref{NonlocalODESys} is uniquely solvable. 
\begin{lemma} \label{LemmaVolterra}
The coupled system of nonlocal ODEs \eqref{NonlocalODESys} is equivalent to a linear Volterra integral equation of the second kind: 
    \begin{equation} \label{Volterrainteq}
    \begin{split}
        &\displaystyle{A(s)=\Theta(s)+\int^T_sK(s,\tau)A(\tau)d\tau}
    \end{split}
\end{equation}
with  
\begin{equation*} 
\left\{
    \begin{array}{rcl}
        \Theta(s)&=&\displaystyle{\rho(T)-\int^T_s\left[\rho_t(\delta)+\rho(\delta)\int^T_\delta\lambda_t(\delta,\eta)d\eta+\rho_t(\delta)\int^T_\delta\lambda(\delta,\eta)d\eta+\lambda(\delta,\delta)\rho(\delta)\right]d\delta}, \\
        K(s,\tau)&=&\displaystyle{\int^\tau_s\frac{\beta^2(\tau)}{\sigma^2(\tau)}\left(\frac{\int^T_\tau\lambda_t(\delta,\sigma)d\sigma}{\int^T_\tau\lambda(\tau,\sigma)d\sigma+1}-\int^\tau_\delta\frac{\lambda_t(\delta,\epsilon)}{\int^T_\tau \lambda(\tau,\sigma)d\sigma+1}d\epsilon\right)d\delta},
    \end{array}
\right. 
\end{equation*} 
which admits a unique continuous solution $A(s)$ in $[0,T]$. 
\end{lemma}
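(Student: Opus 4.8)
The plan is to collapse the fully coupled system \eqref{NonlocalODESys} onto a single scalar equation for $A$, recognise that equation as the second-kind Volterra equation \eqref{Volterrainteq}, and then invoke the classical solvability theory for such equations.

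First I would dispose of the components that decouple from $A$. The equation $a^\tau_s(s)=0$ together with the terminal datum $a^\tau(\tau)=1$ forces $a^\tau(s)\equiv 1$, so every denominator in \eqref{NonlocalODESys} collapses to $D(s):=\int_s^T\lambda(s,\tau)\,d\tau+1$. Integrating $(L_t)_s(t,s)=-\lambda_t(t,s)$ backward from $L_t(t,T)=0$ yields the explicit $L_t(t,s)=\int_s^T\lambda_t(t,\sigma)\,d\sigma$; substituting this and $a^\tau\equiv1$ into the first line of \eqref{NonlocalODESys} turns it into an identity, so it imposes no further constraint. Likewise the last line integrates (using $b^\tau(\tau)=0$) to express $b^\tau(s)$ as a quadrature of $A$ against $\tfrac1\gamma\tfrac{\beta^2}{\sigma^2}\tfrac{A}{D}$, while the $B$- and $N$-equations are pure quadratures that are solved last and never feed back into $A$. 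Thus the only genuinely coupled block is $\{A,\;M_t(t,s)|_{t=s},\;b^\tau\}$.

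The heart of the proof is to close this block. I would integrate the $(M_t)_s$-equation (fifth line of \eqref{NonlocalODESys}) in its running variable from $T$ down to $s$, using $M_t(t,T)=\rho_t(t)$ and the explicit $L_t$, to obtain $M_t(t,s)$, and then restrict to the diagonal $t=s$. Feeding this $M_t(t,s)|_{t=s}$ together with the quadrature for $b^\tau$ into the second line of \eqref{NonlocalODESys} gives a first-order relation for $A_s(s)$ whose right-hand side contains $A$ only through nested double integrals; applying Fubini to swap the order of integration, collecting the $A$-independent part into $\Theta$ and the coefficient of $A(\tau)$ into $K(s,\tau)$, and finally integrating in $s$ against $A(T)=\rho(T)$, I would arrive at \eqref{Volterrainteq}. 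This is where I expect the main difficulty: the two-time-variable (nonlocal) terms $M_t(t,s)|_{t=s}$, $L_t(t,s)$ and $\int_s^T\lambda_t(t,\tau)|_{t=s}(\cdot)\,d\tau$ must be tracked with care through the interchange of integration order, since it is precisely this bookkeeping that decides which contributions survive into the kernel and fixes the exact form of $K$ and $\Theta$. The converse inclusion is obtained by reversing these manipulations: starting from a continuous $A$ solving \eqref{Volterrainteq}, define $a^\tau,b^\tau,L,M,N,B$ by the explicit formulas and quadratures above, differentiate \eqref{Volterrainteq} to recover the $A_s$-relation, and verify directly that all eight lines of \eqref{NonlocalODESys} and all terminal conditions hold, yielding the claimed equivalence.

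It then remains to solve \eqref{Volterrainteq}. Under the standing assumptions $\sup|\rho|,\sup|\lambda|\le c$ (with the attendant continuity and boundedness of $\rho_t,\lambda_t$ and of $\beta^2/\sigma^2$, and $D\ge1$ bounded away from $0$), the free term $\Theta$ is continuous on $[0,T]$ and $K$ is continuous and bounded on the triangle $\{0\le s\le\tau\le T\}$. I would treat the Volterra operator $(\mathcal T A)(s)=\int_s^T K(s,\tau)A(\tau)\,d\tau$ on $C[0,T]$: its $n$-fold iterate is controlled by $\|K\|_\infty^n (T-s)^n/n!$, so on $C[0,T]$ equipped with the weighted norm $\|A\|_\varpi=\sup_{s}e^{-\varpi(T-s)}|A(s)|$ the map $A\mapsto\Theta+\mathcal T A$ is a contraction for $\varpi$ large enough. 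The Banach fixed-point theorem (equivalently, summability of the Neumann/resolvent series) then delivers a unique continuous solution $A$ on $[0,T]$, completing the proof.
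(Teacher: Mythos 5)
Your proposal is correct and follows essentially the same route as the paper's proof: reduce the system via $a^\tau\equiv 1$ and the explicit $L_t(t,s)=\int_s^T\lambda_t(t,\sigma)\,d\sigma$, express $M_t(t,s)|_{t=s}$ and $b^\tau$ as quadratures of $A$, substitute into the $A$-equation, exchange integration order to read off $\Theta$ and $K$, and then invoke second-kind Volterra theory. The only differences are in level of detail—you spell out the weighted-norm contraction argument where the paper simply cites the classical resolvent-kernel theory, and you explicitly address the converse direction of the equivalence, which the paper treats implicitly.
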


The complete determination of function $A$ through this linear Volterra integral equation \eqref{Volterrainteq} decouples the nonlocal ODE system \eqref{NonlocalODESys}, allowing it to be fully solvable using classical ODE theory. Moreover, we can derive its explicit expression.  

\begin{theorem} \label{EPCom}
The resolvent representation for the solution of \eqref{Volterrainteq} satisfies  
\begin{equation*} \label{Sol_Volterra integral eq}
    \displaystyle{A(s)=\Theta(s)+\int^T_s\mathcal{R}(s,\tau)\Theta(\tau)d\tau=\rho(s)\left(\int^T_s\lambda(s,\tau)d\tau+1\right), }
\end{equation*}
where $\mathcal{R}$ is the resolvent kernel associated with $K$ of \eqref{Volterrainteq}. As a result, the equilibrium MV policy for the dynamic MV asset allocation with TIC risk-return preferences \eqref{equiMV} in a complete market \eqref{Completedynamics} is given by:
\begin{equation} \label{EPComplete}
    \widehat{u}(s,W_s)=\frac{\rho(s)}{\gamma}\frac{\mu(s)-r_0}{\sigma^2(s)}e^{-r_0(T-s)}, \quad s\in[0,T].  
\end{equation}
\end{theorem}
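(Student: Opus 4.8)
The plan is to treat the two asserted equalities in Theorem \ref{EPCom} separately, and then read off the policy. The first equality is the classical resolvent representation: Lemma \ref{LemmaVolterra} already guarantees that \eqref{Volterrainteq} has a unique continuous solution on $[0,T]$, and the kernel $K$ is continuous and bounded on the triangle $\{(s,\tau):0\le s\le\tau\le T\}$ (boundedness follows from the standing assumptions $\sup|\rho|\le c$, $\sup|\lambda|\le c$ together with the regularity of $\beta/\sigma$). Hence the Neumann series of iterated kernels $\mathcal{R}=\sum_{n\ge 1}K^{(n)}$ converges uniformly and defines the resolvent kernel, giving $A(s)=\Theta(s)+\int_s^T\mathcal{R}(s,\tau)\Theta(\tau)\,d\tau$ by standard (backward) Volterra theory; this part needs no new estimates. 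The substance is the second equality, namely that this unique solution coincides with the explicit form $\rho(s)\big(\int_s^T\lambda(s,\tau)\,d\tau+1\big)$. Because the solution is unique, it suffices to exhibit a candidate and verify it solves \eqref{Volterrainteq}.

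For the verification I would use guess-and-verify. Set $\widetilde A(s)=\rho(s)\big(\int_s^T\lambda(s,\tau)\,d\tau+1\big)$ and substitute into the right-hand side of \eqref{Volterrainteq}. The decisive simplification is that $\widetilde A(\tau)$ carries exactly the factor $\big(\int_\tau^T\lambda(\tau,\sigma)\,d\sigma+1\big)$ appearing in the denominator of $K(s,\tau)$, so the quotient cancels and $K(s,\tau)\widetilde A(\tau)$ collapses to $\tfrac{\beta^2(\tau)}{\sigma^2(\tau)}\rho(\tau)\int_s^\tau\big(\int_\tau^T\lambda_t(\delta,\sigma)\,d\sigma-\int_\delta^\tau\lambda_t(\delta,\epsilon)\,d\epsilon\big)\,d\delta$. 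I would then swap the order of the $\tau$- and $\delta$-integrations (Fubini is legitimate on the finite triangle) and integrate by parts in the inner variables, using the normalizations $\lambda(t,T)=1$ and $\lambda_t(t,T)=0$, so that the $\lambda_t$ terms telescope back into $\lambda$ evaluated at endpoints and reproduce $\widetilde A(s)-\Theta(s)$. An equivalent and slightly cleaner route is to differentiate \eqref{Volterrainteq} in $s$: since $K(s,s)=0$, one obtains the reduced first-order relation $A'(s)=\Theta'(s)+\int_s^T K_s(s,\tau)A(\tau)\,d\tau$ with terminal datum $A(T)=\Theta(T)=\rho(T)$, and one checks that $\widetilde A$ satisfies this relation together with $\widetilde A(T)=\rho(T)$.

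Once $A$ is identified, the equilibrium policy follows by pure substitution into \eqref{AnsatzComPoli}. From the penultimate line of the system \eqref{NonlocalODESys}, $a^\tau_s(s)=0$ with terminal condition $a^\tau(\tau)=1$, so $a^\tau(s)\equiv 1$ for every parameter $\tau$ (in particular $a^T(s)\equiv 1$); hence the denominator $\int_s^T\lambda(s,\tau)(a^\tau)^2(s)\,d\tau+(a^T)^2(s)$ in \eqref{AnsatzComPoli} reduces to $\int_s^T\lambda(s,\tau)\,d\tau+1$. Inserting the closed form of $A$, this denominator cancels exactly against the factor inside $\widetilde A(s)$, leaving $\widehat u(s,w)=\tfrac{1}{\gamma}\tfrac{\beta(s)}{\sigma^2(s)}\rho(s)e^{-r_0(T-s)}$, and writing $\beta(s)=\mu(s)-r_0$ produces \eqref{EPComplete}. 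Notably the result is independent of $w$, consistent with the linear-in-$w$ ansatz \eqref{AnsatzComplete}.

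The hard part will be the integral identity in the second paragraph. It is conceptually routine but bookkeeping-heavy: one must track several nested integrals with overlapping limits, repeatedly invoke Fubini and integration by parts, and line up the endpoint contributions governed by $\lambda(t,T)=1$, $\lambda_t(t,T)=0$, and $a^\tau(\tau)=1$ so that the kernel term reproduces precisely $\widetilde A(s)-\Theta(s)$. The delicate point is verifying the cancellation among the $\lambda_t$-integrals rather than any analytic subtlety; everything surrounding it is either standard Volterra resolvent theory or direct substitution, and uniqueness from Lemma \ref{LemmaVolterra} does the rest.
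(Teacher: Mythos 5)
Your proposal is correct and follows essentially the same route as the paper: invoke the uniqueness and resolvent representation from classical Volterra theory (via Lemma \ref{LemmaVolterra}), verify by direct substitution that the candidate $\rho(s)\big(\int_s^T\lambda(s,\tau)\,d\tau+1\big)$ solves the equation (the paper asserts this verification without detailing the cancellations you outline), and then obtain \eqref{EPComplete} by substituting $A$ and $a^\tau\equiv 1$ into \eqref{AnsatzComPoli}. The only cosmetic difference is that the paper checks the candidate against the reduced ODE system \eqref{SimpleODEs} as well as \eqref{Volterrainteq}, whereas you work with the integral equation (or its $s$-derivative) directly; both rest on the same uniqueness argument.
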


Theorem \ref{EPCom} provides a fully analytical characterization of the equilibrium investment policy of MV investor in a complete market. The certain investment opportunities lead to myopic demands only. We obtain a similar form of equilibrium MV policy as in \cite{Basak2010} even we have considered the running MV objectives. Moreover, the equilibrium policy \eqref{EPComplete} shows some key characteristics. First, the equilibrium policy \eqref{EPComplete} does not depend on $W_s$.
Second, in the case of $\rho(s)=1$ over the whole duration, which means that investors generally do not modify their preference of risk and return during the investment horizon, our result is consistent with the previous ones in \cite{Basak2010,Bjoerk2014a}. However, when $\rho(s)$ is increasing over time, it suggests that the investor prioritizes returns. The strategy with \eqref{EPComplete} implies that the investor should then increase her investments in the risky assets.


\subsection{MV Analysis in Incomplete Markets} \label{sec:mvicm}
We are now ready to study the incomplete market with the setup \eqref{Completedynamics}. In this case, trading in stocks and bonds cannot perfectly hedge against the fluctuations in the stochastic investment opportunity set. Hence, we anticipate the hedging demand added to the MV policy on top of the myopic component of \eqref{EPComplete} as in \cite{Basak2010,Dai2021}. Note that when there is no correlation between the stock return and the stochastic factor ($\varrho=0$), there is no hedging demand for the stochastic factor because trading in the stock cannot offset any fluctuations in the stochastic factor from the modeling perspective. 


In this section, we use BSDE theory to formulate and derive the explicit expression of the equilibrium MV policy in a general incomplete-market economy. Before we again examine the corresponding extended HJB system \eqref{HJBSys}, we note that the controlled infinitesimal operator $\mathbb{A}^a$ now transforms an arbitrary twice continuously differentiable function $\phi(s,w,r)$ as follows: 
\begin{equation*}
    \mathbb{A}^a\phi(s,w,r)=\frac{\partial\phi}{\partial s}+a\widetilde{\beta}\frac{\partial\phi}{\partial w}+m\frac{\partial\phi}{\partial r}+\frac{1}{2}\left(a^2\widetilde{\sigma}^2\frac{\partial^2\phi}{\partial w^2}+n^2\frac{\partial^2\phi}{\partial r^2}+2\varrho a\widetilde{\sigma} n\frac{\partial^2\phi}{\partial w\partial r}\right).
\end{equation*}

With the dynamic setting of \eqref{Completedynamics} and the MV objective \eqref{equiMV}, the corresponding extended HJB system \eqref{HJBSys} reads
\begin{equation*} 
    \begin{cases}
        \begin{aligned}
            C(t,s,w,r,a)+H(t,s,\psi(w,r))= &\lambda(t,s)\rho(t)w-\frac{\lambda(t,s)\gamma}{2}w^2+\frac{\lambda(t,s)\gamma}{2}\psi^2(w,r)\\
            =&\lambda(t,s)\rho(t)w,
        \end{aligned} \\ 
        
        \begin{aligned}
        \Delta^a_1:=&\mathbb{A}^a f(s,s,w,r)-\mathbb{A}^a f(t,s,w,r)|_{t=s}\\
        =&f_t(t,s,w,r)|_{t=s},             
        \end{aligned}\\

        \begin{aligned}
            \Delta^a_2(\tau):= & \mathbb{A}^u H(s,\tau,g^\tau(s,w,r))-\frac{\partial H}{\partial g^\tau}(s,\tau,g^\tau(s,w,r))\cdot\mathbb{A}^u g^\tau(s,w,r) \\
            =&\frac{\gamma}{2}\lambda_t(t,\tau)|_{t=s}(g^\tau)^2(s,w,r)+\frac{\gamma}{2}\widetilde{\sigma}^2(s,r)a^2\lambda(s,\tau)(g^\tau_w)^2(s,w,r)\\
            &+\frac{\gamma}{2}n^2(s,r) \lambda(s,\tau)(g^\tau_r)^2(s,w,r)+\gamma\varrho(n\widetilde{\sigma})(s,r)a\lambda(s,\tau)g^\tau_wg^\tau_r(s,w,r). 
        \end{aligned}
    \end{cases}
\end{equation*}

        

         
Similarly, one also has $\Delta^a_2(T)$ by noting $\lambda(t,T)=1$. Consequently, the $V$-equation of the extended HJB system \eqref{HJBSys} satisfies 
\begin{equation*}
    \begin{split}
        & {V_s(s,w,r)+\frac{1}{2}n^2(s,r)V_{rr}(s,w,r)+m(s,r)V_r(s,w,r)-\frac{\gamma}{2}n^2(s,r)\int^T_s \lambda(s,\tau)(g^\tau_r)^2(s,w,r)d\tau}\\
        & \qquad {-\frac{\gamma}{2}n^2(s,r)(g^T_{r})^2(s,w,r)+\lambda(s,s)\rho(s)w-f_t(t,s,w,r)|_{t=s}-\frac{\gamma}{2}\int^T_s\lambda_t(t,\tau)|_{t=s}(g^\tau)^2(s,w,r)d\tau} \\
        & \qquad +\sup\limits_{a\in\mathcal{U}}\bigg\{\frac{1}{2}\widetilde{\sigma}^2(s,r)a^2V_{ww}(s,w,r)+\varrho (n\widetilde{\sigma})(s,r)aV_{wr}(s,w,r)+\widetilde{\beta}(s,r)a V_w(s,w,r) \\
        & \qquad -\frac{\gamma}{2}\widetilde{\sigma}^2(s,r)a^2\int^T_s\lambda(s,\tau)(g^\tau_w)^2(s,w,r)d\tau -\gamma\varrho(n\widetilde{\sigma})(s,r)a\int^T_s\lambda(s,\tau)g^\tau_w g^\tau_r(s,w,r)d\tau\\    
        & \qquad -\frac{\gamma}{2}\widetilde{\sigma}^2(s,r)a^2 (g^T_w)^2(s,w,r)-\gamma\varrho(n\widetilde{\sigma})(s,r)ag^T_w g^T_r(s,w,r) \bigg\}=0, \\
    \end{split}
\end{equation*}
whose supremum problem implies the equilibrium investment policy taking the form
\begin{equation*}
    \begin{split}
        \widehat{u}(s,w,r)  = &\frac{1}{\widetilde{\sigma}^2(s,r)V_{ww}(s,w,r)-\gamma\widetilde{\sigma}^2(s,r)\int^T_s\lambda(s,\tau)(g^\tau_w)^2(s,w,r)d\tau-\gamma \widetilde{\sigma}^2(s,r)(g^T_w)^2(s,w,r)}\\
        & \bigg[-\widetilde{\beta}(s,r)V_w(s,w,r)-\varrho(n\widetilde{\sigma})(s,r)V_{wr}(s,w,r)\\
        &+\gamma\varrho(n\widetilde{\sigma})(s,r)\int^T_s\lambda(s,\tau)g^\tau_wg^\tau_r(s,w,r)d\tau+\gamma\varrho(n\widetilde{\sigma})(s,r)g^T_wg^T_r(s,w,r)\bigg].
    \end{split}
\end{equation*}

From the previous analysis of MV problems in a complete market, we know that $V$ and $g$ are both similarly separable in $w$ and satisfy 
\begin{equation} \label{AnsatzIncom}
\left\{
    \begin{array}{l}                    \displaystyle{V(s,w,r)=A(s)w+B(s,r)=\rho(s)\left(\int^T_s\lambda(s,\tau)d\tau+1\right)w+B(s,r),} \\ 
    \displaystyle{g^\tau(s,w,r)=a^\tau(s)w+b^\tau(s,r)=w+\mathbb{E}_{s,r}\left[\int^\tau_s\widetilde{\beta}(s,R_s)\widehat{u}(s,R_s)ds\right],}
    \end{array}
\right. 
\end{equation}
under the assumption of $\mathbb{E}_{s,r}\big[\int^T_s\widetilde{\sigma}^2_s\widehat{u}^2_sds\big]<\infty$, which can be verified by our theoretical results of nonlocal BSDEs in Section \ref{subsec:wellposedness} and the specific financial applications in Section \ref{subsec:apps}. Consequently, the equilibrium policy is independent of $w$ and reads 
\begin{equation} \label{PDEequipolicy}
    \begin{split}
        \widehat{u}(s,r) = \frac{1}{\gamma}\frac{\widetilde{\beta}(s,r)\rho(s)}{\widetilde{\sigma}^2(s,r)}-\frac{\varrho}{\widetilde{\sigma}(s,r)}\frac{\int^T_s\lambda(s,\tau)n(s,r)b^\tau_r(s,r)d\tau+n(s,r)b^T_r(s,r)}{1+\int^T_s\lambda(s,\tau)d\tau}.
    \end{split}
\end{equation}
It is noteworthy that $b^\tau(s,r)$ represents an conditional expectation. Thanks to the well-known Feynman-Kac formula, we can reformulate \eqref{PDEequipolicy} into a probabilistic setting as follows:
\begin{equation} \label{EPincom}
    \begin{split}
        \widehat{u}(s,R_s) = \frac{1}{\gamma}\frac{\mu(s,R_s)-r_0}{\sigma^2(s,R_s)}\rho(s) e^{-r_0(T-s)}-\frac{\varrho}{\sigma(s,R_s)}\frac{\int^T_s\lambda(s,\tau) Z^\tau_sd\tau+Z^T_s}{1+\int^T_s\lambda(s,\tau)d\tau}e^{-r_0(T-s)},
    \end{split}
\end{equation} 
where $\{R_\cdot\}$ and $\{Z^\tau_\cdot\}_{\tau\in[0,T]}$ are parts of the solution $\{(R_\cdot,Y^\tau_\cdot,Z^\tau_\cdot)\}_{\tau\in[0,T]}$ of the following family of FBSDEs parameterized by $\tau\in[0,T]$:
\begin{equation} \label{FBSDEsys}
\left\{
    \begin{array}{l}
        \displaystyle{dR_s=m(s,R_s)ds+n(s,R_s)dB^R_s}, \\
        \displaystyle{dY^\tau_s=-\left(\frac{\rho(s)}{\gamma}\left(\frac{\mu(s,R_s)-r_0}{\sigma(s,R_s)}\right)^2-\varrho\frac{\mu(s,R_s)-r_0}{\sigma(s,R_s)}\frac{\int^T_s\lambda(s,\tau) Z^\tau_sd\tau+Z^T_s}{1+\int^T_s\lambda(s,\tau) d\tau}\right)ds+Z^\tau_s dB^R_s,} \\
        R_t=R, \quad Y^\tau_\tau=0, \quad 0\leq t \leq s \leq \tau \leq T, \quad R\in\mathbb{R}. 
    \end{array}
\right. 
\end{equation} 
It is clear that the FBSDE \eqref{FBSDEsys} is decoupled in the sense that one can first solve $\{R_\cdot\}$ in this system and then independently examine the solvability of the BSDE. Suppose that the process $\{R_\cdot\}$ is known, the generator of BSDE is an adapted stochastic process rather than a deterministic function, which allows for an explicit dependence upon the paths of $\{B^R_\cdot\}$. Although we can study it as a general BSDE with random coefficients, it is noteworthy that its specific linear structure with respect to $\{Z^\tau_\cdot\}_{\tau\in[0,T]}$ makes it more manageable when viewed as a BSDE with stochastic Lipschitz conditions. 

\begin{remark}
    To better understand the advantages of our probabilistic approach compared to earlier studies in \cite{Basak2010,Dai2021}, we consider a special parameter setting of our MV formulation: $\rho(s)=1$ and $\lambda(t,s)=0$. For this degenerate case, \cite{Basak2010} also noted that the unknown function $b^T(s,r)$ represents a conditional expectation. According to \eqref{AnsatzIncom} and \eqref{PDEequipolicy}, it satisfies the following equation
\begin{equation} \label{SDU}
    b^T(s,r)=\mathbb{E}_{s,r}\left[\int^T_s\left(\frac{1}{\gamma}\frac{\beta^2(\epsilon,R_\epsilon)}{\sigma^2(\epsilon,R_\epsilon)}-\frac{\varrho\beta(\epsilon,R_\epsilon)n(\epsilon,R_\epsilon)}{\sigma(\epsilon,R_\epsilon)}b^T_r(\epsilon,R_\epsilon)\right)d\epsilon\right].
\end{equation}
Clearly, in order to provide a fully analytical characterization of the optimal investment policy \eqref{PDEequipolicy} in terms of the model parameters, one needs to eliminate the dependence on $b_r$ inside the conditional expectation \eqref{SDU}. By the Feynman–Kac theorem \cite{karatzas2014brownian}, \cite{Basak2010} obtains the following PDE: 
\begin{equation} \label{newPDE}
    b^T_s(s,r)+\left(m(s,r)-\frac{\varrho\beta(s,r)n(s,r)}{\sigma(s,r)}\right)b^T_r(s,r)+\frac{1}{2}n(s,r)b^T_{rr}(s,r)+\frac{1}{\gamma}\frac{\beta^2(s,r)}{\sigma^2(s,r)}=0,
\end{equation}
with $b^T(T,r)=0$, where the coefficient of $b^T_r$ is obtained by merging the original coefficient $m$ with the one preceding $b^T_r$ from the non-homogeneous term. Consequently, again by Feynman–Kac theorem, the solution of \eqref{newPDE} can be reinterpreted as 
\begin{equation} \label{simpleSDU} 
    b^T(s,r)=\widetilde{\mathbb{E}}_{s,r}\left[\int^T_s\frac{1}{\gamma}\frac{\beta^2(\epsilon,R_\epsilon)}{\sigma^2(\epsilon,R_\epsilon)}d\epsilon\right],
\end{equation}
where $\widetilde{\mathbb{E}}_{s,r}[\cdot]$ denotes the expectation under a new probability measure $\widetilde{\mathbb{P}}$ such that $\{R_\cdot\}$ now follows dynamics with a modified drift:
\begin{equation*}
    \displaystyle{dR_{s}=\left(m(s,R_{s})-\frac{\varrho\beta(s,R_s)n(s,R_s)}{\sigma(s,R_s)}\right)ds+n(s,R_{s})d\widetilde{B}^R_{s}},  
\end{equation*}
where $\{\widetilde{B}_\cdot\}$ is a Brownian motions under $\widetilde{\mathbb{P}}$. By strategically employing the Feynman–Kac theorem twice, it is clear that \cite{Basak2010} effectively eliminates the dependence on $b_r$ on the right hand side of \eqref{SDU}. Introducing a new probability measure and state dynamics yields a fully analytical expression \eqref{simpleSDU} for the conditional expectation $b$ of the optimal investment. This facilitates explicit computation of the optimal dynamic MV portfolios in a straightforward manner. However, this sophisticated technique—which relies on changing the probability measure of the conditional expectation to simplify the form of the random variables—depends heavily on three key assumptions: (I) \eqref{PDEequipolicy} is linear with respect to $b^T_r$ such that the unknown can be absorbed into coefficients of the differential operator of PDE \eqref{newPDE}. (II) The conditional expectation $b^T$ must be sufficiently smooth in $r$ for the Feynman-Kac theorem to be applicable and for its first-order derivative $b^T_r$ to be meaningful. (II) More importantly, the conditional expectation must be explicitly expressible to verify if it is regular enough to be twice continuously differentiable such that the equilibrium policy \eqref{PDEequipolicy} is well-defined; see \cite[Proposition 2 and Remark 8]{Basak2010}. This is why earlier studies \cite{Basak2010} were confined to exploring the CIR model and the OU process---both models offer clear and smooth expressions for moments of all orders. 
\end{remark}

Unlike earlier literature, which simplified the conditional expectation \eqref{SDU} by altering the underlying probability measure, our approach, inspired by the standard result in the theory of BSDEs \cite{Peng1992,Pardoux1992,Peng2011}, directly shapes $\{(b^\tau(s,R_s),$ $n(s,R_s) b^\tau_r(s,R_s))\}_{\tau\in[0,T]}$ of \eqref{AnsatzIncom} and \eqref{PDEequipolicy} as the unique solution $\{(Y^\tau_s,Z^\tau_s)\}_{\tau\in[0,T]}$ to the BSDE system \eqref{FBSDEsys}. This BSDE-based formulation \eqref{EPincom}-\eqref{FBSDEsys} of equilibrium MV strategy offers three significant advantages over the state-of-the-art approaches:
\begin{enumerate}
    \item \textbf{Offering well-posedness even without explicit forms of moments of stochastic factor models}. While the CKLS process \eqref{CKLSSys} is widely employed in mathematical finance to model various stochastic investment opportunities, it often lacks analytical tractability, except in rare cases where neither the Fourier transform nor the density is known. Previous studies \cite{Basak2010} heavily relied on explicit density expressions and the ability to compute various orders of moments, limiting their results to specific models like the CIR and OU models. However, in our BSDE-based analysis, density functions are not required, and we do not yet have an explicit formula for moment generating functions (MGFs). This advantage allows for the exploration of a broader range of stochastic factor models; see Section \ref{subsec:apps} for more details. 
    \item \textbf{Assigning interpretable significance to the equilibrium policy even if the required conditional expectation is merely continuous but not differentiable}. By presenting a probabilistic argument rather than an analytic one, our approach significantly eases the underlying assumptions, allowing for a more flexible and broader approach. Earlier studies not only required precise expressions for the expectation (i.e., $Y_\cdot$), but also that these expectations be differentiable to the first/second order. Otherwise, the equilibrium policy \eqref{CKLSpolicy} would lack a clear definition. However, as the Feynman--Kac formula in \cite{Pardoux1992,Yong1999} indicates, even when a given PDE admits only a continuous viscosity solution instead of a differentiable classical solution, we can still derive a corresponding BSDE, allowing us to define suitable interpretations for $\{(Y_\cdot,Z_\cdot)\}$. In this sense, we have expanded the range of interpretations for equilibrium policy \eqref{CKLSpolicy} in MV research.  
    \item \textbf{Allowing for the consideration of a nonlinear generator}. Our BSDE formulation only requires the generator to satisfy the stochastic Lipschitz condition (to be detailed) and do not require linearity with respect to $(Y^\tau_\cdot,Z^\tau_\cdot)$. This flexibility broadens the range of control scenarios where our method can be applied. In the MV asset allocation problem, the focus is on controlling the wealth management process. This inherently affects the drift and volatility, resulting in a corresponding BSDE generator that is linear with respect to $\{Z^\tau_\cdot\}$. However, if we consider a more general control problem, where the control process has greater flexibility in interacting with the control system, the corresponding BSDE would likely be nonlinear, though still adhering to the stochastic Lipschitz condition. 
\end{enumerate}

It is important to note that formulating the equilibrium MV policy as the solution to an FBSDE \eqref{FBSDEsys} without proving the existence and uniqueness of this FBSDE solution is incomplete. However, this is precisely where the difficulty of the problem lies. The two notable features of \eqref{FBSDEsys} elevate it beyond the classical FBSDE analysis framework: (1) Nonlocality and entanglement induced by integration over parameters, $\int^T_s\lambda(s,\tau) Z^\tau_sd\tau$. (2) Stochastic generator and stochastic Lipschitz conditions caused by SV models in incomplete markets, $(\mu_s-r_0)/\sigma_s$. We have analyzed the myopic part of \eqref{EPincom} in the previous section. Next, we will demonstrate that under suitable conditions, the FBSDE \eqref{FBSDEsys} is well-posed, leading to a well-defined equilibrium policy \eqref{EPincom}. Our theoretical results are applicable to a wide range of SV modes.

\subsection{Well-Posedness of Nonlocal BSDEs} \label{subsec:wellposedness}
In this subsection, we investigate the existence and uniqueness of nonlocal BSDEs (nonlocality in the parameter $\tau$) of the form: 
\begin{equation} \label{NonlocalBSDEDiff}
\left\{
    \begin{array}{rcl}
     dY^\tau_s&=&-h\left(\tau,s,Y^\tau_s,\int^T_s \phi(s,\tau) Y^\tau_sd\tau,Y^T_s,Z^\tau_s,\int^T_s\varphi(s,\tau) Z^\tau_sd\tau,Z^T_s\right)ds+Z^\tau_sdB_s, \\

     Y^\tau_\tau&=&\xi^\tau, \quad 0 \leq s \leq \tau \leq T, \quad \xi^\tau\in L^2_{\mathcal{F}_\tau}(\Omega;\mathbb{R}^d), 
    \end{array}
\right. 
\end{equation} 
where $\tau$ is considered as an external parameter which takes value in $[0,T]$. It is clear that \eqref{NonlocalBSDEDiff} is more general than \eqref{FBSDEsys}.

For each fixed $\tau$, the associated BSDE indexed by it has different terminal time $\tau$, different $\mathcal{F}_\tau$-terminal random variable $\xi^\tau$ at the endpoint, and different $\{\mathcal{F}_s\}_{s\in[0,\tau]}$-progressively measurable random function (denoted by $h^\tau$), 
\begin{equation*}
    h(\omega,\tau,s,y_1,y_2,y_3,z_1,z_2,z_3):\Omega\times\{\tau\}\times[0,\tau]\times\mathbb{R}^{3d}\times\mathbb{R}^{3d\times k}\rightarrow\mathbb{R}^{k}
\end{equation*}
as its generator. For simplicity, we often suppress the randomness $\omega$ and introduce $\bm{0}=(0,0,0,0,0,0)^\top$. In addition to their mutual entanglement between the two BSDEs associated $\tau$ and $T$, each BSDE parameterized by $\tau\in[0,T]$ is also coupled and interconnected through the two integral terms in the generator.
\begin{figure}[!ht]
	\centering
	\includegraphics[width=0.4\textwidth]{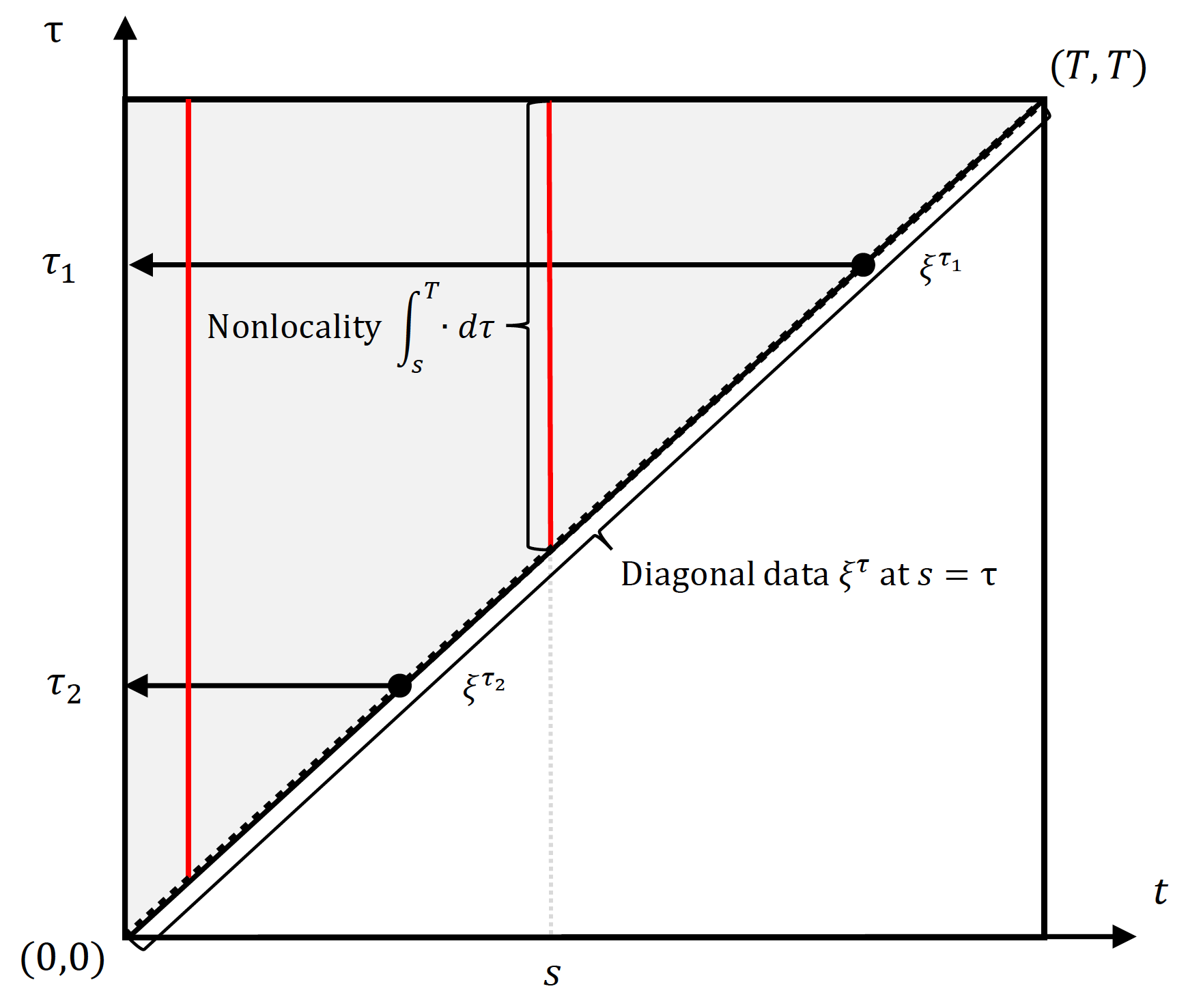}
	\caption{Nonlocal BSDE System}
	\label{fig:nBSDE}
\end{figure}

As shown in Figure \ref{fig:nBSDE}, the data on the diagonal $s=\tau$ of the nonlocal BSDE system \eqref{NonlocalBSDEDiff} has been provided. For any fixed $\tau\in[0,T]$, if the generator $h$ does not contain the tricky integral terms, the system can be simply regarded as a family of BSDEs parameterized by $\tau$. However, if nonlocal terms exist, when solving the BSDE backward from the terminal time $\tau$, the evolution of each time point $s$ depends on the unknown random field $\{(Y^\tau_s, Z^\tau_s)\}$ with $\tau\in[s,T]$, which is marked as a red line in Figure \ref{fig:nBSDE}. We aim to find a random field $\{(Y^\tau_\cdot,Z^\tau_\cdot)\}_{\tau\in[0,T]}$ solving \eqref{NonlocalBSDEDiff} in the sense of Definition \ref{nBSDESol}, which is only defined on the shaded triangular region in Figure \ref{fig:nBSDE}, instead of on the whole rectangle. To this end, we assume that $\phi$ and $\varphi$ are both real and bounded functions with $\max\big\{\sup_{0\leq s\leq \tau\leq T}|\phi(s,\tau)|,\sup_{0\leq s\leq \tau\leq T}|\varphi(s,\tau)|\big\}\leq c$. Furthermore, we assume that the generator $h^\tau$ ($0\leq \tau\leq T$) satisfies the following stochastic Lipschitz conditions.
\begin{definition}[Stochastic Lipschitz Conditions]
    For $0\leq s\leq \tau\leq T$, let $\beta>0$ be a fixed real number, $a_s:=u_s+v^2_s+l^2_s+k^2_s$, where $u_s$, $v_s$, $l_s$, and $k_s$ be given non-negative $\{\mathcal{F}_s\}$-progressively measurable process satisfying $\mathbb{E}\big[e^{\beta\int^T_0a_sds}\big]<\infty$. Furthermore, for $\tau$, $\tau^\prime\in[0,T]$, $y_i$, $y^\prime_i\in\mathbb{R}^d$, $z_i$, $z^\prime_i\in\mathbb{R}^{d\times k}$ $(i=1,2,3)$, and $s\in[0,\tau\wedge\tau^\prime]$, 

    \begin{equation} \label{LipschitzCons}
\left\{
    \begin{array}{rcl}
     |h(\tau,s,y_1,y_2,y_3,z_1,z_2,z_3)-h(\tau,s,y^\prime_1,y^\prime_2,y^\prime_3,z^\prime_1,z^\prime_2,z^\prime_3)|&\leq& u_s\sum\limits_i|y_i-y^\prime_i|+v_s\sum\limits_i|z_i-z^\prime_i|, \\
    |h(\tau,s,y_1,y_2,y_3,z_1,z_2,z_3)-h(\tau^\prime,s,y_1,y_2,y_3,z_1,z_2,z_3)|&\leq& \theta(|\tau-\tau^\prime|)\big(l_s+k_s\sum\limits_i|y_i|+\sum\limits_i|z_i|\big)
    \end{array}
\right. 
\end{equation} 
for a continuous and monotone increasing function $\theta:[0,\infty)\to[0,\infty)$ with $\theta(0)=0$.
\end{definition}
It is noteworthy that the generator $h^\tau$ of \eqref{NonlocalBSDEDiff} can be extended to a more general case with slight modifications, without increasing the proof’s complexity; for example, substituting nonlocal terms of \eqref{NonlocalBSDEDiff} with $\int^T_s \phi(s,\tau,Y^\tau_s)d\tau$ and $\int^T_s\varphi(s,\tau,Z^\tau_s)d\tau$.

\begin{definition} \label{nBSDESol}
A family of processes $\{(Y^\tau_\cdot,Z^\tau_\cdot)\}_{\tau\in[0,T]}\in\mathcal{M}^c(\beta,a.,T)$ (defined in detail later) is called an (adapted) solution of \eqref{NonlocalBSDEDiff} if the following holds: 
    \begin{equation} \label{NonlocalBSDEInt}
    \begin{split}
        Y^\tau_t=&\xi^\tau+\int^\tau_t h\bigg(\tau,s,Y^\tau_s,\int^T_s \phi(s,\tau) Y^\tau_sd\tau,Y^T_s,Z^\tau_s,\int^T_s\varphi(s,\tau) Z^\tau_sd\tau,Z^T_s\bigg)ds\\
        &-\int^\tau_t Z^\tau_sdB_s, \quad 0\leq t\leq \tau\leq T, \quad \mathbb{P}-a.s.
    \end{split}
\end{equation}
Equation \eqref{NonlocalBSDEInt} is said to have a unique adapted solution if for any two adapted solutions $\{(Y^\tau,Z^\tau)\}_{\tau\in[0,T]}$ and $\{(\widetilde{Y}^\tau,\widetilde{Z}^\tau)\}_{\tau\in[0,T]}$, it must hold that 
$$\mathbb{P}\big\{Y^\tau_s=\widetilde{Y}^\tau_s,\;\forall s\in[0,\tau]\;\mathrm{~and}\;Z^\tau_s=\widetilde{Z}^\tau_s,\;\mathrm{a.e.}\;s\in[0,\tau]\big\}=1, \quad \forall \tau\in[0,T]. $$ 
\end{definition}

\vspace{0.2cm}

Before studying nonlocal BSDEs \eqref{NonlocalBSDEDiff}, we first introduce some appropriate norms and Banach spaces such that 
\begin{itemize}
    \item The elements $(Y^\tau_s,Z^\tau_s)_{s\in[0,\tau]}$ do not exist in isolation from the family of processes $\{(Y^\tau_\cdot,Z^\tau_\cdot)\}_{\tau\in[0,T]}$ parameterized by $\tau$. It is required that the mapping from parameters to BSDEs, i.e. $[0,T]\ni\tau\mapsto(Y^\tau_s,Z^\tau_s)_{s\in[0,\tau]}\in\{(Y^\tau_\cdot,Z^\tau_\cdot)\}_{\tau\in[0,T]}$, has a certain degree of regularity in the direction of $\tau$.
    \item For each fixed $\tau\in[0,T]$, the pair of processes $(Y^\tau_s,Z^\tau_s)_{s\in[0,\tau]}$ needs to be regular enough in the temporal direction of $s$ to adapt the setting of stochastic Lipschitz condition. Moreover, the conditions need to be sufficiently weak to encompass a wide range of stochastic factor models as examples of applications in the theory of BSDEs.
    \item While it is standard to make use of Banach's fixed point theorem to prove the well-posedness of differential equations by establishing a contractive mapping, the constructed Banach spaces need to be closed with respect to this contraction in the sense that the mapping is a self-mapping.
\end{itemize}

In real-world financial applications, like the CKLS-type stochastic factor models discussed in the following section, the Lipschitz coefficients of \eqref{LipschitzCons} are not only stochastic but also lack a strictly defined positive lower bound and a finite upper bound uniformly for all arguments. In the existing literature, \cite{Bender2000BSDESWS,li2023bsdes,pardoux2014stochastic} have focused on relaxing the classical deterministic Lipschitz conditions in the standard setting of \cite{Peng1992,Pardoux1992,Peng2011}, established corresponding well-posedness results under stochastic conditions. One must strengthen other conditions while relaxing the Lipschitz continuity, rendering stronger integrability conditions on the driver as well as on the solutions. These integrability conditions make it possible to replace the uniform Lipschitz condition with a stochastic one. Inspired by earlier works of \cite{Bender2000BSDESWS,li2023bsdes,pardoux2014stochastic}, we propose some suitable norms and spaces to investigate our nonlocal BSDE system \eqref{NonlocalBSDEDiff}. In what follows, we list them in detail.
\begin{enumerate}
    \item $L^2(\tau;\mathbb{R}^d)$ is the set of all $\mathcal{F}_\tau$-measurable $\mathbb{R}^d$-valued random vectors $\xi$ satisfying $\|\xi\|^{2}:=\mathbb{E}\big[|\xi|^{2}\big]< \infty$. The family of random variables $\{\xi^\tau\}_{\tau\in[0,T]}$ parameterized by $\tau$ is regarded as a mapping from $\tau\in[0,T]$ to $\xi^\tau\in L^2(\tau;\mathbb{R}^d)$. Furthermore, the set $\mathbb{L}^2_T(\mathbb{R}^d)$ consists of all continuous $\mathbb{F}$-adapted processes $\{\xi^\tau\}_{\tau\in[0,T]}$ ($\{\xi^\tau\}_T$ for short) with $\|\xi^\tau\|_{c,T}^{2}:=\mathbb{E}\left[\operatorname*{sup}_{\tau\in[0,T]}|\xi^\tau|^{2}\right]<\infty.$ It is clear that the real-valued mapping $\tau\mapsto\|\xi^\tau\|$ is continuous and $\sup_{\tau\in[0,T]}\|\xi^\tau\|<\infty$ if $\{\xi^\tau\}_T\in\mathbb{L}^2_T(\mathbb{R}^d)$.  
    \item $S^2([0,\tau];\mathbb{R}^d)$ is the set of all $\{\mathcal{F}_s\}$-adapted, $\mathbb{R}^d$-valued, and continuous processes $(Y_s)_{s\in[0,\tau]}$ satisfying $\|Y_\cdot\|_{S^{2}_\tau}:=(\mathbb{E}[\operatorname*{sup}_{s\in[0,\tau]}|Y_{s}|^{2}])^{1/2}< \infty$. Next, the family of processes $\{Y^\tau_\cdot\}_{\tau\in[0,T]}$ parameterized by $\tau$ is considered as a mapping from $\tau\in[0,T]$ to $Y^\tau_\cdot\in S^2([0,\tau];\mathbb{R}^d)$. Moreover, $\{Y^\tau_\cdot\}_{\tau\in[0,T]}$ ($\{Y^\tau_\cdot\}_T$ for short) is said to be continuous with respect to $\tau$ if $\lim_{\tau^\prime\to\tau}\|Y^{\tau^\prime}_\cdot\|_{S^{2}_{\tau^\prime}}=\|Y^\tau_\cdot\|_{S^{2}_\tau}$. Furthermore, the set $C([0,T];S^2([0,\cdot];\mathbb{R}^d))$ consists of all continuous $\{Y^\tau_\cdot\}_T$ and 
    \begin{equation*}    
    \mathbb{S}^2_T(\mathbb{R}^d):=\left\{\left.\{Y^\tau_\cdot\}_T\in C([0,T];S^2([0,\cdot];\mathbb{R}^d))~\right|\sup_{\tau\in[0,T]}\|Y^\tau_\cdot\|_{S^2_\tau}<\infty\right\}. 
    \end{equation*}
    \item $M^2([0,\tau];\mathbb{R}^{d\times k})$ is the set of all $\{\mathcal{F}_s\}$-progressively measurable, $\mathbb{R}^{d\times k}$-valued processes $(Z_s)_{s\in[0,\tau]}$ such that $\|Z_\cdot\|_{M^{2}_\tau}:=\{\mathbb{E}[(\int_{0}^{\tau}|Z_{s}|^{2}ds)]\}^{1/2}< \infty$. Similarly, the set $C([0,T];M^2([0,\cdot];\mathbb{R}^d)$ consists of all processes $\{Z^\tau\}_{\tau\in[0,T]}$ with $\lim_{\tau^\prime\to\tau}\|Z^{\tau^\prime}_\cdot\|_{M^{2}_{\tau^\prime}}=\|Z^\tau_\cdot\|_{M^{2}_\tau}$ and  
    \begin{equation*}
        \mathbb{M}^2_T(\mathbb{R}^d):=\left\{\left.\{Z^\tau_\cdot\}_T\in C([0,T];M^2([0,\cdot];\mathbb{R}^d))~\right|\sup_{\tau\in[0,T]}\|Z^\tau_\cdot\|_{M^2_\tau}<\infty\right\}. 
    \end{equation*}
\end{enumerate}
Furthermore, we introduce some Banach spaces to adapt the generator of BSDEs with stochastic Lipschitz conditions. 
\begin{enumerate}\addtocounter{enumi}{3}
    \item $L^2(\beta,a_\cdot,\tau;\mathbb{R}^d)$ is the set of all $\mathcal{F}_\tau$-measurable $\mathbb{R}^d$-valued random vectors $\xi$ satisfying $\|\xi\|_{\beta,a_\cdot,\tau}^{2}:=\mathbb{E}\big[e^{\beta\int_{0}^{\tau}a_{r}dr}|\xi|^{2}\big]< \infty$. Moreover, $\mathbb{L}^2_T(\beta,a_\cdot;\mathbb{R}^d)$ consists of all continuous $\mathbb{F}$-adapted processes $\{\xi^\tau\}_T$ with $\|\xi^\tau\|_{c,\beta,a_\cdot,T}^{2}:=\mathbb{E}\big[\operatorname*{sup}_{\tau\in[0,T]}e^{\beta\int_{0}^{\tau}a_{r}dr}|\xi^\tau|^{2}\big]<\infty$. It indicates that $\tau\mapsto\|\xi^\tau\|_{\beta,a_\cdot,\tau}$ is continuous and $\sup_{\tau\in[0,T]}\|\xi^\tau\|_{\beta,a_\cdot,\tau}<\infty$.
    

    \item $L^2(\beta,a_\cdot,[0,\tau];\mathbb{R}^d)$ (resp. $L^2(\beta,a_\cdot,[0,\tau];\mathbb{R}^{d\times k})$) is the set of all $\{\mathcal{F}_s\}$-progressively measurable $\mathbb{R}^d$-valued (resp. $\mathbb{R}^{d\times k}$-valued) processes $(Z_s)_{s\in[0,\tau]}$ with $\|Z_\cdot\|_{\beta,a_\cdot,\tau}^{2}:=\mathbb{E}\big[\int_{0}^{\tau}e^{\beta\int_{0}^{s}a_{r}dr}|Z_{s}|^{2}\mathrm{d}s\big]<\infty$. Moreover, $C([0,T];L^2(\beta,a_\cdot,[0,\cdot];\mathbb{R}^d))$ is the set of all $\{Z^\tau_\cdot\}_T$ satisfying $\lim_{\tau^\prime\to\tau}\|Z^{\tau^\prime}_\cdot\|_{\beta,a_\cdot,\tau^\prime}=\|Z^\tau_\cdot\|_{\beta,a_\cdot,\tau}$ and 
    \begin{equation*}
        \mathbb{L}^2_T(\beta,a_\cdot,[0,\cdot];\mathbb{R}^d):=\left\{\left.\{Z^\tau_\cdot\}_T\in C([0,T];L^2(\beta,a_\cdot,[0,\cdot];\mathbb{R}^d))~\right|\sup_{\tau\in[0,T]}\|Z^\tau_\cdot\|^2_{\beta,a_\cdot,\tau}<\infty\right\}. 
    \end{equation*}
    
    \item $\overline{L}^2(\beta,a_\cdot,[0,\tau];\mathbb{R}^d)$ is the set of all $\{\mathcal{F}_s\}$-progressively measurable $\mathbb{R}^d$-valued processes $(Y_s)_{s\in[0,\tau]}$ with
    $\|\sqrt{a}. Y_\cdot\|_{\beta,a_\cdot,\tau}^{2}=\mathbb{E}\big[\int_{0}^{\tau}e^{\beta\int_{0}^{s}a_{r}dr}a_{s}|Y_{s}|^{2}ds\big]<\infty$. Moreover, $C([0,T];\overline{L}^2(\beta,a_\cdot,[0,\cdot];\mathbb{R}^d))$ is the set of all $\{Y^\tau_\cdot\}_T$ satisfying the condition $\lim_{\tau^\prime\to\tau}\|\sqrt{a}Y^{\tau^\prime}_\cdot\|_{\beta,a_\cdot,\tau^\prime}=\|\sqrt{a}Y^\tau_\cdot\|_{\beta,a_\cdot,\tau}$ and
    \begin{equation*}
        \overline{\mathbb{L}}^2_T(\beta,a_\cdot,[0,\cdot];\mathbb{R}^d):=\left\{\left.\{Y^\tau_\cdot\}_T\in C([0,T];\overline{L}^2(\beta,a_\cdot,[0,\cdot];\mathbb{R}^d))~\right|\sup_{\tau\in[0,T]}\|\sqrt{a}.Y^\tau_\cdot\|^2_{\beta,a_\cdot,\tau}<\infty\right\}. 
    \end{equation*}

    \item $H^2(\beta,a_\cdot,[0,\tau];\mathbb{R}^d)$ is the set of all $\{\mathcal{F}_s\}$-progressively measurable $\mathbb{R}^k$-valued processes $(h_s)_{s\in[0,\tau]}$ with $\|h_{\cdot}\|_{\beta,a_\cdot,H,\tau}^{2}:=\mathbb{E}\big[\big(\int_{0}^{\tau}e^{\frac{\beta}{2}\int_{0}^{s}a_{r}dr}|h_{s}|ds\big)^{2}\big]<\infty$. Then, 
    \begin{equation*}
    \begin{split}
        \mathbb{H}^{2}_T(\beta,a_\cdot,[0,\cdot];\mathbb{R}^d):=\bigg\{\{h^\tau_\cdot\}_T:[0,T]\ni\tau\mapsto (h^\tau_s)_{s\in[0,\tau]}\in H^2(\beta,a_\cdot,[0,\tau];\mathbb{R}^d)~~~\text{and}~ & \\ 
        \sup_{\tau\in[0,T]}\|h^\tau_\cdot\|^2_{\beta,a_\cdot,H,\tau}<\infty&\bigg\}. 
    \end{split}
    \end{equation*}
  
    \item $L^{2,c}(\beta,a_\cdot,[0,\tau];\mathbb{R}^d)$ is the set of all $\{\mathcal{F}_s\}$-adapted, $\mathbb{R}^d$-valued, and continuous processes $(Y_s)_{s\in[0,\tau]}$ with $\|Y_\cdot\|_{\beta,a_\cdot,c,\tau}^{2}:=\mathbb{E}\left[\operatorname*{sup}_{s\in[0,\tau]}\left(e^{\beta\int_{0}^{s}a_{r}dr}|Y_{s}|^{2}\right)\right]<\infty$. Moreover, $C([0,T];L^{2,c}(\beta,a_\cdot,[0,\cdot];\mathbb{R}^d))$ is the set of all $\{Y^\tau_\cdot\}_T$ satisfying $\lim_{\tau^\prime\to\tau}\|Y^{\tau^\prime}_\cdot\|_{\beta,a_\cdot,c,\tau^\prime}=\|Y^\tau_\cdot\|_{\beta,a_\cdot,c,\tau}$ and  

    \begin{equation*}
        \mathbb{L}^{2,c}_T(\beta,a_\cdot,[0,\cdot];\mathbb{R}^d):=\left\{\left.\{Y^\tau_\cdot\}_T\in C([0,T];L^{2,c}(\beta,a_\cdot,[0,\cdot];\mathbb{R}^d))~\right|\sup_{\tau\in[0,T]}\|Y^\tau_\cdot\|^2_{\beta,a_\cdot,c,\tau}<\infty\right\}. 
    \end{equation*}
    
\end{enumerate}

Based on (4)-(8), we further develop the following space.
\begin{enumerate}\addtocounter{enumi}{8}
    \item $\mathbb{M}_T(\beta,a_\cdot):=\overline{{\mathbb{L}}}^{2}_T(\beta,a_\cdot,[0,\cdot];\mathbb{R}^{d})\times \mathbb{L}^{2}_T(\beta,a_\cdot,[0,\cdot];\mathbb{R}^{d\times k})$ is a Banach space with the norm $$\|\{(Y^\tau_\cdot,Z^\tau_\cdot)\}\|_{\beta,a_\cdot}^{2}:=\sup\limits_{\tau\in[0,T]}\left\{\|\sqrt{a.}Y^\tau_\cdot\|_{\beta,a_\cdot,\tau}^{2}+\|Z^\tau_\cdot\|_{\beta,a_\cdot,\tau}^{2}\right\}.$$ 
    \item $\mathbb{M}^{c}_T(\beta,a_\cdot):=\big(\overline{{\mathbb{L}}}^{2}_T(\beta,a_\cdot,[0,\cdot];\mathbb{R}^{d})\cap \mathbb{L}^{2,c}_T(\beta,a_\cdot,[0,\cdot];\mathbb{R}^{d})\big)\times \mathbb{L}^{2}_T(\beta,a_\cdot,[0,\cdot];\mathbb{R}^{d\times k})$ is a subspace of $\mathbb{M}_T(\beta,a_\cdot)$ and with 
    $$\|\{(Y^\tau_\cdot,Z^\tau_\cdot)\}\|_{\beta,a_\cdot,c}^{2}:=\sup\limits_{\tau\in[0,T]}\left\{\|Y^\tau_\cdot\|_{\beta,a_\cdot,c,\tau}^{2}+\|{\sqrt{a}}.Y^\tau_\cdot\|_{\beta,a_\cdot,\tau}^{2}+\|Z^\tau_\cdot\|_{\beta,a_\cdot,\tau}^{2}\right\}.$$
\end{enumerate}

We first present the solvability of a simplified BSDE \eqref{SimpleSDEInt}, which provides a solid foundation for our analysis of nonlinear BSDEs via a fixed-point argument.
\begin{lemma} \label{WellposednessSimpleBSDE}
Let $\beta>0$, terminal data $\{\xi^\tau\}_T\in\mathbb{L}^2_T(\beta,a_\cdot;\mathbb{R}^d)$, and the generators $\{h^\tau(s,\bm{0})\}_T\in \mathbb{H}^{2}_T(\beta,a_\cdot,[0,\cdot];\mathbb{R}^d)$ satisfy uniformly the stochastic Lipschitz conditions \eqref{LipschitzCons}. If $\{(\bar{Y}^\tau_\cdot,\bar{Z}^\tau_\cdot)\}_{\tau\in[0,T]}\in\mathbb{M}_T(\beta,a_\cdot)$, then  
\begin{equation} \label{SimpleSDEInt}
\begin{split}
    Y^\tau_t=&\xi^\tau+\int^\tau_t h\bigg(\tau,s,\bar{Y}^\tau_s,\int^T_s \phi(s,\tau)\bar{Y}^\tau_sd\tau,\bar{Y}^T_s,\bar{Z}^\tau_s,\int^T_s\varphi(s,\tau)\bar{Z}^\tau_sd\tau,\bar{Z}^T_s\bigg)ds\\
    &-\int^\tau_s Z^\tau_sdB_s, 
\end{split}
\end{equation}
admits a unique solution $\{(Y^\tau_\cdot,Z^\tau_\cdot)\}_{\tau\in[0,T]}\in\mathbb{M}^c_T(\beta,a_\cdot)$. 
\end{lemma}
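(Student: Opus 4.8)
The plan is to exploit the fact that, once the arguments $(\bar Y^\tau,\bar Z^\tau)$ are frozen, \eqref{SimpleSDEInt} ceases to be a genuinely nonlinear equation. Indeed, the integrand
\[
g^\tau_s := h\Big(\tau,s,\bar Y^\tau_s,\textstyle\int_s^T \phi(s,r)\bar Y^r_s\,dr,\bar Y^T_s,\bar Z^\tau_s,\int_s^T\varphi(s,r)\bar Z^r_s\,dr,\bar Z^T_s\Big)
\]
is a \emph{given} $\{\mathcal F_s\}$-progressively measurable process not involving the unknown $(Y^\tau,Z^\tau)$, so for each fixed $\tau$ equation \eqref{SimpleSDEInt} is an inhomogeneous linear BSDE. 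I would produce its solution directly by the martingale representation theorem: setting $Y^\tau_t:=\mathbb{E}\big[\xi^\tau+\int_t^\tau g^\tau_s\,ds\,\big|\,\mathcal F_t\big]$ and representing the martingale $t\mapsto\mathbb{E}\big[\xi^\tau+\int_0^\tau g^\tau_s\,ds\,\big|\,\mathcal F_t\big]$ as a stochastic integral against $B$ to extract $Z^\tau$, so that uniqueness for each $\tau$ is immediate. The only preliminary point is the $L^2$-integrability of $\xi^\tau+\int_0^\tau g^\tau_s\,ds$, which follows from the first inequality of \eqref{LipschitzCons} together with the hypotheses $\{\xi^\tau\}_T\in\mathbb{L}^2_T(\beta,a_\cdot)$, $\{h^\tau(s,\bm 0)\}_T\in\mathbb{H}^2_T$, $\{(\bar Y^\tau,\bar Z^\tau)\}\in\mathbb{M}_T(\beta,a_\cdot)$, and the boundedness $|\phi|,|\varphi|\le c$.

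Next I would establish the uniform a priori estimates that place $\{(Y^\tau,Z^\tau)\}$ in the finite-norm part of $\mathbb{M}^c_T(\beta,a_\cdot)$. First one checks $\{g^\tau\}_T\in\mathbb{H}^2_T$ with a bound uniform in $\tau$, by dominating $|g^\tau_s|$ with $|h(\tau,s,\bm 0)|+u_s\big(|\bar Y^\tau_s|+|\!\int_s^T\!\phi\bar Y|+|\bar Y^T_s|\big)+v_s(\cdots)$ and using $u_s\le a_s$, $v_s^2\le a_s$ and Cauchy--Schwarz on the parameter-integrals, together with $\sup_\tau\|\sqrt a\,\bar Y^\tau\|<\infty$ and $\sup_\tau\|\bar Z^\tau\|<\infty$, to absorb everything into the weighted norms. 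Then I apply It\^o's formula to $e^{\beta\int_0^s a_r dr}|Y^\tau_s|^2$ on $[t,\tau]$, take expectations, and estimate the cross term $2\,\mathbb{E}\int_t^\tau e^{\beta\int_0^s a_r dr}Y^\tau_s g^\tau_s\,ds$ by Young's inequality, pairing the $u_s$- and $v_s$-contributions against $a_s|Y^\tau_s|^2$ and absorbing the resulting $a_s|Y^\tau_s|^2$ terms into the $\beta a_s|Y^\tau_s|^2$ term on the left. The role of the weight $a_s=u_s+v_s^2+l_s^2+k_s^2$ is precisely to make these cancellations work. The $h(\tau,s,\bm 0)$ contribution is controlled by the $\mathbb{H}^2$--$\mathbb{L}^{2,c}$ pairing together with the Burkholder--Davis--Gundy inequality applied to the running supremum. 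This delivers the bounds defining $\overline{\mathbb{L}}^2_T$, $\mathbb{L}^{2,c}_T$ and $\mathbb{L}^2_T$, uniformly in $\tau$.

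It then remains to prove continuity in the parameter $\tau$, i.e.\ $\lim_{\tau'\to\tau}\|Y^{\tau'}\|_{\beta,a_\cdot,c,\tau'}=\|Y^\tau\|_{\beta,a_\cdot,c,\tau}$ and the analogous limits for the $Z$- and $\sqrt a\,Y$-norms, which is what upgrades membership from the merely bounded spaces to the continuous ones. For $\tau'\ge\tau$ I would write the equation governing $Y^{\tau'}-Y^\tau$ on the common interval $[0,\tau]$ and treat the boundary strip $[\tau,\tau']$ separately; the difference is then driven by three sources: the terminal mismatch $\xi^{\tau'}-\xi^\tau$, controlled by continuity of $\tau\mapsto\|\xi^\tau\|_{\beta,a_\cdot,\tau}$; the explicit $\tau$-dependence of the generator, controlled by the modulus $\theta(|\tau-\tau'|)$ of the second inequality in \eqref{LipschitzCons}; and the dependence entering through $\bar Y^{\tau'}-\bar Y^\tau$ and $\bar Z^{\tau'}-\bar Z^\tau$, controlled by the first inequality in \eqref{LipschitzCons} and the $\tau$-continuity of $(\bar Y,\bar Z)$ in $\mathbb{M}_T(\beta,a_\cdot)$. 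Feeding these into the same energy estimate as above forces the difference to vanish in norm, while the strip $[\tau,\tau']$ contributes negligibly by finiteness of the weighted norms. I expect this last step to be the main obstacle: the genuine difficulty is parameter-continuity with a \emph{moving terminal time}, since $Y^\tau$ and $Y^{\tau'}$ live on different intervals and are measured in $\tau$-dependent weighted norms, so the standard stability estimates must be adapted to compare processes on $[0,\tau]$ and $[0,\tau']$ simultaneously. The two-part structure of \eqref{LipschitzCons}, with its separate modulus $\theta$ in the $\tau$-direction, is tailored exactly to render this comparison tractable.
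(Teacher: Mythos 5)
Your first two steps reproduce the paper's proof essentially verbatim. For fixed $\tau$ the paper also observes that the generator is a given progressively measurable process, sets $Y^\tau_t=\mathbb{E}\big[\xi^\tau+\int_t^\tau h^\tau_s\,ds\,\big|\,\mathcal F_t\big]$, extracts $Z^\tau$ from the martingale representation theorem, and gets uniqueness from the absence of the unknowns in the generator; the weighted bounds are then obtained by the same estimates you outline (the paper runs the sup-bound through Doob's inequality applied to the conditional-expectation representation, and the $\|\sqrt{a_\cdot}\,Y^\tau_\cdot\|_{\beta,a_\cdot,\tau}$ and $\|Z^\tau_\cdot\|_{\beta,a_\cdot,\tau}$ bounds through a localized energy inequality with stopping times, Young's inequality and Levi's lemma, quoting Proposition 2.6 of \cite{li2023bsdes}; this is the same computation as your It\^o--Young--BDG plan, including the role of $a_s=u_s+v_s^2+l_s^2+k_s^2$ in absorbing the cross terms).

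The gap is in the parameter-continuity step, exactly where you anticipate the difficulty. You propose to control the contribution of $\bar Y^{\tau'}-\bar Y^\tau$ and $\bar Z^{\tau'}-\bar Z^\tau$ to the generator difference by ``the $\tau$-continuity of $(\bar Y,\bar Z)$ in $\mathbb{M}_T(\beta,a_\cdot)$''. But the continuity built into $\mathbb{M}_T(\beta,a_\cdot)$ is continuity of the real-valued functions $\tau\mapsto\|\sqrt{a_\cdot}\,\bar Y^\tau_\cdot\|_{\beta,a_\cdot,\tau}$ and $\tau\mapsto\|\bar Z^\tau_\cdot\|_{\beta,a_\cdot,\tau}$, i.e.\ convergence of norm \emph{values}; it does not give $\|\bar Y^{\tau'}_\cdot-\bar Y^\tau_\cdot\|\to 0$ or $\|\bar Z^{\tau'}_\cdot-\bar Z^\tau_\cdot\|\to 0$ on the common interval (a family of constant norm can fail to be strongly continuous in $\tau$), and no hypothesis of the lemma supplies that strong continuity. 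Note the asymmetry with the terminal data: $\{\xi^\tau\}_T\in\mathbb{L}^2_T(\beta,a_\cdot;\mathbb{R}^d)$ is by definition a pathwise-continuous process in $\tau$ with square-integrable running supremum, so dominated convergence does yield $\|\xi^{\tau'}-\xi^\tau\|_{\beta,a_\cdot,\tau}\to 0$; that is how the terminal mismatch is legitimately handled, but nothing analogous is assumed for $\{(\bar Y^\tau_\cdot,\bar Z^\tau_\cdot)\}$. The paper's own comparison avoids your third source altogether: on $[0,\tau]$ it compares the BSDE for $Y^\tau$ with the one having terminal value $Y^{\tau'}_\tau$ and generator $h^{\tau'}$ evaluated at the \emph{same} frozen arguments, so the generator discrepancy is attributed entirely to the explicit $\tau$-slot and estimated by $\theta(|\tau-\tau'|)$ times $\|\{(\bar Y^\tau_\cdot,\bar Z^\tau_\cdot)\}\|_{\beta,a_\cdot}$ via the continuous-dependence theorem for stochastic-Lipschitz BSDEs (Theorem 3.1 of \cite{li2023bsdes}), with no $\bar Y^{\tau'}-\bar Y^\tau$ term appearing in the final bound. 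So either you restructure the comparison in that way, keeping the frozen inputs fixed and letting only the explicit parameter and the terminal datum move, or you must strengthen the hypotheses to strong $\tau$-continuity of the input family; as written, your step 3 does not close.
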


Furthermore, one can take advantage of \eqref{SimpleSDEInt} to establish a mapping $\Lambda(\{(\bar{Y}^\tau_\cdot,\bar{Z}^\tau_\cdot)\}_T)=\{(Y^\tau_\cdot,Z^\tau_\cdot)\}_T$ and prove that it is a contraction over the Banach space $\mathbb{M}_T(\beta,a_\cdot)$ for some suitable $\beta>0$. Consequently, the unique fixed point of $\Lambda$ is exactly the solution of our nonlocal BSDE \eqref{NonlocalBSDEDiff}. 

\begin{theorem} \label{WellposednessBSDE}
    Suppose that 
    \begin{equation} \label{Conbeta}
        \frac{12\max\{c^2T^2,1\}}{\beta}+\frac{24\max\{c^2T^2,1\}}{\beta^2}<1, 
    \end{equation}
    terminal data $\{\xi^\tau\}_T\in\mathbb{L}^2_T(\beta,a_\cdot;\mathbb{R}^d)$, and the generators $\{h^\tau(s,\bm{0})\}_T\in \mathbb{H}^{2}_T(\beta,a_\cdot,$ $[0,\cdot];\mathbb{R}^d)$ satisfy uniformly the stochastic Lipschitz conditions \eqref{LipschitzCons} Then, BSDE \eqref{NonlocalBSDEDiff} admits a unique solution $\{(Y^\tau_\cdot,Z^\tau_\cdot)\}_{\tau\in[0,T]}\in\mathbb{M}^c_T(\beta,a_\cdot)$. 
\end{theorem}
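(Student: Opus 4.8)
The plan is to realize the solution of \eqref{NonlocalBSDEDiff} as the unique fixed point of the mapping $\Lambda$ induced by the frozen-coefficient equation \eqref{SimpleSDEInt}, and to invoke the Banach fixed-point theorem on the space $\mathbb{M}_T(\beta,a_\cdot)$. First I would define, for each input $\{(\bar Y^\tau_\cdot,\bar Z^\tau_\cdot)\}_T\in\mathbb{M}_T(\beta,a_\cdot)$, the image $\Lambda(\{(\bar Y^\tau_\cdot,\bar Z^\tau_\cdot)\}_T):=\{(Y^\tau_\cdot,Z^\tau_\cdot)\}_T$ to be the unique solution of \eqref{SimpleSDEInt} furnished by Lemma \ref{WellposednessSimpleBSDE}. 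Since that lemma already places the output in $\mathbb{M}^c_T(\beta,a_\cdot)\subseteq\mathbb{M}_T(\beta,a_\cdot)$, the map $\Lambda$ is a well-defined self-map of $\mathbb{M}_T(\beta,a_\cdot)$, and any fixed point automatically inherits the stronger continuous-path regularity of $\mathbb{M}^c_T(\beta,a_\cdot)$; this is what delivers the membership asserted in the statement.

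The core of the argument is the contraction estimate. Given two inputs $\{(\bar Y^{i,\tau},\bar Z^{i,\tau})\}_T$, $i=1,2$, with images $\{(Y^{i,\tau},Z^{i,\tau})\}_T$, I would form the differences $\delta Y^\tau:=Y^{1,\tau}-Y^{2,\tau}$ and $\delta Z^\tau:=Z^{1,\tau}-Z^{2,\tau}$, which solve a BSDE with \emph{zero} terminal value (the common data $\xi^\tau$ cancels) and driver equal to the difference $\Delta h^\tau_s$ of the two frozen generators. Applying It\^o's formula to $e^{\beta\int_0^s a_r\,dr}|\delta Y^\tau_s|^2$ on $[t,\tau]$ and taking expectations yields the energy identity
\begin{equation*}
\mathbb{E}\big[e^{\beta\int_0^t a_r dr}|\delta Y^\tau_t|^2\big]+\mathbb{E}\Big[\int_t^\tau e^{\beta\int_0^s a_r dr}\big(\beta a_s|\delta Y^\tau_s|^2+|\delta Z^\tau_s|^2\big)ds\Big]=\mathbb{E}\Big[\int_t^\tau e^{\beta\int_0^s a_r dr}2\,\delta Y^\tau_s\cdot\Delta h^\tau_s\,ds\Big],
\end{equation*}
where $\Delta h^\tau_s$ is controlled, via the stochastic Lipschitz conditions \eqref{LipschitzCons}, by $u_s$ times the sum of the three $Y$-type increments plus $v_s$ times the sum of the three $Z$-type increments. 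The cross term on the right is split by Young's inequality, pushing a $\tfrac{\beta}{4}a_s|\delta Y^\tau_s|^2$ contribution into the coercive term on the left; here the structural bounds $u_s\le a_s$ and $v_s^2\le a_s$ (immediate from $a_s=u_s+v_s^2+l_s^2+k_s^2$) are exactly what I use to convert $u_s^2/a_s\le a_s$ and $v_s^2/a_s\le1$, so that the $Y$-increments reattach to the $\sqrt{a_\cdot}$-weighted norm and the $Z$-increments to the unweighted $Z$-norm, keeping every term inside the norm of $\mathbb{M}_T(\beta,a_\cdot)$.

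The delicate point, and the step I expect to be the main obstacle, is controlling the two nonlocal integral terms $\int_s^T\phi(s,\tau')\delta\bar Y^{\tau'}_s\,d\tau'$ and $\int_s^T\varphi(s,\tau')\delta\bar Z^{\tau'}_s\,d\tau'$, which couple the slice indexed by $\tau$ to the entire family. Here I would use $|\phi|,|\varphi|\le c$ together with Cauchy--Schwarz in the $\tau'$-variable to bound $\big|\int_s^T\varphi(s,\tau')\delta\bar Z^{\tau'}_s d\tau'\big|^2\le c^2T\int_0^T|\delta\bar Z^{\tau'}_s|^2 d\tau'$, and analogously for the $Y$-term with the $a_s$-weight preserved. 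After integrating in $s$ against $e^{\beta\int_0^s a_r dr}$ and applying Fubini on the triangular region $\{s\le\tau\wedge\tau'\}$ of Figure \ref{fig:nBSDE}, each nonlocal contribution is dominated by $c^2T^2\sup_{\tau'}$ of the corresponding component norm, which is the source of the $\max\{c^2T^2,1\}$ factors. Collecting all contributions, absorbing the arising $\delta Y^\tau$ terms into the $\beta a_s|\delta Y^\tau_s|^2$ term on the left, and taking the supremum over $\tau$ produces
\begin{equation*}
\|\Lambda(\{(\bar Y^1,\bar Z^1)\})-\Lambda(\{(\bar Y^2,\bar Z^2)\})\|_{\beta,a_\cdot}^2\le\Big(\tfrac{12\max\{c^2T^2,1\}}{\beta}+\tfrac{24\max\{c^2T^2,1\}}{\beta^2}\Big)\|\{(\bar Y^1-\bar Y^2,\bar Z^1-\bar Z^2)\}\|_{\beta,a_\cdot}^2.
\end{equation*}
Under hypothesis \eqref{Conbeta} the bracketed constant is strictly below $1$, so $\Lambda$ is a contraction on $\mathbb{M}_T(\beta,a_\cdot)$; its unique fixed point solves \eqref{NonlocalBSDEDiff}, and Lemma \ref{WellposednessSimpleBSDE} guarantees it lies in $\mathbb{M}^c_T(\beta,a_\cdot)$, establishing both existence and uniqueness.
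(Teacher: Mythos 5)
Your proposal is correct and follows essentially the same route as the paper: both realize the solution as the unique fixed point of the map $\Lambda$ defined through Lemma \ref{WellposednessSimpleBSDE}, derive the weighted energy estimate for the difference of two images (the paper invokes the It\^o-type inequality of its cited reference where you derive it directly from It\^o's formula), absorb the Lipschitz and nonlocal terms via Young's and Cauchy--Schwarz/Fubini exactly as you describe, and arrive at the same contraction constant $\frac{12\max\{c^2T^2,1\}}{\beta}+\frac{24\max\{c^2T^2,1\}}{\beta^2}$ before taking the supremum over $\tau$ and applying the Banach fixed-point theorem. Your observation that the fixed point inherits membership in $\mathbb{M}^c_T(\beta,a_\cdot)$ by applying Lemma \ref{WellposednessSimpleBSDE} once more is also exactly how the paper concludes.
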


Given a nonlocal BSDE \eqref{NonlocalBSDEDiff} with fixed $c$ and $T$, it is always available to choose a large enough $\beta$ such that the condition \eqref{Conbeta} holds such that $\Lambda$ is a contractive mapping. In the next section, we will apply the well-posedness result obtained in Theorem \ref{WellposednessBSDE} to analyze the solvability of MV problem \eqref{equiMV}-\eqref{Completedynamics} and well-definedness of our equilibrium investment policy \eqref{EPincom} within various models with stochastic investment opportunities and potentially incomplete markets.

\subsection{MV Problems with CKLS-SV Models} \label{subsec:apps}
In this subsection, we consider the MV asset allocation problem with various stochastic investment opportunity sets. We provide explicit closed-form solutions to equilibrium investment policies under appropriate conditions. We now consider an incomplete-market setting in which the stock price follows
a Chan–Karolyi–Longstaff–Sanders (CKLS)-type stochastic volatility (SV) model of the form
\begin{equation} \label{SVMs}
\left\{
    \begin{array}{l}
     \displaystyle{\frac{dS_{s}}{S_{s}}=(r+\delta R_{s}^{\frac{1+2\kappa\alpha}{2\alpha}})ds+R_{s}^{\frac{1}{2\alpha}}dB_s}, \\ 
     \displaystyle{dR_s=\left(a-bR_s\right)ds+\sigma R^p_sdB^R_s},
    \end{array}
\right. 
\end{equation} 
where $\kappa:=1-p$, $\alpha\neq 0$, $\delta\in\mathbb{R}$, and $\{R_\cdot\}$ follows a CKLS process, which is a generalization of the Cox–Ingersoll–Ross (CIR) and the Ornstein–Uhlenbeck (OU) processes corresponding to the two special cases of $p=\frac{1}{2}$ and $p=0$, respectively. The CKLS process provides a flexible framework for modeling SV in financial markets. Except for specific cases, such as the CIR and OU case, closed-form solutions for the Fourier transform or density function are not available for the CKLS process. This limits the ability to derive analytical results and makes analytic methods in \cite{Basak2010} unavailable. Despite its lack of analytical tractability, the CKLS process has shown good empirical performance in fitting historical financial data. It has been widely used in empirical studies to analyze and forecast asset prices and volatility. Using equation \eqref{SVMs}, it is clear that the instantaneous Sharpe ratio can be expressed as $\delta R^\kappa_s$. Moreover, if letting $p=1/2$, then the case of $\alpha=-1$ corresponds to the SV model employed by \cite{chacko2005dynamic} and the case of $\alpha=1$ reduces to the Heston model in \cite{heston1993closed}. Furthermore, if $p=0$ and $\alpha\uparrow\infty$, \eqref{SVMs} is specified in the OU example of \cite{Basak2010}. 

By our probabilistic approach, we will show that the equilibrium investment policy \eqref{EPincom} is well-defined for a wide range of CKLS-type SV models (i.e., a range of parameter $p$). 
Under the model setting of \eqref{SVMs}, the nonlocal BSDE \eqref{FBSDEsys} becomes
\begin{equation} \label{CKLSSys}
\left\{
    \begin{array}{rcl}
     dY^\tau_s&=&\displaystyle{\left(-\frac{\delta^2\rho(s)}{\gamma}R^{2\kappa}_s+\varrho\delta R^{\kappa}_s\frac{\int^T_s\lambda(s,\tau)Z^\tau_sd\tau+Z^T_s}{1+\int^T_s\lambda(s,\tau)d\tau}\right)ds+Z^\tau_s dB^R_s,} \\

     Y^\tau_\tau&=&0, \quad 0\leq s \leq \tau \leq T. 
    \end{array}
\right. 
\end{equation} 
It is noteworthy that the SV $\{R_\cdot\}$ of \eqref{SVMs} and the nonlocal BSDE $\{(Y^\tau_\cdot,Z^\tau_\cdot)\}_{\tau\in[0,T]}$ \eqref{CKLSSys} are driven by the same Brownian motion $B^R$. It promises that the generator and stochastic Lipschitz conditions are all adapted to the same filtration. 


     



        


\subsubsection*{I. The Degenerated Case when $\bm{p=1}$.}
First of all, we consider a degenerated case of \eqref{SVMs}, where $p=1$, $a\geq 0$, $b<0$, $R_0>0$, $\sigma>0$, and $\kappa=0$. Then the BSDE of \eqref{CKLSSys} is equipped with a deterministic generator with a standard Lipschitz condition 
\begin{equation*}
    \{h^\tau(s,\bm{0})\}_T=-\frac{\delta^2\rho(s)}{\gamma}, \quad a_s=\varrho^2\delta^2.
\end{equation*} 
Consequently, under the condition of $\int^T_0\rho(s)ds<\infty$, our well-posedness result in Theorem \ref{WellposednessBSDE} promises that \eqref{CKLSSys} admits a unique solution $\{(Y^\tau_\cdot,Z^\tau_\cdot)\}_{\tau\in[0,T]}$, thereby ensuring that the equilibrium investment policy \eqref{EPincom} is well-defined; see Theorem \ref{AppThm} below for further details.  


\subsubsection*{II. General CKLS Case with $\bm{p\in(0,1)}$.}
Next, we turn to investigate the case, where $p\in(0,1)$ while the generator of \eqref{CKLSSys} satisfies a stochastic Lipschitz condition \eqref{LipschitzCons}. Suppose that $b\in\mathbb{R}$, $a$, $R_0$, $\sigma>0$, it is clear that 
\begin{equation*}
    \{h^\tau(s,\bm{0})\}_T=-\frac{\delta^2\rho(s)}{\gamma}R^{2\kappa}_s, \quad a_s=\frac{\varrho^2\delta^2}{\big(1+\int^T_s\lambda(s,\tau)d\tau\big)^2} R^{2\kappa}_s.
\end{equation*} 

In order to apply our well-posedness results for nonlocal BSDE \eqref{NonlocalBSDEDiff} to analyze \eqref{CKLSSys}, we need to verify the condition $\{h^\tau(s,\bm{0})\}_T\in \mathbb{H}^{2}_T(\beta,a_\cdot,[0,\cdot];\mathbb{R}^d)$, which is equivalent to show that
\begin{equation} \label{Conh1}
    \mathbb{E}\left[\left(\int_{0}^{T}e^{\frac{\beta}{2}\int_{0}^{s}\frac{\varrho^2\delta^2}{\left(1+\int^T_r\lambda(r,\tau)d\tau\right)^2} R^{2\kappa}_rdr}\left(\frac{\beta}{2}\frac{\varrho^2\delta^2}{\big(1+\int^T_s\lambda(s,\tau)d\tau\big)^2} R^{2\kappa}_s\right)ds\right)^{2}\right]<\infty.
\end{equation}
To see this, we can perform a straightforward analysis if the MGF of $R^{2\kappa}_s$ exists for all $s\in[0,T]$:
\begin{equation} \label{Conh2}
    \begin{split}
        & \mathbb{E}\left[\left(\int_{0}^{T}e^{\frac{\beta}{2}\int_{0}^{s}\frac{\varrho^2\delta^2}{\left(1+\int^T_r\lambda(r,\tau)d\tau\right)^2} R^{2\kappa}_rdr}\left(\frac{\beta}{2}\frac{\varrho^2\delta^2}{\big(1+\int^T_s\lambda(s,\tau)d\tau\big)^2} R^{2\kappa}_s\right)ds\right)^{2}\right]\\
        =&\mathbb{E}\left[\left(e^{\frac{\beta}{2}\int_{0}^{T}\frac{\varrho^2\delta^2}{\left(1+\int^T_s\lambda(s,\tau)d\tau\right)^2} R^{2\kappa}_sds}-1\right)^{2}\right] \\
        \leq & \mathbb{E}\left[e^{\int_{0}^{T}\frac{\beta\varrho^2\delta^2}{\left(1+\int^T_s\lambda(s,\tau)d\tau\right)^2} R^{2\kappa}_sds}\right] \\
        \leq &\frac{1}{T}\int^T_0\mathbb{E}\left[e^{\frac{\beta\varrho^2\delta^2T}{\left(1+\int^T_s\lambda(s,\tau)d\tau\right)^2} R^{2\kappa}_s}\right]ds\leq \frac{1}{T}\int^T_0\mathbb{E}\left[e^{\beta\varrho^2\delta^2T R^{2\kappa}_s}\right]ds< \infty.
    \end{split}
\end{equation}
It is well-known that the MGF of the CKLS process is finite only on part of the real line; see \cite{aly2015moment}. It means that there exists a critical point $\mu^*_s>0$ (possibly depends upon $s$) such that $\mathbb{E}\left[e^{\mu R^{2\kappa}_s}\right]$ is finite in $\mu\in(0,\mu^*_s)$ and explodes at $\mu^*_s$. Since the real function $s\mapsto\mu^*_s$ is decreasing, it is necessary to set the condition $\beta\rho^2\delta^2T<\mu^*_T$. We will provide more details about it in Theorem \ref{AppThm}. 

\vspace{0.4cm} 

\noindent\textbf{CIR process with $\bm{p=\frac{1}{2}}$.} Remarkably, it is interesting to consider a special case of $p=\frac{1}{2}$, which corresponds to the CIR process. In this case, the stochastic process $\{R_\cdot\}$ admits an explicit formula of future distribution. Indeed, let us consider 
\begin{equation*}
        \mathbb{E}\left[e^{\beta\varrho^2\delta^2 T R_s}\right]=\mathbb{E}\left[e^{\beta\varrho^2\delta^2 T \frac{\chi_s}{2C_s}}\right] \quad \text{with} \quad \kappa=p=\frac{1}{2},\quad C_s=\frac{2b}{\left(1-e^{-bs}\right)\sigma^2}
    \end{equation*}
where $\chi_s$ is the non-central chi-squared distribution with $4a/\sigma^2$ degrees of freedom and non-centrality parameter $2C_sR_0e^{-bs}$. Considering the critical moment $\mu^*_s$ of MGF of the non-central chi-squared distribution $\chi_s$ equals to $\frac{1}{2}$, it is required that $\beta\varrho^2\delta^2T/(2C_s)<1/2$. Consequently, if  
    \begin{equation} \label{ConCIR}
        \beta\varrho^2\delta^2T<\frac{2b}{\left(1-e^{-bT}\right)\sigma^2}. 
    \end{equation}
then $\mathbb{E}\left[e^{\beta\varrho^2\delta^2 T R_s}\right]<\infty$ for all $s\in[0,T]$. Moreover, since the conditional expectation is continuous with respect to $s\in[0,T]$, the condition \eqref{ConCIR} is sufficient enough to promise the integral of \eqref{Conh2} is well-defined (integrable).  

Next, let us offer a counterexample to illustrate that a condition akin to \eqref{ConCIR} is essential for the analysis in \eqref{Conh2}. In fact, as \cite{Korn2004} indicates, one can find $\mathbb{E}\left[e^{\beta\varrho^2\delta^2 T R_s}\right]=\infty$ if $b>0$ and 
\begin{equation} \label{counterexample}
        \beta\varrho^2\delta^2T\geq \frac{2b^3T^2e^{bT}}{\sigma^2[2e^{bT}-(1+bT)^2-1]}.   
    \end{equation}
It is obvious that $\frac{2b}{\left(1-e^{-bT}\right)\sigma^2}<\frac{2b^3T^2e^{bT}}{\sigma^2[2e^{bT}-(1+bT)^2-1]}$ for any $b$, $\sigma$, $T>0$. This shows that \eqref{ConCIR} is robust enough to avoid such constructed counterexamples in \cite{Korn2004}. However, it does not mean that \eqref{ConCIR} cannot be further improved. In fact, to utilize our well-posedness result in Theorem \ref{WellposednessBSDE}, we only need to verify \eqref{Conh2}. Ensuring this verification to its fullest extent lies beyond the scope of our study.

\subsubsection*{III. The OU Case when $\bm{p=0}$.} 
Finally, letting $\alpha\to\infty$, we essentially consider a mean-reverting OU process for \eqref{SVMs} and the risky asset price dynamics becomes
\begin{equation*} \label{OUprocess}
\left\{
    \begin{array}{rcl}
        dS_{s}/S_{s}&=&(r+\delta R_{s})ds+ dB_s, \\ 
        dR_s&=&\left(a-bR_s\right)ds+\sigma dB^R_s,
    \end{array}
\right. 
\end{equation*}
then it is clear to identify that 
\begin{equation*}
    \{h^\tau(s,\bm{0})\}_T=-\frac{\delta^2\rho(s)}{\gamma}R^{2}_s, \quad a_s=\frac{\varrho^2\delta^2}{\big(1+\int^T_s\lambda(s,\tau)d\tau\big)^2} R^{2}_s
\end{equation*} 

Similar to the calculation in \eqref{Conh1}-\eqref{Conh2}, one needs to consider the MGF of $\{R^2_\cdot\}$. Since the OU process is a Gaussian process, it is clear that each $R_s$ is normally distributed with expectation $R_0e^{-bs}+\frac{a}{b}(1-e^{-bs})$ and variance $\frac{\sigma^2}{2b}(1-e^{-2bs})$. Hence, the MGF of $R^2_s$ satisfies
\begin{equation} \label{ConExpOU}
    \mathbb{E}\left[e^{\mu R^2_s}\right]=\frac{1}{\sqrt{1-2\mu\mathbb{V}[R_s]}}e^{\frac{\mu\mathbb{E}^2[R_s]}{1-2\mu\mathbb{V}[R_s]}}=\frac{1}{\sqrt{1-2\mu\frac{\sigma^2}{2b}(1-e^{-2bs})}}e^{\frac{\mu(R_0e^{-bs}+\frac{a}{b}(1-e^{-bs}))^2}{1-2\mu\frac{\sigma^2}{2b}(1-e^{-2bs})}}
\end{equation}
For convergence, $\mu\frac{\sigma^2}{2b}(1-e^{-2bs})<\frac{1}{2}$. Combined with \eqref{Conh2}, let $\mu=\beta\varrho^2\delta^2T$, it is required that 
\begin{equation} \label{ConOU}
    \beta\varrho^2\delta^2T<\frac{b}{\left(1-e^{-2bT}\right)\sigma^2}
\end{equation}
such that $\mathbb{E}\left[e^{\beta\varrho^2\delta^2 T R^2_s}\right]<\infty$ for all $s\in[0,T]$. Similarly, due to the continuity of \eqref{ConExpOU} in $s$, one can find that the condition \eqref{ConOU} is good enough to ensure \eqref{Conh2}.

\subsubsection*{The Equilibrium MV Policy under CKLS-SV Models} 
Finally, let us summarize the three cases above and end this section with the following theorem.
\begin{theorem} \label{AppThm}
    Let $p\in[0,1]$. If $\beta>0$ satisfies \eqref{Conbeta} and 
    \begin{equation} \label{GeneralCon}
        \beta\varrho^2\delta^2 T<\frac{b}{\sigma^2\kappa(1-e^{-2b\kappa T})}, 
    \end{equation}
holds, then \eqref{CKLSSys} admits a unique solution $\{(Y^\tau_\cdot,Z^\tau_\cdot)\}_{\tau\in[0,T]}\in\mathbb{M}^c_T(\beta,a_\cdot)$. Furthermore, the equilibrium MV investment policy for the MV objectives \eqref{equiMV} under the CKLS-SV model
\eqref{SVMs} is given by 
\begin{equation} \label{CKLSpolicy}
    \begin{split}
        \widehat{u}(s,R_s) = \frac{\delta\rho(s)}{\gamma}R^{\kappa-\frac{1}{2\alpha}}_s e^{-r_0(T-s)}-\varrho R^{-\frac{1}{2\alpha}}_s\frac{\int^T_s\lambda(s,\tau)Z^\tau_sd\tau+Z^T_s}{1+\int^T_s\lambda(s,\tau)d\tau}e^{-r_0(T-s)}.
    \end{split}
\end{equation}
\end{theorem}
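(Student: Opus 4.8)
The plan is to derive both conclusions by specializing the general well-posedness result of Theorem \ref{WellposednessBSDE} to the CKLS-SV setting and then reading off the policy from the generic representation \eqref{EPincom}. First I would match the data of \eqref{CKLSSys} against the template \eqref{NonlocalBSDEDiff}: the generator is independent of the external parameter $\tau$ and of all $Y$-arguments, and is linear in the $Z$-arguments with the coefficient $\varrho\delta R^\kappa_s/(1+\int^T_s\lambda(s,\tau)d\tau)$ multiplying both the nonlocal term $\int^T_s\lambda(s,\tau)Z^\tau_sd\tau$ and $Z^T_s$. Hence the stochastic Lipschitz conditions \eqref{LipschitzCons} hold with $u_s=l_s=k_s\equiv 0$, $\tau$-modulus $\theta\equiv 0$, and $v_s=|\varrho\delta|R^\kappa_s/(1+\int^T_s\lambda(s,\tau)d\tau)$, which yields exactly $a_s=\varrho^2\delta^2R^{2\kappa}_s/(1+\int^T_s\lambda(s,\tau)d\tau)^2$. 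The zero terminal data lies trivially in $\mathbb{L}^2_T(\beta,a_\cdot;\mathbb{R}^d)$, and condition \eqref{Conbeta} is assumed, so the only substantive hypotheses left are the integrability $\mathbb{E}[e^{\beta\int^T_0 a_sds}]<\infty$ underlying the weighted spaces and the membership $\{h^\tau(s,\bm{0})\}_T\in\mathbb{H}^2_T(\beta,a_\cdot,[0,\cdot];\mathbb{R}^d)$ with $h^\tau(s,\bm{0})=-\delta^2\rho(s)R^{2\kappa}_s/\gamma$.

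Both hypotheses reduce to one pointwise moment-generating-function (MGF) bound, and I would carry out this reduction as in \eqref{Conh1}--\eqref{Conh2}. Using the identity $\int^s_0\tfrac{\beta}{2}a_re^{\frac{\beta}{2}\int^r_0a_udu}dr=e^{\frac{\beta}{2}\int^s_0a_rdr}-1$ together with the boundedness of $\rho$ and $\lambda$ (so that $|h^\tau(s,\bm{0})|\le C\,a_s$ for a constant $C$), the $\mathbb{H}^2_T$-norm is controlled by $\mathbb{E}[(e^{\frac{\beta}{2}\int^T_0a_sds}-1)^2]\le\mathbb{E}[e^{\beta\int^T_0a_sds}]$. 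Since $a_s\le\varrho^2\delta^2R^{2\kappa}_s$ and Jensen's inequality gives $\mathbb{E}[e^{\beta\varrho^2\delta^2\int^T_0R^{2\kappa}_sds}]\le\tfrac1T\int^T_0\mathbb{E}[e^{\beta\varrho^2\delta^2TR^{2\kappa}_s}]ds$, everything collapses to showing $\sup_{s\in[0,T]}\mathbb{E}[e^{\beta\varrho^2\delta^2TR^{2\kappa}_s}]<\infty$; continuity of the MGF in $s$ then makes the outer time-integral finite.

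The heart of the argument, and the step I expect to be the main obstacle, is establishing this MGF finiteness for the \emph{general} CKLS exponent $p\in[0,1]$, since outside the CIR ($p=\tfrac12$) and OU ($p=0$) cases no closed-form density is available. My plan is to apply It\^o's formula to $V_s:=R^{2\kappa}_s$; using $\kappa=1-p$ one finds that $V_s$ solves the CIR-type equation $dV_s=[2\kappa aV_s^{1-1/(2\kappa)}-2\kappa bV_s+\kappa(2\kappa-1)\sigma^2]ds+2\kappa\sigma\sqrt{V_s}\,dB^R_s$, whose mean-reversion speed $2\kappa b$ and diffusion coefficient $2\kappa\sigma\sqrt{V_s}$ dictate the explosion point of $\mathbb{E}[e^{\mu V_s}]$. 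Comparing with the affine/Riccati structure of the CIR MGF, the critical moment at time $s$ equals $b/[\kappa\sigma^2(1-e^{-2\kappa bs})]$, which is decreasing in $s$, so $s=T$ is the binding constraint and \eqref{GeneralCon} is precisely $\beta\varrho^2\delta^2T<b/[\kappa\sigma^2(1-e^{-2\kappa bT})]$. The delicate point is that the transformed drift carries the sublinear term $2\kappa aV_s^{1-1/(2\kappa)}$, which is not of pure affine CIR form; I would handle this either by a supermartingale (Lyapunov) argument, showing $e^{\phi(s)R^{2\kappa}_s}$ is integrable whenever $\phi$ stays below the blow-up time of the associated Riccati ODE, or simply by recording the three exact verifications already in the text --- the degenerate $p=1$ case needs only $\int^T_0\rho(s)ds<\infty$ since the generator is deterministic, the CIR computation reproduces \eqref{ConCIR}, and the Gaussian computation \eqref{ConExpOU} reproduces \eqref{ConOU} --- and checking that \eqref{GeneralCon} specializes to each.

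With the hypotheses verified, Theorem \ref{WellposednessBSDE} delivers the unique solution $\{(Y^\tau_\cdot,Z^\tau_\cdot)\}_{\tau\in[0,T]}\in\mathbb{M}^c_T(\beta,a_\cdot)$. Finally I would obtain \eqref{CKLSpolicy} by direct substitution into the general equilibrium policy \eqref{EPincom}: for the CKLS-SV model \eqref{SVMs} one has $\mu(s,R_s)-r_0=\delta R^{(1+2\kappa\alpha)/(2\alpha)}_s$ and $\sigma(s,R_s)=R^{1/(2\alpha)}_s$, so that $(\mu-r_0)/\sigma^2=\delta R^{\kappa-1/(2\alpha)}_s$ and $1/\sigma=R^{-1/(2\alpha)}_s$; plugging these into the myopic and hedging terms of \eqref{EPincom} gives exactly the two terms of \eqref{CKLSpolicy}, with the $Z$-processes being those of the now well-posed BSDE \eqref{CKLSSys}.
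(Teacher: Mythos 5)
Your proposal is correct and follows essentially the same route as the paper: identify the stochastic Lipschitz data $v_s$ and $a_s=v_s^2$ of \eqref{CKLSSys}, reduce the hypotheses $\{h^\tau(s,\bm{0})\}_T\in \mathbb{H}^{2}_T(\beta,a_\cdot,[0,\cdot];\mathbb{R}^d)$ and $\mathbb{E}\big[e^{\beta\int^T_0a_sds}\big]<\infty$ to the single bound $\sup_{s\in[0,T]}\mathbb{E}\big[e^{\beta\varrho^2\delta^2TR^{2\kappa}_s}\big]<\infty$ exactly as in \eqref{Conh1}--\eqref{Conh2}, invoke Theorem \ref{WellposednessBSDE}, and obtain \eqref{CKLSpolicy} by substituting $(\mu-r_0)/\sigma^2=\delta R^{\kappa-\frac{1}{2\alpha}}_s$ and $1/\sigma=R^{-\frac{1}{2\alpha}}_s$ into \eqref{EPincom}.

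The one step where you genuinely depart from the paper is the justification of the MGF threshold for general $p\in(0,1)$: the paper does not prove it, but cites \cite{aly2015moment} for the fact that the critical moment of $R^{2\kappa}_s$ is decreasing in $s$ and equals the right-hand side of \eqref{GeneralCon} at $s=T$. Your self-contained plan is sound in its computational core: with $\kappa=1-p$ the exponents collapse ($2\kappa-2+2p=0$ and $2\kappa-1+p=\kappa$), so $V_s=R^{2\kappa}_s$ indeed has square-root diffusion $2\kappa\sigma\sqrt{V_s}$ and mean-reversion speed $2\kappa b$, and matching against the CIR critical moment $2B/[\Sigma^2(1-e^{-Bs})]$ with $B=2\kappa b$, $\Sigma=2\kappa\sigma$ reproduces \eqref{GeneralCon} exactly. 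What remains only sketched is the handling of the non-affine drift $2\kappa aV_s^{1-1/(2\kappa)}$ (singular near zero when $p\geq\tfrac12$), for which the Lyapunov/supermartingale argument would have to be carried out in detail; note also that your fallback of recording the verifications for $p\in\{0,\tfrac12,1\}$ does not cover the full range $p\in[0,1]$ claimed by the theorem. Since the paper delegates precisely this point to \cite{aly2015moment}, this is not a gap relative to the paper's own proof --- but be aware that the citation is carrying the real weight there, and one must either keep that citation or complete your Riccati/supermartingale argument to close the theorem.
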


The inequality in \eqref{GeneralCon} from \cite{aly2015moment} prevents a moment explosion of the CKLS process. Given our well-posedness result in Theorem \ref{WellposednessBSDE}, proving the statements in Theorem \ref{AppThm} becomes straightforward. Next, let us make some interesting observations for the equilibrium MV policy \eqref{CKLSpolicy} under the CKLS models: (I) the general condition of \eqref{GeneralCon} for $p\in[0,1]$ is consistent with the special cases of the CIR model \eqref{ConCIR} with $p=\frac{1}{2}$ and the OU process \eqref{ConOU} with $p=1$; (II) the value on the right hand side of \eqref{GeneralCon} can be infinite as $p\to 1$ (i.e. $\kappa\to 0$), but this still makes sense, indicating that there are no restrictions on the parameters on the left hand side of the condition \eqref{GeneralCon}. This flexibility is due to our analysis of this specific example (Case $p=1$), which revealed that the corresponding BSDE reduces to a standard Lipschitz condition, as opposed to a stochastic Lipschitz condition. Thus, there is no need to limit the parameter ranges to maintain the validity of its MGF function. The first component of \eqref{CKLSpolicy} arises from the myopic demand, while the second stems from the hedging demand. The correlation parameter $\varrho$ captures the extent of market incompleteness in the economy. This effect is most pronounced in complete markets, where $\varrho = \pm 1$, and disappears with zero correlation, $\varrho = 0$. Notably, the parameter $\rho(s)$, which balances risk and return in real time, directly impacts the myopic component and indirectly affects the hedging term via the FBSDE system \eqref{CKLSSys}. In contrast, the weighting parameter $\lambda(s,\tau)$, which balances past and future considerations, exclusively influences the hedging component.

\section{Numerical Studies} \label{sec:Num}
In this section, we numerically illustrate our theoretical framework, especially for solving the nonlocal BSDE \eqref{CKLSSys} and implementing the equilibrium MV policies of the form \eqref{CKLSpolicy}. To this end, we first develop a numerical scheme for \eqref{CKLSSys}. Thanks to the well-posedness results we established in Theorem \ref{AppThm} (or Theorem \ref{WellposednessBSDE}), the convergence of the numerical scheme to the unique BSDE solution is guaranteed. We then compare the performance between different equilibrium policies under different models. The assumptions in Theorem \ref{AppThm} have been verified for all problem setups in this section.

\subsection{Numerical Scheme}
The solution to the BSDE (\ref{CKLSSys}) can be approximated with a direct implementation of a least-squares-regression-based backward Euler Scheme as in \cite{gobet_regression-based_2005} with some modifications; see Algorithm \ref{alg:0} below. The horizon $T$ is discretized into $N+1$ time points with equal interval $\Delta_t = \frac{T}{N}$.
For each time point $t$ and index $\tau$, we take $\mathbb{R}^K$-valued determinitic function bases $q^{\tau}_{t}$, whose elements are given by a sequence of Laguerre polynomials of size $K$. The solution of the BSDE at $(\tau, t)$ is then approximated by $\hat{Y}^{\tau, N, K, M}_t(x) \coloneqq p^{\tau, Y}_{t} \cdot q^{\tau}_{t}(x)$ and $\hat{Z}^{\tau, N, K, M}_t(x) \coloneqq p^{\tau, Z}_{t} \cdot q^{\tau}_{t}(x)$. 

\begin{algorithm}[!ht]
\caption{Least-Squares-Regression-based Backward Euler Method}
\label{alg:0}
\begin{algorithmic}[1] 
\STATE {\bfseries Input:} Number of trajectories, M, number of Picard iterations, I.
\STATE {\bfseries Output:} Solution to the BSDE:$Y$, $Z$.
\STATE Initialize coefficients $p^{\cdot, \cdot}_{\cdot, \cdot}$;
\STATE Generate $M$ sample CKLS process trajectories, $\{R_i\}_{i = 1, 2, \cdots, M}$ with Wiener process, $\{B_i\}_{i = 1, 2, \cdots, M}$;
\FOR {$t \leftarrow N-1, N-2, \cdots, 0$}
\FOR {$\tau \leftarrow N-1, N-2, \cdots, t+1$}
\FOR {$k \leftarrow 1, 2, \cdots, I$}
\STATE 
\begin{align*}
    p^{\tau, Y}_{t, k}, p^{\tau, Z}_{t, k} \leftarrow 
\mathop{\arg\min}\limits_{p^Y, p^Z} \frac{1}{M} \sum_{m=1}^{M} [&\hat{Y}^{\tau, N, K, M}_{t+1} - p^{\tau, Y}_{t} \cdot q^{\tau}_{t}(R_{m, t\Delta_t}) + \Delta_t h \\
   &- p^{\tau, Z}_{t} \cdot q^{\tau}_{t}(R_{m, t\Delta_t}) dB_{m, t\Delta_t}]^2, \text{ where}
\end{align*} 
\begin{align*}
h = h(&\tau, t, 
p^{\tau, Y}_{t, k-1} \cdot q^{\tau}_{t},
\sum_{n = t}^{N} \phi(t, n) \hat{Y}^{n, N, K, M}_t,
\hat{Y}^{\tau, N, K, M}_N,\\
&p^{\tau, Z}_{t, k-1} \cdot q^{\tau}_{t},
\sum_{n = t}^{N} \varphi(t, n) \hat{Z}^{n, N, K, M}_t,
\hat{Z}^{\tau, N, K, M}_N
)    
\end{align*} \label{loss_in_algo}
\ENDFOR
\STATE $\hat{Y}^{\tau, N, K, M}_t \leftarrow (p^{\tau, Y}_{t, I} \cdot q^{\tau}_{t})$
\STATE $\hat{Z}^{\tau, N, K, M}_t \leftarrow (p^{\tau, Z}_{t, I} \cdot q^{\tau}_{t})$
\ENDFOR
\ENDFOR
\State Return $\hat{Y}^{\cdot, N, K, M}_\cdot$ and $\hat{Z}^{\cdot, N, K, M}_\cdot$
\end{algorithmic}
\end{algorithm}

Loosely speaking, the algorithm (\ref{alg:0}) can be viewed as executing the algorithm in \cite{gobet_regression-based_2005} for each $\tau$ simultaneously. With reference to Figure \ref{fig:nBSDE}, the estimation starts from the top-right tip of the shaded triangle and iterates from right to left ($t = N, \cdots, 0$) and for each time $t$, from top to bottom ($\tau = N, \cdots, t$). The validity of the fixed-point argument at each $(t, \tau)$ has been proven in \cite{gobet_regression-based_2005}.
Besides, the loss function in line (\ref{loss_in_algo}) can be read as the mean-squared-error of the difference between the estimated $Y$ at next time step and the current estimated $Y$ plus current estimated $dY$. The function $h$ in the algorithm is given by (\ref{NonlocalBSDEDiff}). Note that, for simplicity in presentation, the dependence of $(Y,Z)$ estimates on $R_m$ is omitted.

\subsection{Numerical Result} \label{Sect_NR}
Three sets of experiments were conducted to illustrate how our result can be applied to solve for the equilibrium MV policy in an incomplete market setting, where the stock price follows the CKLS-SV model of (\ref{SVMs}). We consider three problems, where the simulated data come from the CKLS-SV model of (\ref{SVMs}) with the SV process $R_s$ being the CIR process (\textbf{Problem A}), the OU process (\textbf{Problem B}), and the CKLS process with $p$ equals to values other than 0.5 (\textbf{Problem C}), respectively. In the first set of experiments (\textbf{Problems A} and \textbf{B}), we adopt the equilibrium MV policies under the corresponding true SV models, while the objective is to show that our nonlocal BSDE approach yields consistent results with that of the existing literature \cite{Basak2010} under the special cases. In the second set of experiments (\textbf{Problem C}), we evaluate the equilibrium MV policies under different SV models, whose parameters are estimated with the simulated data, to illustrate the impact of choice of market model on the equilibrium MV policies. Finally, we examine the change of the discounting factor in the objective from constant $1$ to exponential discounting to show that the hedging demand in \textbf{Problem B} is dependent on discounting factor.


\subsubsection{Consistency with Analytical Solution}
The market environment specifications in both \textbf{Problems A} and \textbf{B} are the same as the ones in \cite{Basak2010} and we set $N=10$. Specifically, with objective function, 
\begin{equation}\label{numericalObj_w/oDiscount}
    J(s,w;u)=\mathbb{E}^{\mathcal{F}_s}\left[W_T-2W_T^2\right]+2\left(\mathbb{E}^{\mathcal{F}_s}[W_T]\right)^2,
\end{equation}
where $W_T$ denotes the terminal wealth at the end of horizon, \textbf{Problems A} is to solve for the equilibrium MV policies under the SV model whose dynamics is described by the following,
\begin{equation*}
\left\{
    \begin{array}{l}
     \displaystyle{\frac{dS_s}{S_s} = (0.03 + 0.0811)ds  + R_s^{-\frac{1}{2}}dB_s}, \\ 
     \displaystyle{dR_s = (9.4251-0.3374 R_s)ds + 0.6503 R_s^{\frac{1}{2}}dB^R_s},
    \end{array}
\right. 
\end{equation*} 
and \textbf{Problems B} is to solve for that under the SV model with the following dynamics,
\begin{equation*}
\left\{
    \begin{array}{l}
     \displaystyle{\frac{dS_s}{S_s} = (0.0014 + \delta R_s)ds  + 1dB_s}, \\ 
     \displaystyle{dR_s = (0.021276-0.27R_s)ds + 0.065 dB^R_s}.
    \end{array}
\right. 
\end{equation*}
These two dynamics are obtained by substituting $r= 0.03, \kappa = 0.5, \delta = 0.0811, \alpha = -1$ (\textbf{Problems A}) and $r= 0.0014, \kappa = 1, \delta = 1, \alpha = 1$ (\textbf{Problems B}) into (\ref{SVMs}) in Section \ref{subsec:apps}, while the objective function is obtained by substituting $\gamma = 4, C(\cdot) = H(\cdot) = 0, F(s,y,x) = x, G(s,y,x) = x^2$ into the general TIC objective function (\ref{GeneralTIC}) in Section (\ref
{sec:tic}).

Algorithm \ref{alg:0} is run to solve the BSDE (\ref{CKLSSys}) corresponding to \textbf{Problems A} and \textbf{B} and the numerical solutions (``Ours") obtained were compared with the analytical solution \eqref{simpleSDU} given in \cite{Basak2010}. Figures \ref{fig:CIR_Numerical} and \ref{fig:OU_Numerical} illustrate two forms of comparison performed: a) compare the control given by each policies for a arbitrary SV process trajectory and b) estimate and compare the conditional expected values of the objective, $\mathbb{E}_s[J(s, 1; u)]$, for $s = 0, 0.1, \cdots, 1$. The conditional expected values of objective are estimated from $M = 10000$ simulated wealth trajectories $\{W^i_t\}_{i = 1,\cdots, M}$ with the following expression:
\begin{equation*}
    \mathbb{E}_s[J(s, 1; u)] \approx \frac{1}{M}\sum_{i = 1}^M \frac{W^i_T}{W^i_s} - \frac{\gamma}{2M}\sum_{i = 1}^M \left( \frac{W^i_T}{W^i_s} - \frac{1}{M}\sum_{j = 1}^M \frac{W^j_T}{W^j_s} \right) ^ 2.
\end{equation*}
Note that such an approximation is only valid when the dynamics of the SV models are Markovian and independent of the state variable.

 A slight deviation in the control for the early time interval ($t < 0.3$) is observed in Figure (\ref{fig:CIR_Numerical}) (a) and (\ref{fig:OU_Numerical}) (a), but it is expected due to the accumulated discretization error from the backward solving procedure for the BSDE. However, the two curves of conditional expectations are completely overlapped, which suggests that such a tiny deviation in control has no significant impact on the overall performance of the policy across the horizon. The results confirm the accuracy of the algorithm.

\begin{figure}[h]
    \centering
    \subfigure[]{
    \includegraphics[width=0.4\textwidth]{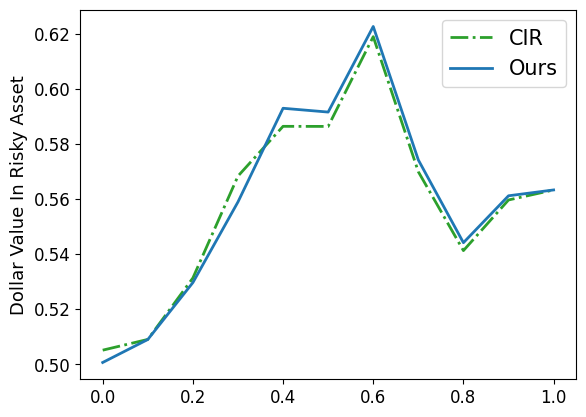}} 
    \subfigure[]{\includegraphics[width=0.4\textwidth]{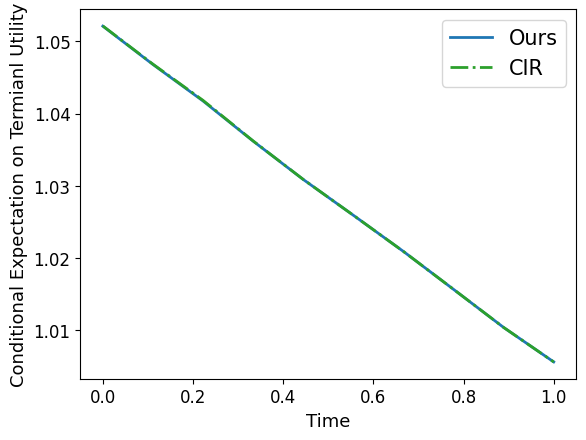}}
    \caption{[\textbf{Problem A}] (a) Equilibrium policies of an arbitrary SV process trajectory; (b) the conditional expectation on terminal utility corresponding to the equilibrium MV policies.}
    \label{fig:CIR_Numerical}
\end{figure}

\begin{figure}[h]
    \centering
    \subfigure[]{\includegraphics[width=0.4\textwidth]{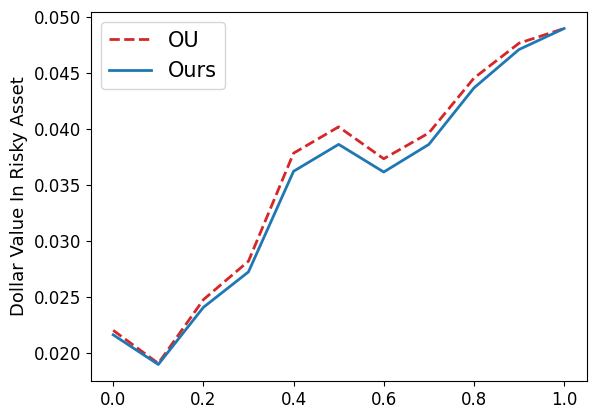}} 
    \subfigure[]{\includegraphics[width=0.4\textwidth]{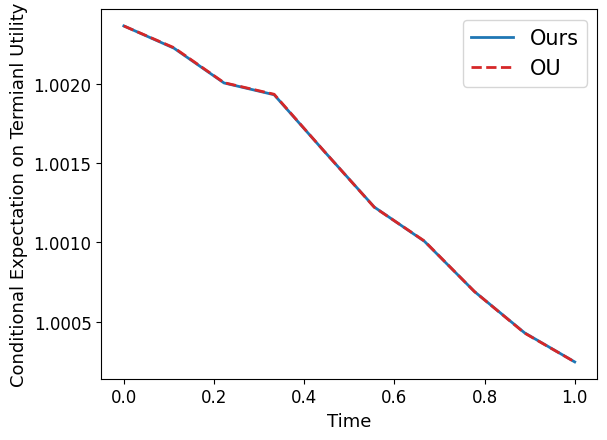}}
    \caption{[\textbf{Problem B}] (a) Equilibrium policies of an arbitrary SV process trajectory; (b) the conditional expectation on terminal utility corresponding to the equilibrium MV policies.}
    \label{fig:OU_Numerical}
\end{figure}

\subsubsection{CKLS-SV Model}
In the second set of experiments, it is assumed that the SV process $R_s$ in (\ref{SVMs}) follows a CKLS process with $p \neq 0.5$. The control policies of two agents are compared: one agent knows the true parameters for the CKLS model and solve for her equilibrium MV policy numerically by applying Algorithm \ref{alg:0} to \textbf{Problem C}, $\mathcal{P}_C(p)$, which consider a SV model whose dynamics is described by the following,
\begin{equation*}
\left\{
    \begin{array}{l}
     \displaystyle{\frac{dS_s}{S_s} = (0.03 + 0.0811 R_s^{\frac{1-2p}{2}})ds  + R_s^{-\frac{1}{2}}dB_s}, \\ 
     \displaystyle{dR_s = (9.4251-0.3374 R_s)ds + 0.6503 R_s^pdB^R_s},
    \end{array}
\right. 
\end{equation*}
with the same objective function as \textbf{Problem A} and \textbf{B}.
Meanwhile, the another agent believes the SV process $R_s$ follows a CIR process and tries to calibrate the corresponding CIR-SV model parameters according to the $M = 10000$ trajectories of the SV process observed (simulated), $\{R^i_t\}_{i = 1,\cdots, M}$, and adopts the analytical equilibrium policy with the calibrated parameters. That said, the later agent obtains her MV policy by computing the analytical policy given in \cite{Basak2010} to a problem with SV model dynamics described as follows,
\begin{equation*}
\left\{
    \begin{array}{l}
      \displaystyle{\frac{dS_s}{S_s} = (0.03 + 0.0811)ds  + R_s^{-\frac{1}{2}}dB_s}, \\ 
     \displaystyle{dR_s = (\hat{a}^{\{R^i_t\}}-\hat{b}^{\{R^i_t\}} R_s)ds + \hat{\sigma}^{\{R^i_t\}} R_s^pdB^R_s},
    \end{array}
\right. 
\end{equation*}
where $\hat{a}^{\{R^i_t\}}, \hat{b}^{\{R^i_t\}},$ and $ \hat{\sigma}^{\{R^i_t\}}$ are ordinary least squares estimator of CIR model parameters according to the observed trajectories, $\{R^i_t\}_{i = 1,\cdots, M}$.

For $p \in \{0.1, 0.3, \cdots, 0.9\}$, the equilibrium control policies of these two agents are shown in Figure \ref{fig:CKLS_Numerical} (a). It is observed that the equilibrium policies computed using explicit form given in \cite{Basak2010} with calibrated CIR parameters do not differ significantly from each other across various $p$, while the equilibrium policies obtained from Algorithm \ref{alg:0} are distinguishable from each other. This suggests that the parameter calibration of CIR model is not sufficiently responsive to the change of $p$ value and cannot adjust the equilibrium policy to the correct position. As a result, the estimated expected values of terminal utility are significantly lowered when $p \neq 0.5$ as illustrated in Figure \ref{fig:CKLS_Numerical} (b). Hence, modeling with calibrated CIR model is not accurate if the underlying model in the real market is a CKLS model with $p \neq 0.5$. A CKLS model is needed in general and this paper provides the first theoretical framework which can handle nonlocal BSDE arises from dynamic MV asset allocation problem under a general stochastic factor models.

\begin{figure}
    \centering
    \subfigure[]{\includegraphics[width=0.8\linewidth]{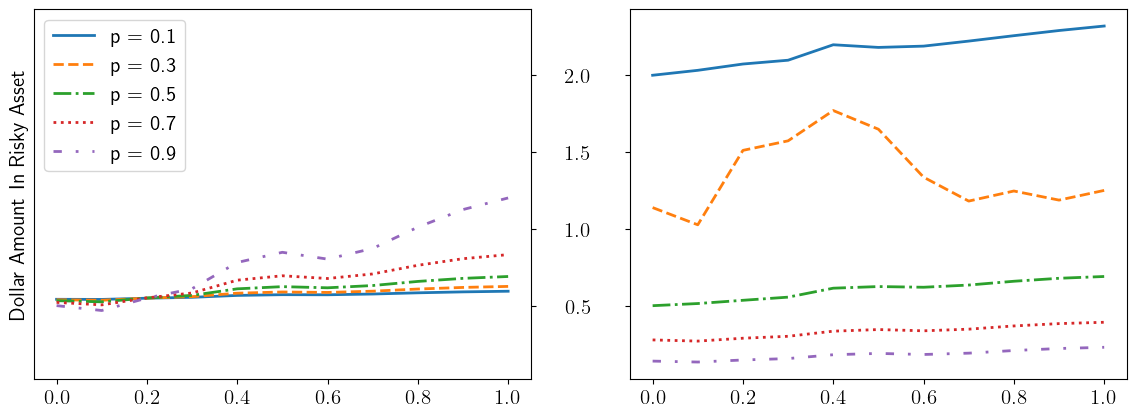}} 
    \subfigure[]{\includegraphics[width=0.8\linewidth]{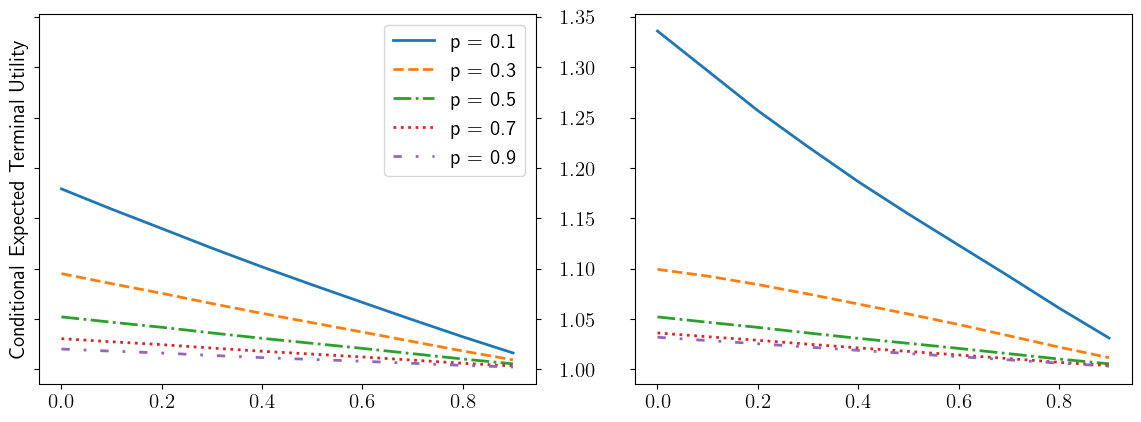}} 
    \caption{[\textbf{Problem C}] (a) Equilibrium control policy by explicit CIR-model-based formula with calibrated CIR parameters (left) and by numerical solution with CKLS-SV model (right), and (b) the corresponding conditional expected terminal utility across the horizon.}
    \label{fig:CKLS_Numerical}
\end{figure}

\subsubsection{Exponential Discounting Factor}

In the two sets of numerical experiments above, the objective function does not take discounting factor into account. To study the impact of discounting factor on hedging demand, the \textbf{Problem B} is studied again with the following objective function 
\begin{equation}\label{objectiveDiscount}
    J(s,w;u)=\int^T_s \eta(s, \tau) \mathbb{MV}^{\mathcal{F}_s}[W^u_\tau]d\tau + \mu(s,T)\mathbb{MV}^{\mathcal{F}_s}[W^u_T],
\end{equation}
where $\eta(s,\tau) \coloneqq e^{-\lambda_{\text{coef}}(\tau-s)}$ and $\mu(s,T) \coloneqq e^{-\lambda_{\text{coef}}(T-s)}$ for $\lambda_{\text{coef}} \in \{0.2, 0.5, 0.8\}$.
By applying Algorithm \ref{alg:0}, the equilibrium MV policies are obtained, denoted by $u_{\text{dist}}$, and they are compared with the equilibrium MV policy, $u_{B}$, which solves \textbf{Problem B} objective (\ref{numericalObj_w/oDiscount}). Recall that the equilibrium MV policy can be broken down into a myopic term and a hedging terms (see \eqref{EPincom} or \eqref{CKLSpolicy}) and the myopic term is independent of both discounting factor $\lambda(t,s)$ and BSDE solution $Y(t,s)$ and $Z(t,s)$. Hence, the differences in equilibrium MV policies in Figure \ref{fig:running_diff} reveal the differences in their hedging demands. The relative differences in hedging demand are computed as $\frac{u_{B}(s, R_s) - u_{\text{dist}}(s, R_s)}{\textbf{Hdeging}(u_{B}(s, R_s))}$ at each time point $s$ and plotted in Figure \ref{fig:running_diff} (b) after being averaged over $10000$ trajectories of simulated SV process $\{R^i_s\}_{i=1}^{10000}$.

\begin{figure}[!ht]
    \centering
    \subfigure[]{\includegraphics[width=0.4\textwidth]{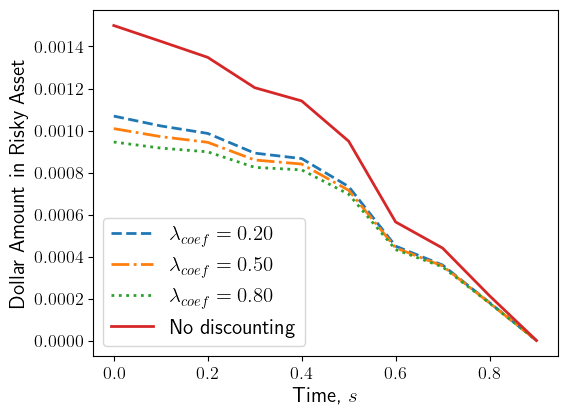}} 
    \subfigure[]{\includegraphics[width=0.4\textwidth]{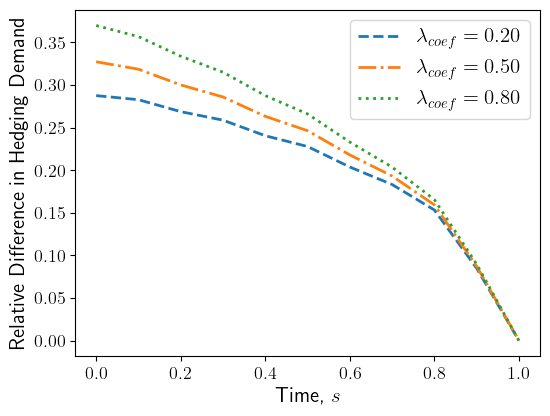}} 
    \caption{(a) A trajectory of hedging demand for various discounting factors, (b)average relative differences in hedging demands between objectives with exponential discounting factor (\ref{objectiveDiscount}) and objective without discounting (\ref{numericalObj_w/oDiscount}).}
    \label{fig:running_diff}
\end{figure}

It can be observed in Figure \ref{fig:running_diff} that the difference decreases to $0$ as time approaches the maturity $T$, which is in line with the intuition that when getting closer to the maturity, uncertainty vanishes and there is less need to hedge. 
In addition, it is also observed that smaller $\lambda_{\text{coef}}$ in the discounting factor leads to smaller average difference. This is reasonable for two reason. On one hand, if we divide the objective (\ref{objectiveDiscount}) by the positive weight function $\mu(s,T)$, the objective function becomes
\begin{equation*}
    \frac{J(s,w;u)}{\mu(s, T)}=\int^T_s \frac{\eta(s, \tau)}{\mu(s, T)} \mathbb{MV}^{\mathcal{F}_s}[W^u_\tau]d\tau + \mathbb{MV}^{\mathcal{F}_s}[W^u_T].
\end{equation*}
Since $\lim_{\lambda_{\text{coef}}\rightarrow-\infty} \frac{\eta(s, \tau)}{\mu(s, T)} = \lim_{\lambda_{\text{coef}}\rightarrow-\infty}e^{-\lambda_{\text{coef}}(\tau - T)} \equiv 0$ and $\mu(s,T) > 0 $ for all $s \in [0, T]$, the objective (\ref{objectiveDiscount}) is the equivalent to the objective without discounting (\ref{numericalObj_w/oDiscount}) when $\lambda_{\text{coef}}$ tends to negative infinity. On the other hand, when $\lambda_{\text{coef}} > 0$, smaller $\lambda_{\text{coef}}$ implies the future values are discounted at a slower rate and less weight is put on MV criterion in the near future as compared to the weight on terminal MV criterion, meaning the agent is longsighted and facing greater uncertainty. Thus, it is reasonable for the equilibrium MV policy hedges for a larger amount. When $\lambda_{\text{coef}} \leq 0$, $\eta(s, \tau)$ and $\mu(s,T)$ are no longer discounting factor (as they are greater than one), but it is still the case that the more negative $\lambda_{\text{coef}}$ is, the agent is more longsighted and hedges for a larger amount.

Through these three sets of experiments, our numerical algorithm is shown to be feasible in terms of its computational efficiency. Its accuracy is also validated with the analytical solution in the literature when the stochastic factor process is modeled by the CIR process or the OU process. Meanwhile, the numerical results illustrate the irreplaceability of CLKS-SV model and the impact of discounting factor on hedging demand.





\section{Concluding Remarks} \label{sec:cons}
In conclusion, we employ a game-theoretic approach to dynamic MV asset allocation in a general incomplete-market economy for investors with time-varying risk-return trade-off and for the objectives with general discounting factors in the MV criteria throughout the investment horizon. We provided a straightforward and fully analytical characterization of the equilibrium MV policy in a continuous-time incomplete market. Our probabilistic analysis has significant advantages. It establishes the well-definedness of the equilibrium MV policy without requiring explicit conditional expectations or moment-generating functions, enabling a much broader exploration of stochastic-volatility models. This approach stays effective even when conditional expectations are not differentiable, significantly relaxing constraints from previous studies. Additionally, we examine a new nonlocal BSDE system, proving its well-posedness through tailored Banach spaces and fixed-point theory, demonstrating that this framework can be applied to a wide range of financial scenarios. The system is also accompanied by an algorithm for numerical solutions.
 

Beyond the main body of this paper, we discuss several appropriate extensions inspired by \cite{Basak2010}: (1) Discrete-time \& multiple-stock formulation: We can examine the MV asset allocation problem within a discrete-time framework that encompasses multiple stocks and state variables. (2) Stochastic interest rates \& local volatility model: It is possible to incorporate stochastic interest rates and refine our setting to a complete market using the constant elasticity of variance (CEV) model for stock prices. (3) Additionally, by following \cite{Bjoerk2014a}, we aim to explore an intriguing topic: the dynamic MV portfolio problem with state-dependent risk aversion in both complete and incomplete market settings. Through this exploratory discussion, we seek to highlight the potential for further investigation and development within our MV analytical framework and nonlocal BSDE analysis.


\subsection*{\textbf{State-dependent MV problems}}
In our previous analysis of initial-time-dependent MV problems in complete and incomplete markets, one can find that the equilibrium policies \eqref{EPComplete} and \eqref{EPincom} are both independent of the current wealth $W_s$. This aspect somewhat needs improvement in practical financial applications. Inspired by \cite{Bjoerk2014a}, we consider a MV model with state-dependent risk aversion. Let us introduce the Markowitz MV operator defined as follows:  
\begin{equation*}
    \mathbb{SV}^{\mathcal{F}_s}_{w}[\cdot]:=\rho(s)\mathbb{E}^{\mathcal{F}_s}[\cdot]-\frac{1}{w}\mathbb{V}^{\mathcal{F}_s}[\cdot], 
\end{equation*}
This operator not only considers both returns and risks but also has the capability to dynamically adjust the relative importance of risk and return based on the timing $s$ and wealth amount $w$ of investor decision-making. Next, we consider 
\begin{equation*}
    J(s,w,r;u)=\int^T_s\eta(s,\tau)\mathbb{SV}^{\mathcal{F}_s}_{w}[W_\tau]d\tau+\mu(s,T)\mathbb{SV}^{\mathcal{F}_s}_{w}[W_T], 
\end{equation*}
where $\eta(s,\tau)$ and $\mu(s,T)$ represent the general nonexponential discounting. In a similar argument, one has 
\begin{equation*}
    \begin{split}
        J(s,w,r;u)=&\mathbb{E}_{s}\left[\int^T_s \lambda(s,\tau)\overline{\Phi}(s,w,W_\tau)d\tau+\overline{\Phi}(s,w,W_T)\right]\\
        & +\int^T_s \lambda(s,\tau)\overline{\Psi}\left(\mathbb{E}_{s}[W_\tau]\right)d\tau+\overline{\Psi}\left(\mathbb{E}_{s}[W_T]\right),         
    \end{split}
\end{equation*}
where $\overline{\Psi}(s,W,w)=\rho(s)Ww-\frac{\gamma}{2}w^2$, and $\Psi(w)=\frac{\gamma}{2}w^2$. Compared with the previous analysis of MV problems with the argument $w$ separable, it is more convenient to directly remove the terms $\mathbb{A}^aV(s,y)$, $\mathbb{A}^af(s,s,y,y)$, $H(s,s,y,\psi(y))$, $\int^T_s\mathbb{A}^aHd\tau$, and $\mathbb{A}^aG$ in the $V$-equation of \eqref{HJBSys} according to \eqref{V}. Consequently, in this case, the extended HJB system \eqref{V} reads  
\begin{equation} \label{xHJBSys}
\left\{
    \begin{array}{l}
        \lambda(s,s)\rho(s)w^2-\frac{\gamma}{2}\lambda(s,s)w^2+\displaystyle{\sup\limits_{a\in\mathcal{U}}\Big\{\mathbb{A}^a f(t,s,W,w,r)|_{t=s,W=w}} \\
        \displaystyle{\qquad +\gamma\int^T_s\lambda(s,\tau)g^\tau(s,w,r)\cdot\mathbb{A}^a g^\tau(s,w,r) d\tau +\gamma g^T(s,w,r)\cdot\mathbb{A}^a g^T(s,w,r) \Big\}=0, } \\


        \displaystyle{ \mathbb{A}^{\widehat{u}} f(t,s,W,w,r)+\lambda(t,s)\rho(t)Ww-\frac{\lambda(t,s)\gamma}{2}w^2 = 0,} \\

        ~ \\

        \displaystyle{ \mathbb{A}^{\widehat{u}} g^\tau(s,w,r) = 0,} 

        
    \end{array}
\right. 
\end{equation} 
with the terminal conditions: $V(T,w,r)=\rho(T)w^2$, $f(t,T,W,w,r)=\rho(t)Ww-\frac{\gamma}{2}w^2$, and $g^\tau(\tau,w,r)=w$. Inspired by the terminal conditions of $f$ and $g^\tau$, it is natural to make the Ansatz: 
\begin{equation} \label{xAnsatz}
\left\{
    \begin{array}{l}

        f(t,s,W,w,r)=c(t,s,r)Ww-\frac{\gamma}{2}b(t,s,r)w^2, \\

        ~ \\

        g^\tau(s,w,r)=a^\tau(s,r)w,  
    \end{array}
\right. 
\end{equation} 
Then \eqref{xHJBSys} and \eqref{xAnsatz} give us the equilibrium policy 
\begin{equation} \label{xEquiPolicy}
    \begin{split}
        \widehat{u}(s,w,r)=&\frac{1}{\gamma\sigma^2(s,r)}\frac{1}{b(s,s,r)}\Bigg\{\beta(s,r) c(s,s,r)-\gamma\beta(s,r) b(s,s,r) \\
        &   +\varrho (n\sigma)(s,r) c_r(s,s,r)-\gamma \varrho (n\sigma)(s,r) b_r(s,s,r)) \\
        &   +\gamma\int^T_s\lambda(s,\tau)\Big[\beta(s,r) (a^\tau)^2(s,r)+\varrho (n\sigma)(s,r) (a^\tau a^\tau_r)(s,r)\Big]d\tau \\
        &   + \gamma \Big[\beta(s,r) (a^T)^2(s,r)+\varrho (n\sigma)(s,r) (a^Ta^T_r)(s,r)\Big]\Bigg\}w \\ 
        & := \phi(s,r)w. 
    \end{split}
\end{equation}
It is worth noting that the dependence of \eqref{xEquiPolicy} upon the current wealth $w$ makes it significantly different from the previous ones \eqref{EPComplete} and \eqref{EPincom}. By substituting \eqref{xEquiPolicy} into \eqref{Completedynamics}, the wealth dynamics of $W_s$ looks like a geometric Brownian motion. Furthermore, by computing the conditional expectations of $W_\tau$ and $W^2_\tau$ controlled by the process $\phi(s,R_s)W_s$, 
the probabilistic interpretations \eqref{ProbInter} of $f$ and $g^\tau$ indicate  
\begin{equation*} 
\left\{
    \begin{array}{l}
        \begin{aligned}
        c(t,s,r)=& \int^T_s\lambda(t,\tau)\rho(t)a^\tau(s,r)d\tau+\rho(t)a^T(s,r) \\
        =&\int^T_s\lambda(t,\tau)\rho(t) \widetilde{Y}^\tau_sd\tau+\rho(t) \widetilde{Y}^T_s,            
        \end{aligned}
        \\

        ~ \\ 
        \begin{aligned}
        b(t,s,r) =& \int^T_s\lambda(t,\tau)\mathbb{E}_{s,r}\left[e^{2\int^\tau_s(r_0+\beta_\epsilon\phi_\epsilon-\frac{1}{2}\sigma^2_\epsilon\phi^2_\epsilon)d\epsilon+2\int^\tau_s\sigma_\epsilon\phi_\epsilon dB_\epsilon}\right]d\tau\\
        &+\mathbb{E}_{s,r}\left[e^{2\int^T_s(r_0+\beta_\epsilon\phi_\epsilon-\frac{1}{2}\sigma^2_\epsilon\phi^2_\epsilon])d\epsilon+2\int^T_s\sigma_\epsilon\phi_\epsilon dB_\epsilon}\right]\\
                    =& \int^T_s\lambda(t,\tau)Y^\tau_sd\tau+Y^T_s,      
        \end{aligned}
        ~ \\
        ~ \\

        \displaystyle{a^\tau(s,r)=\mathbb{E}_{s,r}\left[e^{\int^\tau_s(r_0+\beta_\epsilon\phi_\epsilon-\frac{1}{2}\sigma^2_\epsilon\phi^2_\epsilon)d\epsilon+\int^\tau_s\sigma_\epsilon\phi_\epsilon dB_\epsilon}\right]=\widetilde{Y}^\tau_s,} 
    \end{array}
\right. 
\end{equation*} 
where $Y^\tau_s$ and $\widetilde{Y}^\tau_s$ are the conditional expectations of Doléans-Dade exponentials of semimartingales $X_\tau=\int^\tau_s(2r_0+2\beta_\epsilon\phi_\epsilon+\sigma^2_\epsilon\phi^2_\epsilon)d\epsilon+\int^\tau_s2\sigma_\epsilon\phi_\epsilon dB_\epsilon$ and $\widetilde{X}_\tau=\int^\tau_s(r_0+\beta_\epsilon\phi_\epsilon) d\epsilon+\int^\tau_s\sigma_\epsilon\phi_\epsilon dB_\epsilon$, respectively. Consequently, under some suitable assumptions, one has 
\begin{equation}  \label{xFBSDE}
\left\{
    \begin{array}{l}
        \displaystyle{dR_{\epsilon}=m(\epsilon,R_\epsilon) d\epsilon+n(\epsilon,R_\epsilon) dB^R_\epsilon,} \\

        ~ \\
        \displaystyle{dY^\tau_\epsilon=-Y^\tau_\epsilon\Big[2r_0+2\beta(\epsilon,R_\epsilon)\phi(\epsilon,R_\epsilon)+\sigma^2(\epsilon,R_\epsilon)\phi^2(\epsilon,R_\epsilon)\Big]d\epsilon+Z^\tau_\epsilon dB^R_\epsilon,} \\
        
        ~ \\
        \displaystyle{d\widetilde{Y}^\tau_\epsilon=-\widetilde{Y}^\tau_\epsilon\Big[r_0+\beta(\epsilon,R_\epsilon)\phi(\epsilon,R_\epsilon)\Big] d\epsilon+\widetilde{Z}^\tau_\epsilon dB^R_\epsilon,}
    \end{array}
\right. 
\end{equation} 
with $(R_s,Y^\tau_\tau,\widetilde{Y}^\tau_\tau)=(r, 1, 1)$. It is worth noting that the FBSDE system \eqref{xFBSDE} is coupled with each other via the term $\phi(s,R_s)$ which is defined by replacing $a^\tau(s,r)$ by $\widetilde{Y}^\tau_s$, $(na^\tau_r)(s,r)$ by $\widetilde{Z}^\tau_s$, $b(t,s,r)$ by $\int^T_s\lambda(t,\tau)Y^\tau_sd\tau+Y^T_s$, $(nb_r)(t,s,r)$ by $\int^T_s\lambda(t,\tau)Z^\tau_sd\tau+Z^T_s$, $c(t,s,r)$ by $\int^T_s\lambda(t,\tau)\rho(t)\widetilde{Y}^\tau_sd\tau+\rho(t)\widetilde{Y}^T_s$, and $n(s,r)c_r(t,s,r)$ by $\int^T_s\lambda(t,\tau)\rho(t)\widetilde{Z}^\tau_sd\tau+\rho(t)\widetilde{Z}^T_s$ in \eqref{xEquiPolicy}. 

The generator of this BSDE \eqref{xFBSDE} is highly complex, as indicated by \eqref{xEquiPolicy}, and it does not meet the (stochastic) Lipschitz conditions that we examined in earlier sections. Therefore, we currently cannot prove its well-posedness. This will be left for our future research work. However, we can investigate a degenerate form of \eqref{xFBSDE}, specifically the case without randomness. For the special case, the FBSDE system \eqref{xFBSDE} will reduce to a system of ODEs. Furthermore, if $\rho(s)=1$ and $\lambda(t,s)=0$, it coincides with the ODE system of Proposition 4.5 in \cite{Bjoerk2014a}. By noting that $a_r=b_r=0$ and $Z_\cdot=\widetilde{Z}_\cdot=0$ in deterministic investment opportunities, $Y^\tau_s=e^{\int^\tau_s[2r_0+2\beta(\epsilon)\phi(\epsilon)+\sigma^2(\epsilon)\phi^2(\epsilon)]d\epsilon}$, and $\widetilde{Y}^\tau_s=e^{\int^\tau_s[r_0+\beta(\epsilon)\phi(\epsilon)] d\epsilon}$, the ODE system reduced from \eqref{xFBSDE} is equivalent to a deterministic integral equation 
\begin{equation} \label{SM integral eq}
    \begin{split}
        &\displaystyle{\phi(s)=\frac{\beta(s)}{\gamma\sigma^2(s)}\frac{\int^T_s\lambda(s,\tau)\rho(s)e^{\int^\tau_s[r_0+\beta(\epsilon)\phi(\epsilon)]d\epsilon}d\tau+\rho(s)e^{\int^T_s[r_0+\beta(\epsilon)\phi(\epsilon)]d\epsilon}}{\int^T_s\lambda(s,\tau)e^{2\int^\tau_s\left[r_0+\beta(\epsilon)\phi(\epsilon)+\frac{1}{2}\sigma^2(\epsilon)\phi^2(\epsilon)\right]d\epsilon}d\tau+e^{2\int^T_s\left[r_0+\beta(\epsilon)\phi(\epsilon)+\frac{1}{2}\sigma^2(\epsilon)\phi^2(\epsilon)\right]d\epsilon}}} \\
        & ~ \\         
        & \qquad\qquad  
        \displaystyle{+\frac{\beta(s)}{\sigma^2(s)}\frac{\int^T_s\lambda(s,\tau)e^{2\int^\tau_s[r_0+\beta(\epsilon)\phi(\epsilon)]d\epsilon}d\tau+e^{2\int^T_s[r_0+\beta(\epsilon)\phi(\epsilon)]d\epsilon}}{\int^T_s\lambda(s,\tau)e^{2\int^\tau_s\left[r_0+\beta(\epsilon)\phi(\epsilon)+\frac{1}{2}\sigma^2(\epsilon)\phi^2(\epsilon)\right]d\epsilon}d\tau+e^{2\int^T_s\left[r_0+\beta(\epsilon)\phi(\epsilon)+\frac{1}{2}\sigma^2(\epsilon)\phi^2(\epsilon)\right]d\epsilon}}-\frac{\beta(s)}{\sigma^2(s)}}
    \end{split}
\end{equation}
which reduces to the integral equation of Theorem 4.6 in \cite{Bjoerk2014a} if $\lambda(t,s)=0$, $\rho(s)=1$, and both $\beta(s)$ and $\sigma(s)$ are both independent of $s$. By a similar argument in \cite{Bjoerk2014a}, one can prove that \eqref{SM integral eq} admits a unique solution in $C([0,T])$. 

Conversion of numerical scheme similar to Algorithm \ref{alg:0} is observed in our numerical experiments and the solution is stable with respect to the forward process as shown in the following figures (\ref{fig:stateDpt}).

\begin{figure}[!ht]
    \centering
    \subfigure[]{\includegraphics[width=0.4\textwidth]{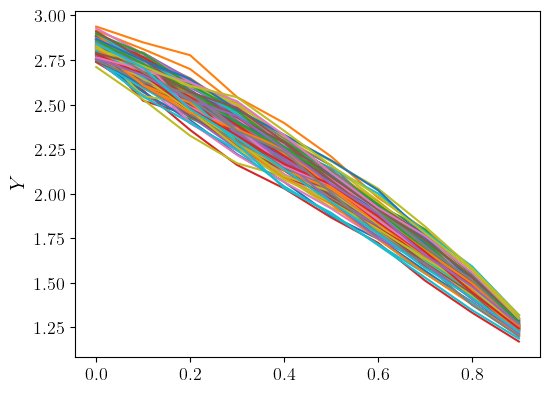}} 
    \subfigure[]{\includegraphics[width=0.4\textwidth]{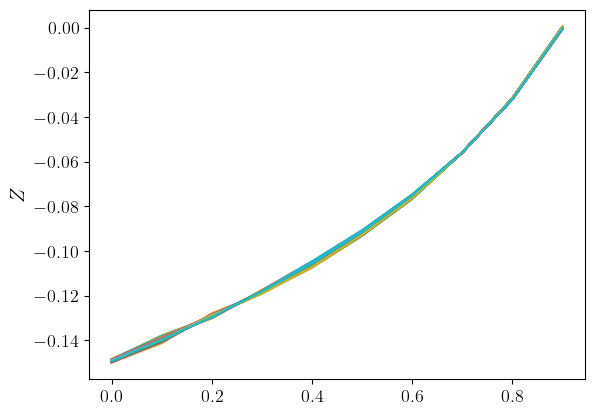}} 
    \subfigure[]{\includegraphics[width=0.4\textwidth]{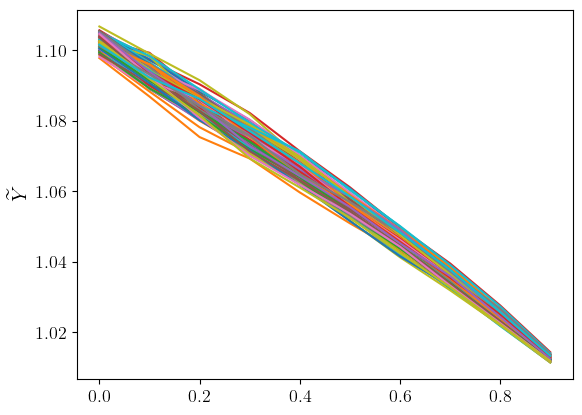}} 
    \subfigure[]{\includegraphics[width=0.4\textwidth]{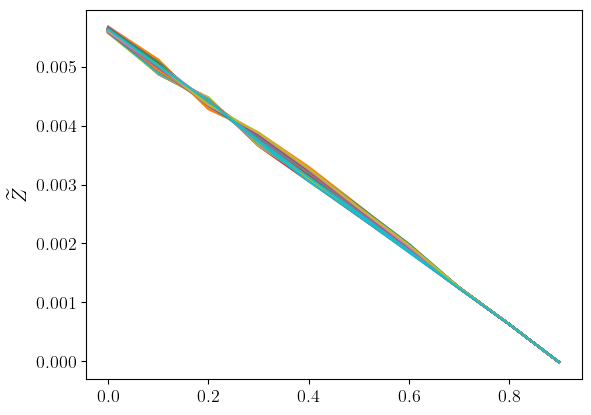}} 
    \caption{Plots of $Y, \widetilde{Y},Z$ and $\widetilde{Z}$ for 100 arbitrarily generated trajectories of stochastic volatility process, $\{R_t^i\}_{i=1}^{100}$}
    \label{fig:stateDpt}
\end{figure}







\bibliographystyle{spmpsci}      

\begin{thebibliography}{10}
\providecommand{\url}[1]{{#1}}
\providecommand{\urlprefix}{URL }
\expandafter\ifx\csname urlstyle\endcsname\relax
  \providecommand{\doi}[1]{DOI~\discretionary{}{}{}#1}\else
  \providecommand{\doi}{DOI~\discretionary{}{}{}\begingroup
  \urlstyle{rm}\Url}\fi

\bibitem{aly2015moment}
Aly, S.M.: Moment explosion and tail asymptotics for the ckls process.
\newblock Preprint, available at http://arxiv. org/abs/1601.03858  (2015)

\bibitem{Basak2010}
Basak, S., Chabakauri, G.: Dynamic mean-variance asset allocation.
\newblock Review of Financial Studies \textbf{23}(8), 2970--3016 (2010).
\newblock \doi{10.1093/rfs/hhq028}

\bibitem{Bender2000BSDESWS}
Bender, C., Kohlmann, M.: Bsdes with stochastic lipschitz condition (2000).
\newblock \urlprefix\url{https://api.semanticscholar.org/CorpusID:117438868}

\bibitem{bensoussan2013mean}
Bensoussan, A., Frehse, J., Yam, P., et~al.: Mean field games and mean field
  type control theory, vol. 101.
\newblock Springer (2013)

\bibitem{Bjoerk2017}
Bj\"{o}rk, T., Khapko, M., Murgoci, A.: On time-inconsistent stochastic control
  in continuous time.
\newblock Finance and Stochastics \textbf{21}(2), 331--360 (2017).
\newblock \doi{10.1007/s00780-017-0327-5}

\bibitem{Bjoerk2014}
Bj\"{o}rk, T., Murgoci, A.: A theory of {Markovian} time-inconsistent
  stochastic control in discrete time.
\newblock Finance and Stochastics \textbf{18}(3), 545--592 (2014).
\newblock \doi{10.1007/s00780-014-0234-y}

\bibitem{Bjoerk2014a}
Bj\"{o}rk, T., Murgoci, A., Zhou, X.Y.: Mean-variance portfolio optimization
  with state-dependent risk aversion.
\newblock Mathematical Finance \textbf{24}(1), 1--24 (2014).
\newblock \doi{10.1111/j.1467-9965.2011.00515.x}

\bibitem{chacko2005dynamic}
Chacko, G., Viceira, L.M.: Dynamic consumption and portfolio choice with
  stochastic volatility in incomplete markets.
\newblock The Review of Financial Studies \textbf{18}(4), 1369--1402 (2005)

\bibitem{Dai2021}
Dai, M., Jin, H., Kou, S., Xu, Y.: A dynamic mean-variance analysis for log
  returns.
\newblock Management Science \textbf{67}(2), 1093--1108 (2021).
\newblock \doi{10.1287/mnsc.2019.3493}.
\newblock \urlprefix\url{https://doi.org/10.1287/mnsc.2019.3493}

\bibitem{gobet_regression-based_2005}
Gobet, E., Lemor, J.P., Warin, X.: A regression-based monte carlo method to
  solve backward stochastic differential equations \textbf{15}(3).
\newblock \doi{10.1214/105051605000000412}.
\newblock \urlprefix\url{http://arxiv.org/abs/math/0508491}

\bibitem{Hamaguchi2021}
Hamaguchi, Y.: Extended backward stochastic {Volterra} integral equations and
  their applications to time-inconsistent stochastic recursive control
  problems.
\newblock Mathematical Control {\&} Related Fields \textbf{11}(2), 197--242
  (2021).
\newblock \doi{10.3934/mcrf.2020043}

\bibitem{Hamaguchi2020}
Hamaguchi, Y.: Small-time solvability of a flow of forward-backward stochastic
  differential equations.
\newblock Applied Mathematics {\&} Optimization \textbf{84}(1), 567--588 (2021)

\bibitem{Han2021}
Han, B., Pun, C.S., Wong, H.Y.: Robust state-dependent mean-variance portfolio
  selection: A closed-loop approach.
\newblock Finance and Stochastics \textbf{25}(3), 529--561 (2021).
\newblock \doi{10.1007/s00780-021-00457-4}

\bibitem{Harris2001}
Harris, C., Laibson, D.: Dynamic choices of hyperbolic consumers.
\newblock Econometrica \textbf{69}(4), 935--957 (2001).
\newblock \doi{https://doi.org/10.1111/1468-0262.00225}.
\newblock
  \urlprefix\url{https://onlinelibrary.wiley.com/doi/abs/10.1111/1468-0262.00225}

\bibitem{He2021}
He, X.D., Jiang, Z.L.: On the equilibrium strategies for time-inconsistent
  problems in continuous time.
\newblock {SIAM} Journal on Control and Optimization \textbf{59}(5), 3860--3886
  (2021).
\newblock \doi{10.1137/20m1382106}

\bibitem{He2022}
He, X.D., Zhou, X.Y.: Who are {I}: Time inconsistency and intrapersonal
  conflict and reconciliation.
\newblock In: G.~Yin, T.~Zariphopoulou (eds.) Stochastic Analysis, Filtering,
  and Stochastic Optimization. Springer International Publishing (2022).
\newblock A Commemorative Volume to Honor Mark H. A. Davis's Contributions

\bibitem{hernandez2023me}
Hern{\'a}ndez, C., Possama{\"\i}, D.: Me, myself and i: a general theory of
  non-markovian time-inconsistent stochastic control for sophisticated agents.
\newblock The Annals of Applied Probability \textbf{33}(2), 1396--1458 (2023)

\bibitem{heston1993closed}
Heston, S.L.: A closed-form solution for options with stochastic volatility
  with applications to bond and currency options.
\newblock The review of financial studies \textbf{6}(2), 327--343 (1993)

\bibitem{Hu2012}
Hu, Y., Jin, H., Zhou, X.Y.: Time-inconsistent stochastic linear--quadratic
  control.
\newblock {SIAM} Journal on Control and Optimization \textbf{50}(3), 1548--1572
  (2012).
\newblock \doi{10.1137/110853960}

\bibitem{Hu2017}
Hu, Y., Jin, H., Zhou, X.Y.: Time-inconsistent stochastic linear-quadratic
  control: Characterization and uniqueness of equilibrium.
\newblock {SIAM} Journal on Control and Optimization \textbf{55}(2), 1261--1279
  (2017).
\newblock \doi{10.1137/15m1019040}

\bibitem{Huang2017}
Huang, J., Li, X., Wang, T.: Characterizations of closed-loop equilibrium
  solutions for dynamic mean–variance optimization problems.
\newblock Systems \& Control Letters \textbf{110}, 15--20 (2017).
\newblock \doi{10.1016/j.sysconle.2017.09.008}

\bibitem{karatzas2014brownian}
Karatzas, I., Shreve, S.: Brownian motion and stochastic calculus, vol. 113.
\newblock springer (2014)

\bibitem{Korn2004}
Korn, R., Kraft, H.: On the stability of continuous-time portfolio problems
  with stochastic opportunity set.
\newblock Mathematical Finance \textbf{14}(3), 403--414 (2004).
\newblock \doi{https://doi.org/10.1111/j.0960-1627.2004.00197.x}.
\newblock
  \urlprefix\url{https://onlinelibrary.wiley.com/doi/abs/10.1111/j.0960-1627.2004.00197.x}

\bibitem{Laibson1997}
Laibson, D.: Golden eggs and hyperbolic discounting.
\newblock The Quarterly Journal of Economics \textbf{112}(2), 443--478 (1997).
\newblock \doi{10.1162/003355397555253}

\bibitem{Lei2023}
Lei, Q., Pun, C.S.: Nonlocal fully nonlinear parabolic differential equations
  arising in time-inconsistent problems.
\newblock Journal of Differential Equations \textbf{358}, 339--385 (2023).
\newblock \doi{10.1016/j.jde.2023.02.025}

\bibitem{lei2023well}
Lei, Q., Pun, C.S.: On the well-posedness of hamilton-jacobi-bellman equations
  of the equilibrium type.
\newblock arXiv preprint arXiv:2307.01986  (2023)

\bibitem{Lei2024}
Lei, Q., Pun, C.S.: Nonlocality, nonlinearity, and time inconsistency in
  stochastic differential games.
\newblock Mathematical Finance \textbf{34}(1), 190--256 (2024).
\newblock \doi{https://doi.org/10.1111/mafi.12420}.
\newblock
  \urlprefix\url{https://onlinelibrary.wiley.com/doi/abs/10.1111/mafi.12420}

\bibitem{leippold2004geometric}
Leippold, M., Trojani, F., Vanini, P.: A geometric approach to multiperiod mean
  variance optimization of assets and liabilities.
\newblock Journal of Economic Dynamics and Control \textbf{28}(6), 1079--1113
  (2004)

\bibitem{Li2000OptDynMV}
Li, D., Ng, W.L.: Optimal dynamic portfolio selection: Multiperiod
  mean-variance formulation.
\newblock Mathematical Finance \textbf{10}(3), 387--406 (2000).
\newblock \doi{https://doi.org/10.1111/1467-9965.00100}.
\newblock
  \urlprefix\url{https://onlinelibrary.wiley.com/doi/abs/10.1111/1467-9965.00100}

\bibitem{li2023bsdes}
Li, X., Lai, Y., Fan, S.: Bsdes with stochastic lipschitz condition: A general
  result.
\newblock Probability, Uncertainty and Quantitative Risk \textbf{8}(2),
  267--280 (2023)

\bibitem{lim2002}
Lim, A.E.B., Zhou, X.Y.: Mean-variance portfolio selection with random
  parameters in a complete market.
\newblock Mathematics of Operations Research \textbf{27}(1), 101--120 (2002).
\newblock \urlprefix\url{http://www.jstor.org/stable/3690665}

\bibitem{Lin2002}
Lin, J.: Adapted solution of a backward stochastic nonlinear {Volterra}
  integral equation.
\newblock Stochastic Analysis and Applications \textbf{20}(1), 165--183 (2002).
\newblock \doi{10.1081/sap-120002426}

\bibitem{Lindensjoe2019}
Lindensj\"{o}, K.: A regular equilibrium solves the extended {HJB} system.
\newblock Operations Research Letters \textbf{47}(5), 427--432 (2019).
\newblock \doi{10.1016/j.orl.2019.07.011}

\bibitem{marin2010consumption}
Mar{\'\i}n-Solano, J., Navas, J.: Consumption and portfolio rules for
  time-inconsistent investors.
\newblock European Journal of Operational Research \textbf{201}(3), 860--872
  (2010)

\bibitem{marin2011non}
Mar{\'\i}n-Solano, J., Shevkoplyas, E.V.: Non-constant discounting and
  differential games with random time horizon.
\newblock Automatica \textbf{47}(12), 2626--2638 (2011)

\bibitem{Markowitz1952}
Markowitz, H.: Portfolio selection.
\newblock The Journal of Finance \textbf{7}(1), 77--91 (1952).
\newblock \doi{10.1111/j.1540-6261.1952.tb01525.x}

\bibitem{Pardoux1992}
\'{E}tienne Pardoux, Peng, S.: Backward stochastic differential equations and
  quasilinear parabolic partial differential equations.
\newblock In: B.L. Rozovskii, R.B. Sowers (eds.) Stochastic Partial
  Differential Equations and Their Applications, vol. 176, 1st edn., pp.
  200--217. Springer-Verlag, Berlin, Heidelberg (1992).
\newblock \doi{10.1007/bfb0007334}

\bibitem{pardoux2014stochastic}
Pardoux, E., R{\u{a}}{\c{s}}canu, A.: Stochastic Differential Equations,
  Backward SDEs, Partial Differential Equations, 1st edn.
\newblock Stochastic Modelling and Applied Probability. Springer Cham (2015).
\newblock \doi{https://doi.org/10.1007/978-3-319-05714-9}

\bibitem{Peleg1973}
Peleg, B., Yaari, M.E.: {On the Existence of a Consistent Course of Action when
  Tastes are Changing12}.
\newblock The Review of Economic Studies \textbf{40}(3), 391--401 (1973).
\newblock \doi{10.2307/2296458}.
\newblock \urlprefix\url{https://doi.org/10.2307/2296458}

\bibitem{Peng1992}
Peng, S.: A nonlinear {Feynman}--{Kac} formula and applications.
\newblock In: Proceedings of Symposium of System Sciences and Control Theory,
  pp. 173--184. World Scientific (1992)

\bibitem{Peng2011}
Peng, S.: Backward stochastic differential equation, nonlinear expectation and
  their applications.
\newblock In: Proceedings of the International Congress of Mathematicians 2010
  ({ICM} 2010). Hindustan Book Agency ({HBA}), India (2011).
\newblock \doi{10.1142/9789814324359\_0019}

\bibitem{Strotz1955}
Strotz, R.H.: Myopia and inconsistency in dynamic utility maximization.
\newblock The Review of Economic Studies \textbf{23}(3), 165--180 (1955).
\newblock \doi{10.2307/2295722}

\bibitem{Wang2020}
Wang, H.: Extended backward stochastic {Volterra} integral equations,
  quasilinear parabolic equations, and {Feynman}--{Kac} formula.
\newblock Stochastics and Dynamics \textbf{21}(01), 2150004 (2020).
\newblock \doi{10.1142/s0219493721500040}

\bibitem{Wang2019}
Wang, T., Yong, J.: Backward stochastic {Volterra} integral
  equations{\textemdash}representation of adapted solutions.
\newblock Stochastic Processes and their Applications \textbf{129}(12),
  4926--4964 (2019).
\newblock \doi{10.1016/j.spa.2018.12.016}

\bibitem{Wei2017}
Wei, Q., Yong, J., Yu, Z.: Time-inconsistent recursive stochastic optimal
  control problems.
\newblock {SIAM} Journal on Control and Optimization \textbf{55}(6), 4156--4201
  (2017).
\newblock \doi{10.1137/16m1079415}

\bibitem{Yan2019}
Yan, W., Yong, J.: Time-inconsistent optimal control problems and related
  issues.
\newblock In: G.~Yin, Q.~Zhang (eds.) Modeling, Stochastic Control,
  Optimization, and Applications, vol. 164, pp. 533--569. Springer
  International Publishing (2019).
\newblock \doi{10.1007/978-3-030-25498-8\_22}

\bibitem{Yong2006}
Yong, J.: Backward stochastic {Volterra} integral equations and some related
  problems.
\newblock Stochastic Processes and their Applications \textbf{116}(5), 779--795
  (2006).
\newblock \doi{10.1016/j.spa.2006.01.005}

\bibitem{Yong2012}
Yong, J.: Time-inconsistent optimal control problems and the equilibrium {HJB}
  equation.
\newblock Mathematical Control {\&} Related Fields \textbf{2}(3), 271--329
  (2012).
\newblock \doi{10.3934/mcrf.2012.2.271}

\bibitem{Yong1999}
Yong, J., Zhou, X.Y.: Stochastic Controls : {Hamiltonian} Systems and {HJB}
  Equations, \emph{Stochastic Modelling and Applied Probability}, vol.~43, 1st
  edn.
\newblock Springer New York, New York, NY (1999).
\newblock \doi{10.1007/978-1-4612-1466-3}

\bibitem{zhou2000continuous}
Zhou, X.Y., Li, D.: Continuous-time mean-variance portfolio selection: A
  stochastic lq framework.
\newblock Applied Mathematics and Optimization \textbf{42}, 19--33 (2000)

\end{thebibliography}

\appendix
\normalsize
\section{Proofs of Statements} \label{app:pf}
\begin{proof}[Proof of Lemma \ref{LemmaVolterra} \& Theorem \ref{EPCom}]
From the nonlocal ODE system \eqref{NonlocalODESys}, it is clear that $a^\tau(s)=1$ and $L_t(t,s)=\int^T_s\lambda_t(t,s)ds$. Consequently, the coupled system \eqref{NonlocalODESys} is solvable if we can solve the following system  
 \begin{equation} \label{SimpleODEs}
\left\{
    \begin{array}{l}
        \displaystyle{A_s(s)-M_t(t,s)|_{t=s}+\lambda(s,s)\rho(s)-\gamma\int^T_s\lambda_t(t,\tau)|_{t=s}b^\tau(s)d\tau = 0,} \\



        ~ \\

        \displaystyle{(M_t)_s(t,s)-\frac{\beta^2(s)}{\sigma^2(s)}\frac{\int^T_s\lambda_t(t,\tau)d\tau}{\int^T_s\lambda(s,\tau)d\tau+1}A(s)+\lambda_t(t,s)\rho(t)+\lambda(t,s)\rho_t(t) = 0,} \\





        ~ \\

        \displaystyle{b^\tau_s(s)+\frac{1}{\gamma}\frac{\beta^2(s)}{\sigma^2(s)}\frac{1}{\int^T_s \lambda(s,\tau)d\tau+1}A(s) = 0,} \\




    \end{array}
\right. 
\end{equation} 
with $A(T)=\rho(T)$, $M_t(t,T)=\rho_t(t)$, and $b(\tau;\tau)=0$. Next, by substituting 
\begin{equation*} 
\left\{
    \begin{array}{l}
        \displaystyle{M_t(s,s)=\rho_t(s)-\int^T_s\frac{\beta^2(\eta)}{\sigma^2(\eta)}\left(\frac{\int^T_\eta\lambda_t(s,\tau)d\tau}{\int^T_\eta\lambda(\eta,\tau)d\tau+1}A(\eta)\right)d\eta+\int^T_s\lambda_t(s,\eta)\rho(s)d\eta+\int^T_s\lambda(s,\eta)\rho_t(s)d\eta}, \\

        ~ \\

        \displaystyle{b^\tau(s)=\frac{1}{\gamma}\int^\tau_s\frac{\beta^2(\eta)}{\sigma^2(\eta)}\frac{1}{\int^T_\eta \lambda(\eta,\tau)d\tau+1}A(\eta)d\eta}
    \end{array}
\right. 
\end{equation*} 
into the equation of $A(s)$ in \eqref{SimpleODEs}, one obtains 
\begin{equation*}
    \begin{split}
        &\displaystyle{A(s)=\rho(T)+\int^T_s\left[\int^T_\delta\frac{\beta^2(\eta)}{\sigma^2(\eta)}\left(\frac{\int^T_\eta\lambda_t(\delta,\tau)d\tau}{\int^T_\eta\lambda(\eta,\tau)d\tau+1}-\int^\eta_\delta\frac{\lambda_t(\delta,\tau)}{\int^T_\eta \lambda(\eta,\tau)d\tau+1}d\tau\right)A(\eta)d\eta\right.} \\
        &\qquad\qquad\qquad\qquad\qquad\qquad 
        \displaystyle{\left.-\rho_t(\delta)-\int^T_\delta\lambda_t(\delta,\eta)\rho(\delta)d\eta-\int^T_\delta\lambda(\delta,\eta)\rho_t(\delta)d\eta+\lambda(\delta,\delta)\rho(\delta)\right]d\delta}
    \end{split}
\end{equation*}
Finally, by exchanging the integration order of $\eta$ and $\delta$, it is straightforward to derive the linear Volterra integral equation \eqref{Volterrainteq} of the second kind. Thanks to the classical theory of Volterra integral equations, there exists a unique resolvent kernel $R$ associated with $K$ of \eqref{Volterrainteq} such that $A(s)=\Theta(s)+\int^T_sR(s,\tau)\Theta(\tau)d\tau$. Moreover, it is easy to verify that $A(s)=\rho(s)\big(\int^T_s\lambda(s,\tau)d\tau+1\big)$ is the unique solution of \eqref{SimpleODEs} and \eqref{Volterrainteq}. 
\end{proof}

\begin{proof}[Proof of Lemma \ref{WellposednessSimpleBSDE}]
According to Figure \ref{fig:nBSDE}, it is obvious that the elements $(Y^\tau_s,Z^\tau_s)_{s\in[0,\tau]}$ in the family of processes $\{(Y^\tau,Z^\tau)\}_{\tau\in[0,T]}$ parameterized by $\tau$ interact with each other. It is required that the mappings $[0,T]\ni\tau\mapsto(Y^\tau_s,Z^\tau_s)_{s\in[0,\tau]}\in\{(Y^\tau_\cdot,Z^\tau_\cdot)\}_{\tau\in[0,T]}$ and $[0,\tau]\ni s\mapsto(Y^\tau_s,Z^\tau_s)$ have a certain degree of regularity in the direction of $\tau$ and $s$, respectively. 

\vspace{0.3cm}
    
\noindent\textbf{(Regularities along the temporal direction of $\bm{s}$)} For any fixed $0\leq \tau\leq T<\infty$, we first prove \eqref{SimpleSDEInt} admits a unique solution $(Y^\tau_s,Z^\tau_s)_{s\in[0,\tau]}\in S^2([0,\tau];\mathbb{R}^d)\times M^2([0,\tau];\mathbb{R}^{d\times k})$. Due to the stochastic Lipschitz condition \eqref{LipschitzCons}, one has
\begin{equation*}
    \begin{aligned} 
& \mathbb{E}\left[\left(\int_0^\tau\left|h^\tau\left(s,\bar{Y}^\tau_s,\int^T_s \phi(s,\tau)\bar{Y}^\tau_sd\tau,\bar{Y}^T_s,\bar{Z}^\tau_s,\int^T_s\varphi(s,\tau)\bar{Z}^\tau_sd\tau,\bar{Z}^T_s\right)\right| \mathrm{d} s\right)^2\right] \\ 
\leq & C\left(\mathbb{E}\left[\left(\int_0^\tau|h^\tau(s,\bm{0})| \mathrm{d} s\right)^2\right]+\mathbb{E}\left[\left(\int_0^\tau u_s\left|\bar{Y}^\tau_s\right| \mathrm{d} s\right)^2\right]+c\mathbb{E}\left[\left(\int_0^\tau u_s\int^T_s\left|\bar{Y}^\tau_s\right|d\tau \mathrm{d} s\right)^2\right]\right.\\
& \qquad\qquad
+\mathbb{E}\left[\left(\int_0^\tau u_s\left|\bar{Y}^T_s\right| \mathrm{d} s\right)^2\right]
+\mathbb{E}\left[\left(\int_0^\tau v_s\left|\bar{Z}^\tau_s\right| \mathrm{d} s\right)^2\right]+c\mathbb{E}\left[\left(\int_0^\tau v_s\int^T_s\left|\bar{Z}^\tau_s\right|d\tau \mathrm{d} s\right)^2\right]\\
& \qquad\qquad
\left.+\mathbb{E}\left[\left(\int_0^\tau v_s\left|\bar{Z}^T_s\right| \mathrm{d} s\right)^2\right]\right)\\
\leq & C\left(\mathbb{E}\left[\left(\int_0^T|h^\tau(s,\bm{0})| \mathrm{d} s\right)^2\right]+\mathbb{E}\left[\left(\int_0^\tau u_s\left|\bar{Y}^\tau_s\right| \mathrm{d} s\right)^2\right]+c\mathbb{E}\left[\left(\int_0^T u_s\int^T_s\left|\bar{Y}^\tau_s\right|d\tau \mathrm{d} s\right)^2\right]\right.\\
& \qquad\qquad
+\mathbb{E}\left[\left(\int_0^T u_s\left|\bar{Y}^T_s\right| \mathrm{d} s\right)^2\right]+\mathbb{E}\left[\left(\int_0^\tau v_s\left|\bar{Z}^\tau_s\right| \mathrm{d} s\right)^2\right]+c\mathbb{E}\left[\left(\int_0^T v_s\int^T_s\left|\bar{Z}^\tau_s\right|d\tau \mathrm{d} s\right)^2\right]\\
& \qquad \qquad \left.+\mathbb{E}\left[\left(\int_0^T v_s\left|\bar{Z}^T_s\right| \mathrm{d} s\right)^2\right]\right) \\
\leq & C \mathbb{E}\left[\left(\int_0^T e^{\frac{\beta}{2} \int_0^s a_r \mathrm{~d} r}|h^\tau(s,\bm{0})| \mathrm{d} s\right)^2\right] +C \mathbb{E}\left[\left(\int_0^\tau \sqrt{a_s} e^{-\frac{\beta}{2} \int_0^s a_r \mathrm{~d} r} \sqrt{a_s}\left|\bar{Y}^\tau_s\right| e^{\frac{\beta}{2} \int_0^s a_r \mathrm{~d} r} \mathrm{~d} s\right)^2\right] \\
& +C \mathbb{E}\left[\left(\int_0^T \sqrt{a_s} e^{-\frac{\beta}{2} \int_0^s a_r \mathrm{~d} r} \int^T_s\sqrt{a_s}\left|\bar{Y}^\tau_s\right| e^{\frac{\beta}{2} \int_0^s a_r \mathrm{~d} r}d\tau \mathrm{~d} s\right)^2\right]\\
& +C \mathbb{E}\left[\left(\int_0^T \sqrt{a_s} e^{-\frac{\beta}{2} \int_0^s a_r \mathrm{~d} r} \sqrt{a_s}\left|\bar{Y}^T_s\right| e^{\frac{\beta}{2} \int_0^s a_r \mathrm{~d} r} \mathrm{~d} s\right)^2\right] \\
& +C \mathbb{E}\left[\left(\int_0^\tau v_s e^{-\frac{\beta}{2} \int_0^s a_r \mathrm{~d} r}\left|\bar{Z}^\tau_s\right| e^{\frac{\beta}{2} \int_0^s a_r \mathrm{~d} r} \mathrm{~d} s\right)^2\right]+C \mathbb{E}\left[\left(\int_0^T v_s e^{-\frac{\beta}{2} \int_0^s a_r \mathrm{~d} r}\int^T_s\left|\bar{Z}^\tau_s\right| e^{\frac{\beta}{2} \int_0^s a_r \mathrm{~d} r} d\tau \mathrm{~d} s\right)^2\right] \\ 
& +C \mathbb{E}\left[\left(\int_0^T v_s e^{-\frac{\beta}{2} \int_0^s a_r \mathrm{~d} r}\left|\bar{Z}^T_s\right| e^{\frac{\beta}{2} \int_0^s a_r \mathrm{~d} r} \mathrm{~d} s\right)^2\right],  
\end{aligned} 
\end{equation*}
where $C$ represents a generic constant which could be different from line to line, and $\max\big\{\sup_{0\leq s\leq \tau\leq T}|\phi(s,\tau)|,\sup_{0\leq s\leq \tau\leq T}|\varphi(s,\tau)|\big\}\leq c$. Thanks to the H\"{o}lder's inequality and Jensen's inequality, one has 
\begin{equation} \label{estimgenerator}
    \begin{aligned} 
& \mathbb{E}\left[\left(\int_0^\tau\left|h^\tau\left(s,\bar{Y}^\tau_s,\int^T_s \phi(s,\tau)\bar{Y}^\tau_sd\tau,\bar{Y}^T_s,\bar{Z}^\tau_s,\int^T_s\varphi(s,\tau)\bar{Z}^\tau_sd\tau,\bar{Z}^T_s\right)\right| \mathrm{d} s\right)^2\right] \\  
\leq & C\mathbb{E}\left[\left(\int_0^T e^{\frac{\beta}{2} \int_0^s a_r \mathrm{~d} r}|h^\tau(s,\bm{0})| \mathrm{d} s\right)^2\right] +C \mathbb{E}\left[\int_0^\tau a_s e^{-\beta \int_0^s a_r \mathrm{~d} r} ds \int^\tau_0 a_s\left|\bar{Y}^\tau_s\right|^2 e^{\beta \int_0^s a_r \mathrm{~d} r} \mathrm{~d} s\right] \\
& +C\mathbb{E}\left[\int_0^T a_s e^{-\beta \int_0^s a_r \mathrm{~d} r} ds \int^T_0 (T-s)\left(\int^T_s a_s\left|\bar{Y}^\tau_s\right|^2 e^{\beta \int_0^s a_r \mathrm{~d} r}d\tau\right)ds\right]\\
&+C \mathbb{E}\left[\int_0^T a_s e^{-\beta \int_0^s a_r \mathrm{~d} r}ds \int^T_0 a_s\left|\bar{Y}^T_s\right|^2 e^{\int_0^s a_r \mathrm{~d} r} \mathrm{~d} s\right] \\
& +C \mathbb{E}\left[\int_0^\tau a_s e^{-\beta \int_0^s a_r \mathrm{~d} r} ds \int^\tau_0 \left|\bar{Z}^\tau_s\right|^2 e^{\beta \int_0^s a_r \mathrm{~d} r} \mathrm{~d} s\right]\\
&+C\mathbb{E}\left[\int_0^T a_s e^{-\beta \int_0^s a_r \mathrm{~d} r} ds \int^T_0 (T-s)\left(\int^T_s \left|\bar{Z}^\tau_s\right|^2 e^{\beta \int_0^s a_r \mathrm{~d} r}d\tau\right)ds\right] \\ 
& +C \mathbb{E}\left[\int_0^T a_s e^{-\beta \int_0^s a_r \mathrm{~d} r}ds \int^T_0 \left|\bar{Z}^T_s\right|^2 e^{\int_0^s a_r \mathrm{~d} r} \mathrm{~d} s\right]. \end{aligned} 
\end{equation}

Next, let us first consider one of terms of \eqref{estimgenerator}
\begin{equation*}
    \begin{split}
        & \mathbb{E}\left[\int_0^T a_s e^{-\beta \int_0^s a_r \mathrm{~d} r} ds \int^T_0 (T-s)\left(\int^T_s a_s\left|\bar{Y}^\tau_s\right|^2 e^{\beta \int_0^s a_r \mathrm{~d} r}d\tau\right)ds\right]\\
         \leq &\frac{1}{\beta}\mathbb{E}\left[\int^T_0 (T-s)\left(\int^T_s a_s\left|\bar{Y}^\tau_s\right|^2 e^{\beta \int_0^s a_r \mathrm{~d} r}d\tau\right)ds\right] \\
         = &\frac{1}{\beta}\mathbb{E} \left[\int^T_0\left(\int^\tau_0(T-s)a_s\left|\bar{Y}^\tau_s\right|^2 e^{\beta \int_0^s a_r \mathrm{~d} r}ds\right)d\tau\right]\leq \frac{T}{\beta}\int^T_0\left(\mathbb{E}\left[\int^\tau_0 a_s\left|\bar{Y}^\tau_s\right|^2 e^{\beta \int_0^s a_r \mathrm{~d} r}ds\right]\right)d\tau \\
         =&\frac{T}{\beta}\int^T_0\lVert \sqrt{a_\cdot}\bar{Y}^\tau_\cdot\rVert^2_\beta d\tau \leq \frac{T^2}{\beta}\sup\limits_{\tau\in[0,T]}\lVert \sqrt{a_\cdot}\bar{Y}^\tau_\cdot\rVert^2_\beta<\infty
    \end{split}
\end{equation*}
Remarkably, $\int^T_0\lVert \sqrt{a_\cdot}\bar{Y}^\tau_\cdot\rVert^2_\beta d\tau$ is integrable since the mapping $\tau\mapsto\lVert \sqrt{a_\cdot}\bar{Y}^\tau_\cdot\rVert^2_\beta$ is continuous and $\sup_{\tau\in[0,T]}\sqrt{a_\cdot}\bar{Y}^\tau_\cdot\rVert^2_\beta<\infty$. A similar argument is suitable for others of \eqref{estimgenerator}. As a result, the process $M^\tau_{s}:=\mathbb{E}[\xi^\tau+\int_{0}^{\tau}\,h^\tau_s\mathrm{d}s|{\mathcal{F}}_{s}]$ for $s\in[0,\tau]$ is a square-integrable martingale. Thanks to the martingale representation theorem, there exists a unique process $(Z^\tau_s)_{s\in[0,\tau]}$ in $M^2([0,\tau];\mathbb{R}^{d\times k})$ such that 
\begin{equation*}
    M^\tau_{s}=M^\tau_{0}+\int_{0}^{s}Z^\tau_{r}dB_{r},\quad s\in[0,\tau].
\end{equation*}
Subsequently, let 
\begin{equation*}
Y^\tau_{s}:=\mathbb{E}\left[\left.\xi^\tau+\int_{s}^{\tau}h^\tau_sds\right|\mathcal{F}_s\right],\quad s\in[0,\tau].
\end{equation*}
It is clear that the constructed $(Y^\tau_s,Z^\tau_s)_{s\in[0,\tau]}\in S^2([0,\tau];\mathbb{R}^d)\times M^2([0,\tau];\mathbb{R}^{d\times k})$ solves \eqref{SimpleSDEInt}. Moreover, due to the linearity of the simple nonlocal BSDE \eqref{NonlocalBSDEInt}, it can be shown that the solution is unique for the solutions satisfying $Z^\tau_s\in M^2([0,\tau];\mathbb{R}^{d\times k})$. 

Next, we will show that the solution $(Y^\tau_s,Z^\tau_s)_{s\in[0,\tau]}$ belongs to the subspace $\big(\overline{L}^{2}(\beta,a_\cdot,[0,\tau];\mathbb{R}^{d})\cap L^{2,c}(\beta,a_\cdot,[0,\tau];\mathbb{R}^{d})\big)\times L^{2}(\beta,a_\cdot,[0,\tau];\mathbb{R}^{d\times k})$. First of all, one has 
\begin{equation*}
    \begin{aligned} & \left|Y^\tau_t\right| e^{\frac{\beta}{2} \int_0^t a_r \mathrm{~d} r}\\
    =&\left|\mathbb{E}\left[\left.\xi^\tau+\int_{t}^\tau h^\tau\left(s,\bar{Y}^\tau_s,\int^T_s \phi(s,\tau)\bar{Y}^\tau_sd\tau,\bar{Y}^T_s,\bar{Z}^\tau_s,\int^T_s\varphi(s,\tau)\bar{Z}^\tau_sd\tau,\bar{Z}^T_s\right) \mathrm{d} s \right| \mathcal{F}_t\right]\right| e^{\frac{\beta}{2} \int_0^t a_r \mathrm{~d} r} 
\\  \leq &\mathbb{E}\left[\left.|\xi^\tau| e^{\frac{\beta}{2} \int_0^\tau a_r \mathrm{~d} r} \right\rvert\, \mathcal{F}_t\right]+\mathbb{E}\left[\left.\int_{t}^\tau|h^\tau(s,\bm{0})| e^{\frac{\beta}{2} \int_0^s a_r \mathrm{~d} r} \mathrm{~d} s \right\rvert\, \mathcal{F}_t\right] \\
& \qquad 
+\mathbb{E}\left[\int_t^\tau u_s\left|\bar{Y}^\tau_s\right| e^{\frac{\beta}{2} \int_0^t a_r \mathrm{~d} r} \mathrm{d} s\right]+c\mathbb{E}\left[\int_t^\tau u_s\int^T_s\left|\bar{Y}^\tau_s\right|e^{\frac{\beta}{2} \int_0^t a_r \mathrm{~d} r} d\tau \mathrm{d} s\right]+\mathbb{E}\left[\int_t^\tau u_s\left|\bar{Y}^T_s\right| e^{\frac{\beta}{2} \int_0^t a_r \mathrm{~d} r} \mathrm{d} s\right] \\
& \qquad 
+\mathbb{E}\left[\int_t^\tau v_s\left|\bar{Z}^\tau_s\right| e^{\frac{\beta}{2} \int_0^t a_r \mathrm{~d} r} \mathrm{d} s\right]+c\mathbb{E}\left[\int_t^\tau v_s\int^T_s\left|\bar{Z}^\tau_s\right| e^{\frac{\beta}{2} \int_0^t a_r \mathrm{~d} r} d\tau \mathrm{d} s\right]+\mathbb{E}\left[\int_t^\tau v_s\left|\bar{Z}^T_s\right| e^{\frac{\beta}{2} \int_0^t a_r \mathrm{~d} r} \mathrm{d} s\right]  
\end{aligned}
\end{equation*}

By the H\"{o}lder's inequality, one obtains 
\begin{equation} \label{estimY}
    \begin{aligned} & \left|Y^\tau_t\right| e^{\frac{\beta}{2} \int_0^t a_r \mathrm{~d} r}=\left|\mathbb{E}\left[\left.\xi^\tau+\int_{t}^\tau h^\tau\left(s,\bar{Y}^\tau_s,\int^T_s \phi(s,\tau)\bar{Y}^\tau_sd\tau,\bar{Y}^T_s,\bar{Z}^\tau_s,\int^T_s\varphi(s,\tau)\bar{Z}^\tau_sd\tau,\bar{Z}^T_s\right) \mathrm{d} s \right| \mathcal{F}_t\right]\right| e^{\frac{\beta}{2} \int_0^t a_r \mathrm{~d} r} \\
& \leq \mathbb{E}\left[\left.|\xi^\tau| e^{\frac{\beta}{2} \int_0^\tau a_r \mathrm{~d} r} \right\rvert\, \mathcal{F}_t\right]+\mathbb{E}\left[\left.\int_{t}^\tau|h^\tau(s,\bm{0})| e^{\frac{\beta}{2} \int_0^s a_r \mathrm{~d} r} \mathrm{~d} s \right\rvert\, \mathcal{F}_t\right] \\
& \qquad\qquad 
+\mathbb{E}\left[\left.\left(\int_t^\tau a_s\left|\bar{Y}^\tau_s\right|^2 e^{\beta \int_0^s a_r \mathrm{~d} r} \mathrm{d} s\right)^\frac{1}{2} \cdot \left(\int_t^\tau a_s e^{-\beta \int_0^s a_r \mathrm{~d} r} \mathrm{d} s \cdot e^{\beta \int_0^t a_r \mathrm{~d} r}\right)^\frac{1}{2}\right\rvert\ \mathcal{F}_t\right] \\
& \qquad\qquad 
+c\mathbb{E}\left[\left.\left(\int_t^\tau (T-s) \int^T_s a_s\left|\bar{Y}^\tau_s\right|^2e^{\beta \int_0^s a_r \mathrm{~d} r} d\tau \mathrm{d} s\right)^\frac{1}{2}\cdot\left(\int_t^\tau a_s e^{-\beta \int_0^s a_r \mathrm{~d} r} \mathrm{d} s \cdot e^{\beta \int_0^t a_r \mathrm{~d} r}\right)^\frac{1}{2}\right\rvert\ \mathcal{F}_t\right] \\
& \qquad\qquad 
+\mathbb{E}\left[\left.\left(\int_t^\tau a_s\left|\bar{Y}^T_s\right|^2 e^{\beta \int_0^s a_r \mathrm{~d} r} \mathrm{d} s\right)^\frac{1}{2} \cdot \left(\int_t^\tau a_s e^{-\beta \int_0^s a_r \mathrm{~d} r} \mathrm{d} s \cdot e^{\beta \int_0^t a_r \mathrm{~d} r}\right)^\frac{1}{2}\right\rvert\ \mathcal{F}_t\right] \\
& \qquad\qquad  
+\mathbb{E}\left[\left.\left(\int_t^\tau \left|\bar{Z}^\tau_s\right|^2 e^{\beta \int_0^s a_r \mathrm{~d} r} \mathrm{d} s\right)^\frac{1}{2} \cdot \left(\int_t^\tau a_s e^{-\beta \int_0^s a_r \mathrm{~d} r} \mathrm{d} s \cdot e^{\beta \int_0^t a_r \mathrm{~d} r}\right)^\frac{1}{2}\right\rvert\ \mathcal{F}_t\right] \\
& \qquad\qquad 
+c\mathbb{E}\left[\left.\left(\int_t^\tau (T-s) \int^T_s \left|\bar{Z}^\tau_s\right|^2e^{\beta \int_0^s a_r \mathrm{~d} r} d\tau \mathrm{d} s\right)^\frac{1}{2}\cdot\left(\int_t^\tau a_s e^{-\beta \int_0^s a_r \mathrm{~d} r} \mathrm{d} s \cdot e^{\beta \int_0^t a_r \mathrm{~d} r}\right)^\frac{1}{2}\right\rvert\ \mathcal{F}_t\right] \\
& \qquad\qquad 
+\mathbb{E}\left[\left.\left(\int_t^\tau \left|\bar{Z}^T_s\right|^2 e^{\beta \int_0^s a_r \mathrm{~d} r} \mathrm{d} s\right)^\frac{1}{2} \cdot \left(\int_t^\tau a_s e^{-\beta \int_0^s a_r \mathrm{~d} r} \mathrm{d} s \cdot e^{\beta \int_0^t a_r \mathrm{~d} r}\right)^\frac{1}{2}\right\rvert\ \mathcal{F}_t\right]  
\end{aligned}
\end{equation}

Moreover, it holds that  
\begin{equation*}
    \begin{split}
        & \mathbb{E}\left[\left(\int_t^\tau (T-s) \int^T_s a_s\left|\bar{Y}^\tau_s\right|^2e^{\beta \int_0^s a_r \mathrm{~d} r} d\tau \mathrm{d} s\right)^\frac{1}{2}\cdot\left(\int_t^\tau a_s e^{-\beta \int_0^s a_r \mathrm{~d} r} \mathrm{d} s \cdot e^{\beta \int_0^t a_r \mathrm{~d} r}\right)^\frac{1}{2}\right]\\
        & \leq \mathbb{E}\left[\left(\frac{1}{\beta}\int_t^\tau (T-s) \int^T_s a_s\left|\bar{Y}^\tau_s\right|^2e^{\beta \int_0^s a_r \mathrm{~d} r} d\tau \mathrm{d} s\right)^\frac{1}{2}\right] \\
        & \leq \mathbb{E}\left[\left(\frac{1}{\beta}\int_0^T \left(\int^\tau_0 (T-s) a_s\left|\bar{Y}^\tau_s\right|^2e^{\beta \int_0^s a_r \mathrm{~d} r} ds\right) d\tau \right)^\frac{1}{2}\right] \\
        & \leq \left(\frac{T}{\beta}\int_0^T \mathbb{E}\left(\int^\tau_0  a_s\left|\bar{Y}^\tau_s\right|^2e^{\beta \int_0^s a_r \mathrm{~d} r} ds\right) d\tau \right)^\frac{1}{2} \\
        & \leq \left(\frac{T}{\beta}\int_0^T \lVert \sqrt{a_\cdot}\bar{Y}^\tau_\cdot\rVert^2_\beta d\tau \right)^\frac{1}{2} \\
        & \leq \left(\frac{T^2}{\beta}\sup\limits_{\tau\in[0,T]}\lVert \sqrt{a_\cdot}\bar{Y}^\tau_\cdot\rVert^2_\beta\right)^\frac{1}{2}
    \end{split}
\end{equation*}

From a similar argument for other terms of \eqref{estimY} along with the assumptions of terminal condition $\{\xi^\tau\}_{\tau\in[0,T]}$, generator $\{h_s\}_{s\in[0,T]}$, and $\{(\bar{Y}^\tau_\cdot,\bar{Z}^\tau_\cdot)\}_{\tau\in[0,T]}$, Doob’s martingale inequality, and Jensen’s inequality, it follows that 
\begin{equation} \label{ContinuityYtau}
    \|Y^\tau_\cdot\|_{\beta,c}^{2}=\mathbb{E}\left[\sup_{0\leq s\leq \tau}\left(e^{\beta\int_{0}^{s}a_{r}dr}|Y^\tau_{s}|^{2}\right)\right]<\infty.
\end{equation}

Next, we need to introduce the following stopping time: 
\begin{equation*}
    \tau_{n}=\operatorname*{inf}{\biggl\{}t\in[0,\tau]:\int_{0}^{t}e^{\beta\int_{0}^{s}a_rdr}|Z^\tau_{s}|^{2}ds \geq n{\biggr\}}\wedge\tau.
\end{equation*}

It follows Proposition 2.6 in \cite{li2023bsdes} that 
\begin{equation*}
    \begin{array}{l}{{\mathbb{E}\left[\int_{0}^{\tau_{n}}e^{\beta\int_{0}^{s}a_ r dr}|Z^\tau_{s}|^{2}\mathrm{d}s\right]+\beta\mathbb{E}\left[\int_{0}^{\tau_{n}}e^{\beta\int_{0}^{s}a_r\mathrm{d}r}a_{s}|Y^\tau_{s}|^{2}\mathrm{d}s\right]}}{{\mathrm{~}\leq\mathbb{E}\left[e^{\beta\int_{0}^{\tau_{n}}a_{r}\mathrm{d}r}|Y^\tau_{\tau_{n}}|^{2}\right]+2\mathbb{E}\left[\int_{0}^{\tau_{n}}e^{\beta\int_{0}^{s}a_r\mathrm{d}r}|h_s||Y^\tau_{s}|\mathrm{d}s\right].}}\end{array}
\end{equation*}
For $\varepsilon>0$, $2a b\leq{\textstyle{\frac{1}{\varepsilon}}}a^{2}+\varepsilon b^{2}$. Hence, for each $s\in[0,\tau]$, we have 
\begin{equation} \label{2ab}
\left\{
    \begin{array}{l}
        2u_{s}|\bar{Y}^\tau_{s}||Y^\tau_{s}|\leq{\frac{\beta}{2}}u_{s}|Y^\tau_{s}|^{2}+{\frac{2}{\beta}}u_{s}|\bar{Y}^\tau_{s}|^{2}\leq{\frac{\beta}{2}}u_{s}|Y^\tau_s|^{2}+{\frac{2}{\beta}}a_{s}|\bar{Y}^\tau_{s}|^{2}, \\

        ~ \\ 

        2v_{s}|\bar{Z}^\tau_{s}||Y^\tau_{s}|\leq\frac{\beta}{2}v_{s}^{2}|Y^\tau_{s}|^{2}+\frac{2}{\beta}|\bar{Z}^\tau_{s}|^{2}, \\

        ~ \\ 

        2cu_{s}|Y^\tau_{s}|\int^T_s\left|\bar{Y}^\tau_s\right|d\tau\leq{\frac{\beta}{2}}u_{s}|Y^\tau_{s}|^{2}+{\frac{2c^2}{\beta}}u_{s}\left(\int^T_s\left|\bar{Y}^\tau_s\right|d\tau\right)^{2}\leq{\frac{\beta}{2}}u_{s}|Y^\tau_{s}|^{2}+{\frac{2c^2}{\beta}}a_{s}\left(\int^T_s\left|\bar{Y}^\tau_s\right|d\tau\right)^{2},\\

        ~ \\ 

        2cv_{s}|Y^\tau_{s}|\int^T_s|\bar{Z}^\tau_{s}|d\tau\leq\frac{\beta}{2}v_{s}^{2}|Y^\tau_{s}|^{2}+\frac{2c^2}{\beta}\left(\int^T_s|\bar{Z}^\tau_{s}|d\tau\right)^2.
        
    \end{array}
\right. 
\end{equation} 
Moreover, the estimations of $2u_{s}|\bar{Y}^T_{s}||Y^\tau_{s}|$ and $2v_{s}|\bar{Z}^T_{s}||Y^\tau_{s}|$ are similar to ones of $2u_{s}|\bar{Y}^\tau_{s}||Y^\tau_{s}|$ and $2v_{s}|\bar{Z}^\tau_{s}||Y^\tau_{s}|$. Consequently, one has 
\begin{equation} \label{estimYZ}
    \begin{aligned} \mathbb{E} & {\left[\int_0^{\tau_n} e^{\beta \int_0^s a_r \mathrm{~d} r}\left|Z^\tau_s\right|^2 \mathrm{~d} s\right]+\frac{\beta}{2} \mathbb{E}\left[\int_0^{\tau_n} e^{\beta \int_0^s a_r \mathrm{~d} r} a_s\left|Y^\tau_s\right|^2 \mathrm{~d} s\right] } \\ \leq & \mathbb{E}\left[e^{\beta \int_0^{\tau_n} a_r \mathrm{~d} r}\left|Y^\tau_{\tau_n}\right|^2\right]+2 \mathbb{E}\left[\int_0^{\tau_n} e^{\beta \int_0^s a_r \mathrm{~d} r}|h^\tau(s,\bm{0})|\left|Y^\tau_s\right| \mathrm{d} s\right] \\ 
& +\frac{2}{\beta} \mathbb{E}\left[\int_0^{\tau_n} e^{\beta \int_0^s a_r \mathrm{~d} r}\left(a_s\left|\bar{Y}^\tau_s\right|^2+ca_s\left(\int^T_s\left|\bar{Y}^\tau_s\right|d\tau\right)^2+a_s\left|\bar{Y}^T_s\right|^2+\left|\bar{Z}^\tau_s\right|^2+c\left(\int^T_s\left|\bar{Z}^\tau_s\right|d\tau\right)^2+\left|\bar{Z}^T_s\right|^2\right) \mathrm{d} s\right] 
\\ \leq & \mathbb{E}\left[e^{\beta \int_0^{\tau_n} a_r \mathrm{~d} r}\left|Y^\tau_{\tau_n}\right|^2\right]+2 \mathbb{E}\left[\sup _{0 \leq t \leq \tau}\left(\left|Y^\tau_t\right| e^{\frac{\beta}{2} \int_0^t a_r \mathrm{~d} r}\right) \int_0^{\tau_n} e^{\frac{\beta}{2} \int_0^s a_r \mathrm{~d} r}|h^\tau(s,\bm{0})| \mathrm{d} s\right] \\ 
& +\frac{2}{\beta} \mathbb{E}\left[\int_0^{\tau_n} e^{\beta \int_0^s a_r \mathrm{~d} r}\left(a_s\left|\bar{Y}^\tau_s\right|^2+ca_s\left(\int^T_s\left|\bar{Y}^\tau_s\right|d\tau\right)^2+a_s\left|\bar{Y}^T_s\right|^2+\left|\bar{Z}^\tau_s\right|^2+c\left(\int^T_s\left|\bar{Z}^\tau_s\right|d\tau\right)^2+\left|\bar{Z}^T_s\right|^2\right) \mathrm{d} s\right] \\
\leq & 2\|Y^\tau_\cdot\|_{\beta,c}^{2}+K(c,T,\beta)\left(\sup\limits_{\tau\in[0,T]}\lVert \sqrt{a_\cdot}\bar{Y}^\tau_\cdot\rVert^2_\beta+\sup\limits_{\tau\in[0,T]}\lVert \bar{Z}^\tau_\cdot\rVert^2_\beta\right)+\mathbb{E}\left[\left(\int_0^T e^{\frac{\beta}{2} \int_0^s a_r \mathrm{~d} r}|h^\tau(s,\bm{0})| \mathrm{d} s\right)^2\right] <+\infty.
\end{aligned}
\end{equation}
By taking the limit with respect to $n$ (i.e. $ n\to\infty$) in the above inequality, the use of Levi’s lemma together
with \eqref{ContinuityYtau}, we have 
\begin{equation} \label{BoundYZtau}
    \lVert \sqrt{a_\cdot}Y^\tau_\cdot\rVert^2_\beta+\lVert Z^\tau_\cdot\rVert^2_\beta<\infty. 
\end{equation}

\vspace{0.3cm}

\noindent\textbf{(Regularities along the parameter direction of $\bm{\tau}$)} Until now, it has been proven that the solution $(Y^\tau_s,Z^\tau_s)_{s\in[0,\tau]}$ belongs to the subspace $\big(\overline{L}^{2}(\beta,a_\cdot,[0,\tau];\mathbb{R}^{d})\cap L^{2,c}(\beta,a_\cdot,[0,\tau];\mathbb{R}^{d})\big)\times L^{2}(\beta,a_\cdot,[0,\tau];\mathbb{R}^{d\times k})$ for any fixed $\tau\in[0,T]$. After studying the regularities of $\{(Y^\tau_\cdot,Z^\tau_\cdot)\}_{\tau\in[0,T]}$ along the temporal direction $s\in[0,\tau]$, we begin to examine its properties along the parameter direction $\tau\in[0,T]$ and prove that it belongs to $\mathbb{M}^c_T(\beta,a_\cdot)$. In addition to upper bounds of $\lVert Y^\tau_\cdot \rVert_{\beta,c}$, $\lVert \sqrt{a_\cdot}Y^\tau_\cdot\rVert_\beta$, and $\lVert Z^\tau_\cdot\rVert_\beta$ with respect to $\tau\in[0,T]$, we also need to show the continuity of the mapping $\tau\mapsto \lVert Y^\tau_\cdot \rVert_{\beta,c}$, $\tau\mapsto\lVert \sqrt{a_\cdot}Y^\tau_\cdot\rVert_\beta$, and $\tau\mapsto\lVert Z^\tau_\cdot\rVert_\beta$. By taking the supremum on left side of \eqref{estimY} and \eqref{estimYZ} over $\tau\in[0,T]$, it is obvious that 
$$\|\{(Y^\tau_\cdot,Z^\tau_\cdot)\}\|_{\beta,a_\cdot,c}^{2}:=\sup\limits_{\tau\in[0,T]}\left\{\|Y^\tau_\cdot\|_{\beta,a_\cdot,c,\tau}^{2}+\|{\sqrt{a}}.Y^\tau_\cdot\|_{\beta,a_\cdot,\tau}^{2}+\|Z^\tau_\cdot\|_{\beta,a_\cdot,\tau}^{2}\right\}<\infty$$

Next, let us first show the continuity of $\tau\mapsto\lVert Z^\tau_\cdot\rVert_\beta$. Suppose that $0\leq \tau\leq \tau^\prime\leq T$, 
\begin{equation} \label{ContinuityZ}
    \begin{split}
        & \left|\|Z^\tau_\cdot\|_{\beta,a_\cdot,\tau}-\|Z^{\tau^\prime}_\cdot\|_{\beta,a_\cdot,\tau^\prime}\right|=\left|\mathbb{E}\left[\int_{0}^{\tau}e^{\beta\int_{0}^{s}a_{r}dr}|Z^\tau_{s}|^{2}\mathrm{d}s\right]-\mathbb{E}\left[\int_{0}^{\tau^\prime}e^{\beta\int_{0}^{s}a_{r}dr}|Z^{\tau^\prime}_{s}|^{2}\mathrm{d}s\right]\right| \\
        & \leq \left|\mathbb{E}\left[\int_{0}^{\tau}e^{\beta\int_{0}^{s}a_{r}dr}|Z^\tau_{s}|^{2}\mathrm{d}s\right]-\mathbb{E}\left[\int_{0}^{\tau}e^{\beta\int_{0}^{s}a_{r}dr}|Z^{\tau^\prime}_{s}|^{2}\mathrm{d}s\right]\right|+\\
        & \qquad \left|\mathbb{E}\left[\int_{0}^{\tau}e^{\beta\int_{0}^{s}a_{r}dr}|Z^{\tau^\prime}_{s}|^{2}\mathrm{d}s\right]-\mathbb{E}\left[\int_{0}^{\tau^\prime}e^{\beta\int_{0}^{s}a_{r}dr}|Z^{\tau^\prime}_{s}|^{2}\mathrm{d}s\right]\right|
    \end{split}
\end{equation}

Since $\sup_{\tau\in[0,T]}\lVert Z^\tau_\cdot\rVert^2_\beta<\infty$ and $\int_{0}^{\tau}e^{\beta\int_{0}^{s}a_{r}dr}|Z^{\tau^\prime}_{s}|^{2}\mathrm{d}s$ is continuous in $\tau$, the Lebesgue's dominated convergence theorem promises that the second difference of \eqref{ContinuityZ} tends to zero as $\tau\to\tau^\prime$. Next, we investigate the first term of \eqref{ContinuityZ}. It is beneficial to consider two BSDEs with different terminal conditions $Y^{\tau^\prime,\xi^\prime}_\tau$ and $\xi^\tau$, 

\begin{equation} 
\left\{
    \begin{array}{l}
     \displaystyle{Y^{\tau^\prime}_t=Y^{\tau^\prime,\xi^\prime}_\tau+\int^\tau_t h^{\tau^\prime}\bigg(s,\bar{Y}^\tau_s,\int^T_s \phi(s,\sigma,\bar{Y}^\sigma_s)d\sigma,\bar{Y}^T_s,\bar{Z}^\tau_s,\int^T_s\varphi(s,\sigma,\bar{Z}^\sigma_s)d\sigma,\bar{Z}^T_s\bigg)ds-\int^\tau_s Z^{\tau^\prime}_sdB_s}, \\ 

     
     \displaystyle{Y^\tau_t=\xi^\tau+\int^\tau_t h^\tau\bigg(s,\bar{Y}^\tau_s,\int^T_s \phi(s,\sigma,\bar{Y}^\sigma_s)d\sigma,\bar{Y}^T_s,\bar{Z}^\tau_s,\int^T_s\varphi(s,\sigma,\bar{Z}^\sigma_s)d\sigma,\bar{Z}^T_s\bigg)ds-\int^\tau_s Z^\tau_sdB_s}, 
    \end{array}
\right. 
\end{equation} 
where $Y^{\tau^\prime,\xi^\prime}_\tau=\xi^{\tau^\prime}+\int^{\tau^\prime}_\tau h^{\tau^\prime}\bigg(s,\bar{Y}^\tau_s,\int^T_s \phi(s,\sigma,\bar{Y}^\sigma_s)d\sigma,\bar{Y}^T_s,\bar{Z}^\tau_s,\int^T_s\varphi(s,\sigma,\bar{Z}^\sigma_s)d\sigma,\bar{Z}^T_s\bigg)ds-\int^{\tau^\prime}_\tau Z^{\tau^\prime}_sdB_s$. Thanks to the continuous dependence of the solutions on the terminal value (Refer to Theorem 3.1 of \cite{li2023bsdes}), one has 
\begin{equation*}
    \begin{split}
        & \left|\mathbb{E}\left[\int_{0}^{\tau}e^{\beta\int_{0}^{s}a_{r}dr}|Z^\tau_{s}|^{2}\mathrm{d}s\right]-\mathbb{E}\left[\int_{0}^{\tau}e^{\beta\int_{0}^{s}a_{r}dr}|Z^{\tau^\prime}_{s}|^{2}\mathrm{d}s\right]\right|^2\leq \|Z^\tau_\cdot-Z^{\tau^\prime}_\cdot\|_{\beta,a_\cdot,\tau}^{2}\\
        & \leq K \|\xi^\tau-Y^{\tau^\prime,\xi^\prime}_\tau\|_{\beta,a_\cdot,\tau}^{2} + K\rho(|\tau-\tau^\prime|)\|\{(\bar{Y}^\tau_\cdot,\bar{Z}^\tau_\cdot)\}\|_{\beta,a_\cdot}^{2} \\
        & \leq K\left(\|\xi^\tau-\xi^{\tau^\prime}\|_{\beta,a_\cdot,\tau}^{2}+\|\xi^{\tau^\prime}-Y^{\tau^\prime,\xi^\prime}_\tau\|_{\beta,a_\cdot,\tau}^{2}\right)+ K\rho(|\tau-\tau^\prime|)\|\{(\bar{Y}^\tau_\cdot,\bar{Z}^\tau_\cdot)\}\|_{\beta,a_\cdot}^{2}.
    \end{split}
\end{equation*}

Considering that both $\{\xi^\tau\}_T$ and $(Y^{\tau^\prime}_s)_{s\in[0,\tau^\prime]}$ are continuous processes (in $\tau$ and in $s$, respectively) with $\|\xi^\tau\|_{c,\beta,a_\cdot,T}^{2}:=\mathbb{E}\left[\operatorname*{sup}_{\tau\in[0,T]}e^{\beta\int_{0}^{\tau}a_{r}dr}|\xi^\tau|^{2}\right]<\infty$ and $\|Y^{\tau^\prime}_\cdot\|_{\beta,a_\cdot,c,\tau^\prime}^{2}:=\mathbb{E}\left[\operatorname*{sup}_{s\in[0,\tau^\prime]}\left(e^{\beta\int_{0}^{s}a_{r}dr}|Y^{\tau^\prime}_{s}|^{2}\right)\right]<\infty$, the dominated convergence theorem ensures that the first difference of \eqref{ContinuityZ} also tends to zero as $\tau\to\tau^\prime$. Similarly, one can also show that the mapping $\tau\mapsto \lVert Y^\tau_\cdot \rVert_{\beta,c}$ and $\tau\mapsto\lVert \sqrt{a_\cdot}Y^\tau_\cdot\rVert_\beta$ are both continuous in $\tau\in[0,T]$. 

In conclusion, we prove that the output $\{(Y^\tau_\cdot,Z^\tau_\cdot)\}_{\tau\in[0,T]}$ of operator \eqref{SimpleSDEInt} belongs to $\mathbb{M}^c_T(\beta,a_\cdot)$ if the input $\{(\bar{Y}^\tau_\cdot,\bar{Z}^\tau_\cdot)\}_{\tau\in[0,T]}\in\mathbb{M}_T(\beta,a_\cdot)$. It completes the proof.      
\end{proof}

\begin{proof}[Proof of Theorem \ref{WellposednessBSDE}]
By taking advantage of Lemma \ref{WellposednessSimpleBSDE}, we can construct a mapping $\Lambda:\mathbb{M}_T(\beta,a_\cdot)\to\mathbb{M}_T(\beta,a_\cdot)$. Specially, $\Lambda\big(\{(\bar{Y}^\tau_\cdot,\bar{Z}^\tau_\cdot)\}_{\tau\in[0,T]}\big)=\{(Y^\tau_\cdot,Z^\tau_\cdot)\}_{\tau\in[0,T]}$ is defined by \eqref{SimpleSDEInt}, and seek the solution of \eqref{NonlocalBSDEDiff} as a fixed point of the operator $\Lambda$. Next, we suppose that $\{(\bar{Y}^{1,\tau}_\cdot,\bar{Z}^{1,\tau}_\cdot)\}_{\tau\in[0,T]}$ and $\{(\bar{Y}^{2,\tau}_\cdot,\bar{Z}^{2,\tau}_\cdot)\}_{\tau\in[0,T]}\in\mathbb{M}_T(\beta,a_\cdot)$. Furthermore, 
\begin{equation*}
\begin{aligned}
    \Lambda\big(\{(\bar{Y}^{1,\tau}_\cdot,\bar{Z}^{1,\tau}_\cdot)\}_{\tau\in[0,T]}\big)=\{(Y^{1,\tau}_\cdot,Z^{1,\tau}_\cdot)\}_{\tau\in[0,T]}, \\ \Lambda\big(\{(\bar{Y}^{2,\tau}_\cdot,\bar{Z}^{2,\tau}_\cdot)\}_{\tau\in[0,T]}\big)=\{(Y^{2,\tau}_\cdot,Z^{2,\tau}_\cdot)\}_{\tau\in[0,T]}.
\end{aligned} 
\end{equation*}

For convenience, we introduce some following notations:
\begin{equation*}
    \begin{array}{c}{{\Delta Y^\tau_{s}:=Y_{s}^{1,\tau}-Y_{s}^{2,\tau},\quad \Delta Z^\tau_{s}:=Z_{s}^{1,\tau}-Z_{s}^{2,\tau},
    \Delta \bar{Y}^\tau_{s}:=\bar{Y}_{s}^{1,\tau}-\bar{Y}_{s}^{2,\tau},\quad \Delta \bar{Z}^\tau_{s}:=\bar{Z}_{s}^{1,\tau}-\bar{Z}_{s}^{2,\tau}.}}\\ \Delta{h}^\tau_{s}:=h^\tau\left(s,\bar{Y}^{1,\tau}_s,\int^T_s \phi(s,\tau)\bar{Y}^{1,\tau}_sd\tau,\bar{Y}^{1,T}_s,\bar{Z}^{1,\tau}_s,\int^T_s\varphi(s,\tau)\bar{Z}^{1,\tau}_sd\tau,\bar{Z}^{1,T}_s\right) \\
    \qquad\qquad\qquad\qquad\qquad   
    -h^\tau\left(s,\bar{Y}^{2,\tau}_s,\int^T_s \phi(s,\tau)\bar{Y}^{2,\tau}_sd\tau,\bar{Y}^{2,T}_s,\bar{Z}^{2,\tau}_s,\int^T_s\varphi(s,\tau)\bar{Z}^{2,\tau}_sd\tau,\bar{Z}^{2,T}_s\right), \qquad 0\leq s\leq \tau\leq T. \end{array}
\end{equation*}

It is clear that $\{(\Delta Y^\tau_\cdot,\Delta Z^\tau_\cdot)\}_{\tau\in[0,T]}$ is the solution of 
\begin{equation*}
    \Delta Y^\tau_t=\int^\tau_t \Delta h^\tau_s ds-\int^\tau_t \Delta Z^\tau_sdB_s, \qquad 0\leq t\leq \tau\leq T. 
\end{equation*}

Similar to the argument of \eqref{2ab}, one has   
    \begin{equation*}
        \begin{aligned} 
        & \mathbb{E}\left[\int_0^\tau e^{\beta \int_0^s a_r \mathrm{~d} r}\left|\Delta Z^\tau_s\right|^2 \mathrm{~d} s\right]+\beta \mathbb{E}\left[\int_0^\tau e^{\beta \int_0^s a_r \mathrm{~d} r} a_s\left|\Delta Y^\tau_s\right|^2 \mathrm{~d} s\right]\\
        \leq & 2 \mathbb{E}\left[\int_0^\tau e^{\beta \int_0^s a_r \mathrm{~d} r}\left|\Delta h^\tau_s\right|\left|\Delta Y_s\right| \mathrm{d} s\right] \\ 
\leq & \frac{\beta}{2} \mathbb{E}\left[\int_0^\tau e^{\beta \int_0^s a_r \mathrm{~d} r} a_s\left|\Delta Y_s\right|^2 \mathrm{~d} s\right] \\
& \qquad 
+\frac{2}{\beta} \mathbb{E}\Biggl[\int_0^\tau e^{\beta \int_0^s a_r \mathrm{~d} r}\Biggl(a_s\left|\Delta\bar{Y}^\tau_s\right|^2+k^2a_s\left(\int^T_s\left|\Delta\bar{Y}^\tau_s\right|d\tau\right)^2+a_s\left|\Delta\bar{Y}^T_s\right|^2+\left|\Delta\bar{Z}^\tau_s\right|^2 \\
& \qquad +k^2\left(\int^T_s\left|\Delta\bar{Z}^\tau_s\right|d\tau\right)^2+\left|\Delta\bar{Z}^T_s\right|^2\Biggl) \mathrm{d} s\Biggl] \\
\leq & \frac{\beta}{2} \mathbb{E}\left[\int_0^\tau e^{\beta \int_0^s a_r \mathrm{~d} r} a_s\left|\Delta Y_s\right|^2 \mathrm{~d} s\right] +\frac{12\max\{c^2T^2,1\}}{\beta}\|\{(\Delta \bar{Y}^\tau_\cdot,\Delta \bar{Z}^\tau_\cdot)\}\|_{\beta,a_\cdot}^{2} 
\end{aligned}
\end{equation*}

Consequently, one has 
\begin{equation} \label{LamdaContraction}
        \begin{aligned} 
        &\mathbb{E}\left[\int_0^\tau e^{\beta \int_0^s a_r \mathrm{~d} r}\left|\Delta Z^\tau_s\right|^2 \mathrm{~d} s\right]+\mathbb{E}\left[\int_0^\tau e^{\beta \int_0^s a_r \mathrm{~d} r} a_s\left|\Delta Y^\tau_s\right|^2 \mathrm{~d} s\right]\\
        \leq &\left(\frac{12\max\{c^2T^2,1\}}{\beta}+\frac{24\max\{c^2T^2,1\}}{\beta^2}\right)\|\{(\Delta \bar{Y}^\tau_\cdot,\Delta \bar{Z}^\tau_\cdot)\}\|_{\beta,a_\cdot}^{2} 
\end{aligned}
\end{equation}
Furthermore, by taking the supremum over $\tau\in[0,T]$ on both sides of \eqref{LamdaContraction}, it is clear that the mapping $\Lambda$ is a contraction if $\frac{12\max\{c^2T^2,1\}}{\beta}+\frac{24\max\{c^2T^2,1\}}{\beta^2}<1$. It completes the proof. 
\end{proof}

\end{document}